\definecolor{MyDarkBlue}{rgb}{0,0.08,0.45}
\numberwithin{equation}{section}
\newtheorem{ass}{{\bf \sc Assumption}}
\newtheorem{lemma}{{\bf\sc Lemma}}
\newtheorem{sublemma}{{\bf\sc Sublemma}}
\newtheorem{example}{{\bf\sc Example}}
\newtheorem{rem}{{\bf \sc Remark}}
\newtheorem{obs}{{\bf\sc Observation}}
\newtheorem{lem}{{\bf \sc Lemma}}[section]
\newtheorem{prop}{{\bf\sc Proposition}}[section]
\DeclareMathOperator*{\plim}{plim}
\DeclareMathOperator*{\argmin}{argmin}
\DeclareMathOperator*{\argmax}{argmax}
\providecommand{\Prob}{\mathrm{P}}
\renewcommand{\Pr}{\Prob}
\renewenvironment{proof}[1][Proof]{\noindent\text{#1.} }{\ \rule{0.5em}{0.5em}}
\newcommand*{\centerfloat}{%
  \parindent \z@
  \leftskip \z@ \@plus 1fil \@minus \textwidth
  \rightskip\leftskip
  \parfillskip \z@skip}
\begin{document}

\vspace*{-1cm}

\title[Treatment Variant Aggregation to Select Policies]{Selecting the Most Effective Nudge: Evidence from a Large-Scale Experiment on Immunization}

\author{Abhijit Banerjee$^{\dagger,*,\S}$ }
\author{Arun G. Chandrasekhar$^{\ddagger,*,\S}$}
\author{Suresh Dalpath$^{\#}$}
\author{Esther Duflo$^{\dagger,*,\S}$ }
\author{John Floretta$^{*}$ }
\author{Matthew O. Jackson$^{\ddagger,\star}$ }
\author{Harini Kannan$^{*}$ }
\author{Francine Loza$^{*}$ }
\author{Anirudh Sankar$^{\ddagger}$ }
\author{Anna Schrimpf$^{*}$ }
\author{Maheshwor Shrestha$^{\S}$ }
\date{\today}

\thanks{We are particularly grateful to the the Haryana Department of Health and Family Welfare for taking the lead on this intervention and allowing the
evaluation to take place. Among many other, we acknowledge the tireless support of Rajeev Arora, Amneet P. Kumar, Sonia Trikha, V.K. Bansal, Sube
Singh, and Harish Bisht. We are also grateful to Isaiah Andrews, Adel Javanmard, Adam McCloskey, and Karl Rohe  for  helpful discussions. We thank Emily Breza, Denis Chetvetrikov,  Paul Goldsmith-Pinkham, Ben Golub, Nargiz Kalantarova,  Shane Lubold, Tyler McCormick, Douglas Miller,  Francesca Molinari,  Suresh Naidu, Eric Verhoogen,  and participants at various seminars for suggestions. Financial support from USAID DIV, 3iE, J-PAL GPI, and MIT-Givewell, and NSF grant SES-2018554 is gratefully acknowledged.  Chandrasekhar is grateful to the Alfred P. Sloan foundation for support. We thank  Rachna Nag Chowdhuri, Diksha Radhakrishnan, Mohar Dey, Maaike Bijker, Niki Shrestha, Tanmayta Bansal, Sanjana Malhotra, Aicha Ben Dhia, Vidhi Sharma, Mridul Joshi, Rajdev Brar, Paul-Armand Veillon, and Chitra Balasubramanian, as well as Shreya Chaturvedi,  Vasu Chaudhary, Shobitha Cherian, Louis-Mael Jean, Anoop Singh Rawat, Elsa Trezeguet, and Meghna Yadav  for excellent research assistance.
}
\thanks{$^{\dagger}$Department of Economics, MIT}
\thanks{$^{\ddagger}$Department of Economics, Stanford University}
\thanks{$^{\#}$Public Health Planning, Policy, and M\&E, Health Department, Govt. of Haryana}
\thanks{$^{*}$J-PAL}
\thanks{$^{\S}$NBER}
\thanks{$^{\star}$Santa Fe Institute}

\begin{abstract}
Policymakers often choose a policy bundle that is a combination of different interventions in different dosages. We develop a new technique---\emph{treatment variant aggregation} (TVA)---to select a policy from a large factorial design.  TVA pools together policy variants that are not meaningfully different and prunes those deemed ineffective. This allows us to restrict attention to aggregated policy variants, consistently estimate their effects on the outcome, and estimate the best policy effect adjusting for the winner's curse.  We apply TVA to a large randomized controlled trial that tests interventions to stimulate demand for immunization in Haryana, India. The policies under consideration include reminders, incentives, and local ambassadors for community mobilization. Cross-randomizing these interventions, with different dosages or types of each intervention, yields 75 combinations. The policy with the largest impact (which combines incentives, ambassadors who are information hubs, and reminders) increases the number of immunizations by 44\% relative to the status quo. The most cost-effective policy (information hubs, ambassadors, and SMS reminders but no incentives) increases the number of immunizations per dollar by 9.1\% relative to status quo.
\end{abstract}

\maketitle

\setcounter{page}{0}
\newpage 
\section{Introduction}

In many settings, policymakers have to select the best policy bundle that combines several interventions that can come in different dosages and varieties, generating a large sets of potential policies.
For example, to reduce school dropout in a low income country, a policymaker might want to introduce a cash
transfer, which may be conditional or not, be more or less generous, and come in combination with a remedial education program or not.
Similarly in medicine, a particular treatment regimen may combine several drugs in different potential dosages. For example, the management
of HIV-AIDS was revolutionized in the mid 1990 by the combination of two or three drugs in subtle dosages, the famous ``AIDS Cocktail.''

The ideal strategy, if time and implementation capacities were not constraints,
would be to  experiment iteratively in the context until the best bundle is found. There is a growing literature
on how to conduct and analyze adaptive trials \citep{hadad2021confidence,kasy2021adaptive,zhan2021policy}. However, it is often not possible to conduct such sequential trials:  the window for experimentation may be
short before a policy must be chosen, or a decision on a treatment regimen or vaccine must be made quickly because of an urgent health threat. This was certainly true in the case of HIV-AIDs, where there were huge pressures to rapidly identify and approve a treatment.
In such cases, the only option may be to conduct a large scale experiment that simultaneously tests many different policy bundles. For example, Saquinavir, the first protease inhibitor, was approved by the FDA in 1995. As soon as 1996, the seminal ACTG-320 trial evaluated the three-drug cocktail \citep{hammer1997controlled}.

However, with several alternative interventions and multiple possible dosages, there is an enormous number of potential combinations, each of which is a unique policy bundle. For example, with three candidate interventions and five possible dosages for each one, the policymaker must select among 125 unique policy bundles. There is no clear guidance in the literature on how to  design and analyze such trials when the number of potential options is large. For some, this is a fundamental limitation of using randomized controlled trials to inform policy or even treatment choices \citep{easterly2014tyranny,ravallion2009should}. There are so many potentially relevant options that one is caught between oversimplifying the context by assuming that only a few of these options are relevant, or having to deal with too many variables for any finite sample.

We fill this gap by developing a methodology for  \emph{treatment variant aggregation} (TVA): a principled algorithm that pools together policy variants that have similar impact and prunes policy options that are ineffective. This reduces the dimensionality of the problem and enables identification of the best overall combination and the consistent estimation of its impact, accounting for the winner's curse.
We then apply this methodology to an urgent policy problem: how to increase the take up of childhood immunization in India. We analyze a large-scale experiment that we carried out in partnership with the government in the state of Haryana, India, where three candidate interventions and several different dosages generated 75 potential unique policies spread over 915 villages. We identify the most effective and the most
cost-effective combination of interventions.

Our methodology allows us to solve two problems that arise when evaluating large numbers of candidate policy bundles.

First, the researcher must decide how many and which potential policies to include in an experimental design, and how to analyze the trial.
One approach, recommended in \cite{mcckenzie_crosscut}, is to only include a limited number of bundles, for example by omitting all policies that are combinations of each other, or alternatively by varying only one marginal.  For example, the ACTG-320 trials compared the three drug cocktail (protease inhibitor plus two nucleosides), in one specific dosage, to the two nucleosides in the same dosage \citep{hammer1997controlled}.
However, the optimality of such an approach presumes an ``oracle property:'' that the researcher or policymaker already knows which policies are worth comparing.
We consider situations in which any of many policy bundles could turn out to be optimal.
For such situations, one can include all unique policy bundles in a fully-saturated factorial design.
This reduces power since each policy bundle may only be observed on a small sample.
To increase precision, researchers often attempt to pool policy bundles ex-post based on observed outcomes.
Without specific structure to the problem,  this can be misleading in finite samples, especially when interaction effects are small (so that a test against zero has limited power), but not quite zero (``local-to-zero'', \cite{muralidharan2019factorial}).
We need to find minimal and realistic assumptions on the inferential environment that enable a principled, data driven approach to reducing the dimensionality of the problem.

The second problem is that the impact of a policy that is estimated to be the best out of a set of $K$ unique policies (henceforth we use ``policy'' as a short-hand for ``policy bundle'') is likely to be over-estimated, due to the ``winner's curse'' \citep*{andrews2019inference}.  Some policy $k^\star$ could have the highest estimated impact partially due to a high true effect, but could also partially be due to randomness. Conditional on being the best in the data, some of the estimated impact is likely due to randomness. As a result, the expected impact of policy $k^\star$  is overestimated and the statistician must adjust for it.

The statistical setting that we analyze is as follows. There are $M$ possible intervention arms, $R$ possible ``dosages'' per arm (including zero) and therefore $K = R^M$ possible policies.
The policymaker is uncertain about which policies are effective.
However, it can be that in some circumstances an \emph{incremental} dosage change on some arm does not have a meaningful effect on the outcome,
for some other combination of other arms. For example, if there are three drugs $(x,y,z)$, and two adjacent dosages $z$ and $z'$ for the third drug (e.g., 10mg and 20mg),
then it is possible that $(x,y,z) = (x,y,z')$ for some particular choices of $x$ and $y$.  We can thus pool those two policies, $(x,y,z)$ and $(x,y,z')$, and treat them as one for all practical purposes.
The policymaker conducts \emph{treatment variant aggregation} (TVA). This \emph{pools} together policy variants that are not meaningfully different (e.g., $(x,y,z)$ and $(x,y,z')$ are pooled together as above) and \emph{prunes} all the combinations that are ineffective (those that pool with the null policy).
 TVA allows us to restrict attention to aggregated policy variants and only those that matter, which can improve estimation.   We discuss how we can use TVA to consistently estimate policy effects and estimate the best policy effect adjusting for the winner's curse. Our main contribution is to develop the TVA procedure, which proceeds in several steps.

The first step is to re-represent the fully-saturated factorial regression of the outcome on unique policies in terms of another, equivalent specification that tracks the effects of incremental dosages. TVA utilizes a Hasse diagram lattice of policy variants to deduce how zeros in the marginal effects determines pruning and pooling of variants.

To fix ideas, consider a simple example with two arms ($M = 2$) and two non-zero dosages for each ($R = 3$), yielding $K = 9$ unique policies.  So each arm can either be used or not, and used in either a low or high dose.
Let us represent these by $[T_1,T_2]=[0,0]$, $[0,1]$, $[0,2]$, $[1,0]$, etc., where the entries are the corresponding treatment levels with 0 being not used, and 1 being low and 2 high dosage.
A standard regression would just have a dummy variable for each particular policy combination  $[T_1,T_2]$, and then a corresponding coefficient  $\beta_{[T_1,T_2]}$.
Instead, it is useful to break these into marginal effects, in terms of a different regression:
{\small
\begin{align}\label{eq:simple-alpha}
y = \alpha_{[0,0]} &+ \alpha_{[1,0]} \cdot 1\{T_1 \geq 1\} \cdot 1\{T_2 = 0\}  + \alpha_{[0,1]} \cdot 1\{T_1 = 0\} \cdot 1\{T_2 \geq 1\}  \\ \nonumber
&+ \alpha_{[2,0]} \cdot 1\{T_1 = 2\} \cdot 1\{T_2 = 0\}+ \alpha_{[0,2]} \cdot 1\{T_1 = 0\} \cdot 1\{T_2 = 2\} \\ \nonumber
&+ \alpha_{[1,1]} \cdot 1\{T_1 \geq 1\} \cdot 1\{T_2 \geq 1\} + \alpha_{[2,1]} \cdot 1\{T_1 =2\}\cdot 1\{T_2 \geq 1\} \\ \nonumber
&+ \alpha_{[1,2]} \cdot  1\{T_1 \geq 1\} \cdot 1\{T_2 = 2\}
+ \alpha_{[2,2]}\cdot 1\{T_1 = 2\} \cdot 1\{T_2 = 2\} + \epsilon. \nonumber
\end{align}
}

In this specification, the $\alpha_{[r_1,r_2]} $ are all marginal effects, and hence, inspecting the vector $\alpha$  and checking which $\alpha_{[r_1,r_2]} = 0$ tells us which adjacent policies can be pooled together, and which ones can be pruned (pooled with the null policy; for instance, if $\alpha_{[1,0]}=0 $).
In a general factorial design of $K$ unique policies, we have regressors of the form $1\{T_1 \geq r_1, \ T_2 \geq r_2, \ldots, \ T_M \geq r_M\}$ for treatment arm intensities $T_m$ and thresholds $r_m$ for arm $m$, with $K-1$ regressors plus an intercept. At every stage we ask whether an incremental increase in dosage for a given arm of some policy causes a marginal change. That is, we check for zero effects: $\alpha_{[r_1,\ldots,r_M]} = 0$ for some or multiple $[r_1,\ldots,r_M]$.

 This approach makes use of the researchers' a priori knowledge of which policies can be pooled: these are policies that are dosage variations of the same treatment \emph{profile}, or underlying policy type. Therefore, it places discipline on the problem. It ensures we are not mis-naming pooled choices by pooling non-comparable policy bundles, which is the issue implicitly raised in \cite{muralidharan2019factorial}.\footnote{Specifically, \cite{muralidharan2019factorial} take issue with ``short models'' such that, for example, what is claimed as the effect of $(x,0,0)$---e.g.,  protease inhibitor on its own---actually also includes some of the effect of $(x,y,0)$---e.g.,  where a nucleoside is given in combination with the protease inhibitor. In this sense the treatment is ``mis-named''. In TVA, the policy $(x,y,0)$ is considered to be a categorically different treatment type from $(x,0,0)$ for $x,y>0$. More generally, the pooled policy names always unambiguously indicate which unique policy combinations are pooled together.
 }
 We assume that when there are non-zero marginal effects, those effect sizes are large enough---assuming away the local-to-zero range---so that we may actually discover and make inferences about the best policy combinations.\footnote{In practice, we show in Section \ref{subsec: sparisty robustness} through simulations that we may relax the local-to-zero assumption in several directions (we study five regimes) and still retain strong performance for this final objective.}$^{,}$\footnote{Our approach works when these assumptions to allow regularization are palatable. When the assumptions, or reasonable relaxations, cannot be justified, sequential testing, rather than simultaneous testing with post-selection processing, is unavoidable.}

Our goal is to  identify the support (the set of non-zero coefficients) of the regression equation \eqref{eq:simple-alpha}. Under our maintained assumptions, a natural way to do this is to use LASSO. This requires an extra step. The regressors
in equation  \eqref{eq:simple-alpha} are typically strongly correlated. For instance, $1\{T_1 = 2\} \cdot 1\{T_2 = 0\}$ implies $1\{T_1 \geq 1\}\cdot 1\{T_2 = 0\}$. This means that the marginal effects specification fails the necessary and sufficient condition for LASSO support consistency, of ``irrepresentability,'' which requires that regressors are not too correlated \citep{zhao2006model}. Thus, the second step is to apply the \emph{Puffer transformation} to the variables to which LASSO is being applied \citep{rohe2014note,jia2015preconditioning}. This de-correlates the design matrix that comes from \eqref{eq:simple-alpha}. We show that the
the specific structure of the RCT makes it particularly suitable for this technique.

Once LASSO has been applied on the Puffer-transformed variables to consistently estimate the marginal effects support, the third step is to reconstruct a set of unique policies taking into account the pooling and pruning implied by the LASSO results.

The fourth step is to  estimate OLS on the new set of unique policies, post-selection. Using an argument adapted from \cite{javanmard2013model},  we show that this estimator is consistent and asymptotically normally distributed.\footnote{The important detail here is that the convergence in distribution is not uniform (in the parameter space) \cite{leeb2005model}. Nevertheless, asymptotic normality holds pointwise (in the parameter space)---essentially, in our setting, the non-uniformity does not have much bite since incorrect selection of the high-effect policies happens with probability tending rapidly to zero.}

Finally, the fifth step is to estimate the effect of the best pooled and pruned policy, adjusting for the winner's curse \citep{andrews2019inference}. There are three advantages of conducting this adjustment post-TVA. First, when there are fewer potential alternatives to the best policy $k^\star$, the odds of picking the best by chance are lower. Second, with fewer alternatives, it is less likely that the second-best alternative has an effect that is similar to the $k^\star$ effect. Had it been more likely that the second-highest alternative had an effect similar to the highest effect, the adjustment for bias would require a more stringent shrinkage penalty. Our procedure avoids this.  Third, there is the benefit of coherence: if two policies are essentially the best and can  be pooled, there is no point in applying a strong shrinkage penalty.

We apply this framework to a large-scale immunization experiment that we conducted in Haryana, India, from December 2016 to November 2017. Immunization is recognized as one of the most effective and cost-effective ways to prevent illness, disability, and death. Yet nearly 20 million children do not receive critical immunizations each year \citep*{unicef2019progress}. Though resources directed towards immunization have increased steadily, there is mounting evidence that insufficient parental demand has contributed to stagnating immunization rates \citep*{who2019imm}.  This has motivated experimentation with ``nudges,'' including conventional ones such as small cash or in-kind incentives,\footnote{See  \cite{banerjee2010improving,bassani2013financial,wakadha2013feasibility,johri2015strategies,oyo2016interventions,gibson2017mobile}.}  SMS reminders,\footnote{See \cite{wakadha2013feasibility,domek2016sms,uddin2016use,regan2017randomized}.
} as well as more novel interventions such as symbolic social rewards\footnote{See \cite{karing2018social}.} or using influential individuals in a social network as ``ambassadors."\footnote{See \cite{alatas2019celebrities,banerjeeusing}.} We conducted this experiment in partnership with the Government of Haryana, which was interested in selecting the best policy for full scale adoption in the state.  We cross-randomized three arms with different nudges that had shown some promise in earlier work: (1) monetary incentives, (2) SMS reminders, and (3) seeding ambassadors. Incentives came in two types (linear and convex) with two dosages each (low and high in terms of value). SMS reminders had two dosages. Either 33\% or 66\% of caregivers received SMS reminders (and voicemails) about the next scheduled vaccination. Ambassadors where either randomly selected or selected through a nomination process. The nomination process was done in three ways, one of which came in two dosages (Information Hub).
All together, we have 75 unique policies and 915 villages were at risk for all three treatments taken together.

Applying TVA and the winner's curse adjustment, we find that the best policy is to use information hubs and either low or high SMS coverage, in combination with convex incentives that can be either low or high. This increases the number of immunizations by 44\% ($p < 0.05$). Choosing the cheapest among these suggests that the policymaker should chose low convex incentives, send SMS to 33\% of caregivers, and identify information hubs to relay the message. To maximize the number of immunization per dollar spent, the best policy is using information hubs along with SMS reminders at 33\% or more of caregivers covered. Adjusting for the winner's curse, it  increases the number of immunizations per dollar by 9.1\% ($p < 0.05$) compared to the status quo with no additional intervention. It is the only policy that strictly increases the number of immunizations per dollar spent.

The results highlight the importance of complementarities that may get lost had a factorial design not been used. Information hubs magnify the effect of other interventions and spark diffusion: neither incentives nor reminders are selected on their own, but are selected when combined with information hubs, which spark diffusion. Similarly, information hubs are not selected on their own, but are selected when combined with the conventional strategies. This suggests that in cases where there are no strong reasons to rule out interactions a priori,  it is important to accommodate them in the design and the statistical analysis.


\section{Treatment Variant Aggregation}\label{sec:Estimation}

This section describes the setup and proposes the methodology for treatment variant aggregation and for estimation of the impact of the best policy after adjusting for the winner's curse.

\subsection{Overview and Setup}

We have a randomized controlled trial of $M$ arms and $R$ ordered
dosages (\{none, intensity 1,..., intensity $R-1$\}). This yields $K:=R^{M}$ unique treatment combinations or \emph{unique policies}. Let $T_{ik}$
be a dummy variable indicating that unit $i$  is assigned to unique policy $k$. Unique policies are described as \emph{variants} of each other when they differ only in the (non-zero) dosages of the treatments applied.\footnote{This implies in particular that two unique policies differing only in whether some arm is active or inactive (dosage is zero) are not considered variants, as formalized below. }

Assuming the same number of dosages per arm is for notational ease and without substantive loss of generality. In practice the number of dosages per arm can vary. 

The \emph{unique policy regression} is given by
\begin{equation}
y_{i}=T\beta^{0}+\epsilon_{i}.\label{eq:unique_policy}
\end{equation}
The support of this regression is given by the set of unique policies
that have non-zero effect relative to control,
\[
S_{\beta}:=\left\{ k\in\left[K\right]:\ \left|\beta_{k}^{0}\right|\neq0\right\} .
\]

Some of the variants have equivalent effects and ought to be considered as one policy. Others may be altogether ineffective and ought to be pruned (i.e., pooled with control). We construct a method of \emph{treatment variant aggregation} (TVA) in order to \emph{pool and prune variants} systematically.

\subsubsection{Treatment Profiles and Policy Variants} 

A fundamental concept is a \emph{treatment profile}. This clarifies which unique policies are \emph{variants} of each other; i.e., could potentially be pooled with one another (without being pooled with the control).

The treatment profile $P(k)$ of a unique policy $k$ simply designates which of the $M$ arms are active (dosages are positive), without regards to how high the dosage is. Two unique policies $k,k'$ are mutual variants if and only if $P(k) = P(k')$, i.e. exactly the same arms are active for both policies. Thus, $K$ unique policies are categorized into $2^M$ treatment profiles.

\begin{example}
In the case of our experiment, consider observation $i$ that has an assigned policy $k=$(No Ambassador,
33\% SMS, low-value flat incentives) and observation $j$ that has an assigned assigned policy $k'=$(No Ambassador,
66\% SMS, low-value flat incentives). Though $k$ and $k'$ are distinct
treatment combinations, they share the same treatment profile--- $P\left(k\right)=P\left(k'\right)$---of
(No Ambassador, Some SMS, Some incentives). Therefore $k$ and $k'$ are variants.  They would not be variants if instead
$k'=$(No Ambassador,
66\% SMS, \emph{No incentives}).
\end{example}

\subsubsection{Treatment Variant Aggregation: Pooling and Pruning}

Increasing the dosage in a treatment arm may be inconsequential after a point, and more generally 
minor policy variants may have the same impact.
Here, we consider a re-specification of \eqref{eq:unique_policy} that explicitly tracks the marginal effect of increasing dosages by grouping together policy variants that have the same effect on the outcome, described below.\footnote{While sometimes what is ``dosage'' and ``dosage ordering'' is readily apparent from the arm, as in the SMS arm of our intervention with saturation levels 33\% and 66\%, in other cases the researcher has to decide this (of course this can be pre-specified). For example, in the seeds arm of our intervention, we decided that the information hub ambassador comes in two dosages, with those that are trusted for health advice as the higher dosage.} When these marginal effects are zero, this means that a set of variants are to be  either pooled or pruned.

Letting $Z_{S_{TVA}}$ denote the matrix of indicator variables for the pooled policies, our goal is to estimate the \emph{pruned and pooled policy regression}:

\begin{equation}
y=Z_{S_{TVA}}\eta^{0}_{S_{TVA}}+\epsilon\label{eq: pooled_pruned}.
\end{equation}

Comparing \eqref{eq: pooled_pruned} with \eqref{eq:unique_policy}, $\eta^0$ are the projection coefficients of $T \beta^0$ onto $Z_{S_{TVA}}$, that corresponds to simply grouping certain policies, and
estimating the parameters for the grouped policies. Notice that some policies may end up pooled with the control and we refer to this as pruning.

Let $\mathcal{P}$  denote the set of all partitions of the $K$ policies. Elements of $\mathcal{P}$ index every conceivable pooling designation of the $K$ policies. A specific pooling designation is given by $\Pi$, some specific partition of $K$ policies. Whether two given policies $k$ and $k'$ are pooled corresponds to whether they are members of the same part of the partition, $\pi \in \Pi$.

The universe of all conceivable poolings is disciplined by a set of rules $\Lambda$ governing the \emph{admissible} poolings. The admissible pools are then a strict subset $\mathcal{P}_{\mid \Lambda} \subset \mathcal{P}$. The target $S_{TVA} \in  \mathcal{P}_{\mid \Lambda}$ is defined to be the maximally admissible pooled and pruned set of policies (i.e., the coarsest partition). 

We now define the admissible pooling rules $\Lambda$. It is useful to represent policies in terms of its assigned intensities in each arm. Let us say $k:= [r_1,...,r_M]$ where $r_i$ is the intensity assigned to in arm $i$. There is a natural partial ordering of policies with respect to their intensities, whereby if $k := [r_1,...,r_M]$ and $k' := [s_1,...,s_M]$ in the intensity representation, then we say $k\geq k'$ if $\forall i \in [M], r_i \geq s_i $; i.e., the intensities of $k$ \emph{in each arm} weakly dominate those of $k'$. 
Finally, let $k_0$ denote the control or null policy, with intensity representation $[0,...,0]$.

\begin{ass}\label{assu: admissible} $\Pi$ is an admissible pooling respecting $\Lambda$---i.e., $\Pi \in \mathcal{P}_{\mid \Lambda}$---if and only if
\begin{enumerate}
\item  $k, k' \in \pi$ implies  $\beta^{0}_k = \beta^{0}_{k'}$. Only policies with equal treatment effects may be pooled.
\item  $k, k' \in \pi$ implies $P(k) = P(k')$ or $k_0 \in \pi$. Only variants may be pooled, or if non-variant policies are being pooled then they must also be pooled with the control (null policy).
\item  $k, k' \in\pi$  and $k \leq k^{\prime \prime} \leq k'$ (in the intensity ordering) imply that $k^{\prime \prime} \in \pi$.  Pooling can only involve contiguous variants, and cannot skip any intermediate policies.
\item Let $k, k' \in \pi$ have intensity representations $k := [r_1,...,r_i,...,r_M]$ and $k' := [r_1,...,r_i-1,...,r_M]$. Then, for any two $j, j'$ such that $P(j)=P(j')=P(k)=P(k')$, $j < k$, $j'< k'$ and with intensity representations $j:= [s_1,...,r_i,...,s_M]$, $j' := [s_1,...,r_i-1,...,s_M]$, it must be true that $j, j' \in \pi'$ (where $\pi'$ possibly equals $\pi$).\footnote{We use $j<k$ to indicate that $j\leq k$ and $j\neq k$.}
\end{enumerate}
\end{ass}

Assumption \ref{assu: admissible} (part 1) is the basic requirement on pooling.\footnote{Note that a finite sample pooling procedure admits a pool only if it cannot reject that estimated treatment effects are equal.} Assumption \ref{assu: admissible} (part 2) imposes some essential structure for interpretability. It ensures that we avoid pooling non-comparable bundles---a problem identified by \cite{muralidharan2019factorial}.  Limiting the pooling to variants cleaves the combinatorially vast universe of unstructured, conceivable pools to those that are more policy relevant. So pooling comparisons are only made within-profile, with the sole exception being when policies are pruned (pooled with control).

Assumption \ref{assu: admissible} (parts 3 and part 4) capture the implications running a regression of outcomes on profile dummies as well as marginal variables that capture main effects and complementarities as one climbs dosages, as in equation \eqref{eq:simple-alpha} in the Introduction and \eqref{eq: TVA} below. Both fall out of equation \eqref{eq:alpha_beta_rel} that relates coefficients between the unique policy and marginal effects regressions and tells one how to add up marginal effects to retrieve policy effects.

Conceptually, part 3 stems from the fact that marginals track whether \emph{incremental} dosages matter or not. Thus, contiguous variants are pooled so long as all incremental dosage effects in the relevant directions are zero. However, when an incremental dosage that tracks a complementarity between arms has nonzero effect, it cleaves apart policies not just ``locally'' but in an entire region of the treatment profile. Part 4 expresses this ``externality.'' The intuition and function of this condition is presented below in an example in Figure \ref{fig:hasse_concatenations}, panels B and E.

We can depict typical (non pruning) pooling choices for a treatment profile in a Hasse diagram. In a Hasse diagram, a line upwards from variants $k$ to $k'$ implies $k' > k$, and there is no variant $k''$ such that $k' > k'' > k$ (in the partial order). The running example is the case of a 2 arm treatment of 4 intensities (3 nonzero intensities ``low", ``medium", ``high"); i.e., $M= 2, R = 4$ and the treatment profile where both arms are ``on". Figure \ref{fig:hasse} depicts the Hasse for this treatment profile. Here unique policies are named per their intensity representations; i.e., $[r_1,r_2]$ where $r_i \in \{1,2,3\}$ is the (nonzero) dosage in arm $i$.

Pooling choices respecting $\Lambda$ for this treatment profile show up as particular ``concatenations" of adjacent policies on the original Hasse diagram. The top panels of Figure \ref{fig:hasse_concatenations} (panels A-C) depict key examples for this treatment profile. In panel A, where arm 2 is at low intensity, the exact intensity that arm 1 is set at is irrelevant for treatment effect. That is, $\beta_{[1,1]} = \beta_{[2,1]} = \beta_{[3,1]}$ and so $\{[1,1], [2,1],[3,1]\}$ can be concatenated to generate a policy we can call $[1:3,1]$.

In panel B, for low and medium intensities in both arms, increasing the intensity on arm 1 from 1 to 2 makes a difference but the intensity of arm 2 is irrelevant. That is $\beta_{[1,1]} = \beta_{[1,2]}$ and $\beta_{[2,1]} = \beta_{[2,2]}$ but $\beta_{[1,1]} \neq \beta_{[2,1]}$ and the concatenated pooled policies are $[1,1:2]$ and $[2,1:2]$.

In panel C, again for the low and medium intensities in the two arms, the exact intensities of both arms are irrelevant. That is, $\beta_{[1,1]} = \beta_{[1,2]} = \beta_{[2,1]} = \beta_{[2,2]}$ and the concatenated pooled policy is $[1:2,1:2]$.

We now illustrate the connection between these policy ``concatenations" respecting $\Lambda$ and marginal effects. For a treatment combination $k$, $\alpha^{0}_k$ is the marginal effect of the dosages in $k$ within its treatment profile relative to incrementally lower dosages. Formally, the marginal effects $\alpha^0$ may be defined implicitly so that a policy's effects are the sum of marginal effects from increasing dosages up to its particular dosage profile:

\begin{equation}\label{eq:alpha_beta_rel}
\beta_{k}^{0}=\sum_{k^{\prime} \leq k ; P\left(k^{\prime}\right)=P(k)} \alpha_{k^{\prime}}^{0}.
\end{equation}

Equation \eqref{eq:alpha_beta_rel} can be inverted to recover $\alpha^0$ in terms of $\beta^0$. An explicit expression for its terms $\alpha^{0}_k$ is more unwieldly in its full generality, but depending on the policy $k$, it can be a difference between two variants' effects or reflect a complementarity, i.e., the interaction effect from combining dosages in different arms. This is consistent with the interpretation that a policy's effects are the main effects of the highest dosages in each arm, considered separately, plus the relevant interaction effects.

We can now see how $\Lambda$-respecting policy concatenations in the Hasse show up as zeroes in the marginals $\alpha^0$ by interpreting Figure \ref{fig:hasse_concatenations},  panels D-F, in light of equation \eqref{eq:alpha_beta_rel}. In panel D, $\alpha_{[2,1]} = \alpha_{[3,1]} = 0$, i.e., keeping the intensity fixed as low in arm 2, there is no marginal contributions of increasing the intensity in arm 1. Since $\alpha_{[2,1]} = \beta_{[2,1]} - \beta_{[1,1]}$, and $\alpha_{[3,1]} = \beta_{[3,1]} - \beta_{[2,1]}$, the concatenations on panel A ensue. 

In panel E, since $\alpha_{[2,2]} = (\beta_{[2,2]} - \beta_{[2,1]}) - (\beta_{[1,2]} - \beta_{[1,1]})$, it reflects a `complementarity' in increasing the dosages from low to medium in both arms. Since this is zero, it makes arm 2's relevance parallel in for both low and medium intensities of arm 1; i.e., $\beta_{[2,2]} - \beta_{[2,1]}  = \beta_{[1,2]} - \beta_{[1,1]}$. Since arm 2 is irrelevant when arm 1 is low---because $\alpha_{[1,2]} = 0$---it follows that  $\beta_{[2,2]} = \beta_{[2,1]}$ and $\beta_{[1,2]} = \beta_{[1,1]}$, permitting the concatenations in the above panel B. No further concatenations ensue since $\alpha_{[2,1]} \neq 0 \implies \beta_{[2,1]}  \neq \beta_{[1,1]}$.    This illustrates the role of Assumption \ref{assu: admissible} (part 4), since the creation of pooled policy $[2, 1:2]$ also implies the creation of policy $[1, 1:2]$. Indeed, there is a simple depiction of Assumption \ref{assu: admissible} (part 4) in the Hasse diagrams for treatment profiles: for any parallelogram that one can draw in the Hasse, if the ``top'' segment is pooled together, so must the ``bottom'' segment (in a possibly distinct pooled policy).

On the other hand, in panel F, the small change of setting the marginal $\alpha_{[2,1]} = 0$ concatenates both low and medium intensities in both arms.

As illustrated through these examples, zeros in $\alpha^0$ thus show up as policy concatenations in Hasse diagrams, and more formally, as pooling decisions respecting $\Lambda$. This motivates the \emph{marginal effects regression}:

\begin{equation}
y=X\alpha^{0}+\epsilon.\label{eq: TVA}
\end{equation}

This is an invertible transformation of \eqref{eq:unique_policy}. $X$ can be interpreted as indicators

\[
X_{i\ell}:={\bf 1}\left\{ k\left(i\right)\geq\ell\ \cap\ P\left(k\left(i\right)\right)=P\left(\ell\right)\right\} .
\]
In other words, $X$ assigns for unit $i$ a ``1'' for all policy variants
that share $k\left(i\right)$'s treatment profile and are weakly dominated
in intensity by $k\left(i\right)$ and a ``0'' otherwise.

Estimating \eqref{eq: TVA} is informative of the zeros in $\alpha^0$. Knowledge of the zeros is of course equivalent to the knowledge of its complement, the support of \eqref{eq: TVA}:

\[
S_{\alpha}:=\left\{ j\in\left[K\right]:\ \left|\alpha_{j}^{0}\right|\neq0\right\} .
\]

The idea is to apply a model selection procedure to estimate $S_{\alpha}$. In Appendix \ref{sec:pooling}, we show how to construct the unique maximally pooled and pruned set $S_{TVA} \in \mathcal{P}_{\mid \Lambda}$ from $S_{\alpha}$.\footnote{Following this same procedure with any estimate $\widehat{S}_\alpha$ leads to an estimate $\widehat{S}_{TVA}$ of pooled and pruned policies.} The maximality ensures that no contiguous set of intensities thought to have the same treatment effects are left un-pooled.

In sum, we are interested in the following \emph{treatment variant aggregation} and best-policy effect estimation procedures:
\begin{enumerate}
\item Consistently select support: $\Pr\left(\widehat{S}_{\alpha}=S_{\alpha}\right)\rightarrow1$.
\item Consistently estimate effects after pooling and pruning:  $\left\Vert \hat{\eta}_{\widehat{S}_{TVA}}-\eta_{S_{TVA}}^{0}\right\Vert _{\infty}\rightarrow_{p}0$.\footnote{We write  (2) somewhat informally here because (1) happens with high probability (exponentially in $n$)  tending to one as will be formalized in detail in Proposition \ref{prop: TVA_consistency}}
\item Find the best policy $\hat\kappa^{\star}=\text{argmax}_{\kappa \in \widehat{S}_{TVA}}\hat{\eta}_{\widehat{S}_{TVA},\kappa}$

and generate an unbiased estimate of best policy effect $\eta^{0}_{\hat\kappa^{\star}}$.
\end{enumerate}

Our object of interest in (3), following \cite{andrews2019inference}, is the \textit{estimated} best policy effect rather than the \textit{true} best policy effect. As a consequence the impossibility result of \cite{hirano2012impossibility} on inference does not apply in our setting.\footnote{The reason their impossibility result arises is because the max-operator, necessary to retrieve the \textit{true} best policy effect, is nondifferentiable.  For the \textit{estimated} best mean, however, we do not need to apply the max-operator after conditioning on $\hat\kappa^\star$, hence usual inference applies.}

Note the two levels of estimation in $\hat{\eta}_{\widehat{S}_{TVA}}$. The random estimand $\eta_{\widehat{S}_{TVA}}^{0}$ is defined even when  $\widehat{S}_{TVA}$ is misspecified.\footnote{This estimand is obtained from the orthogonal projection of $T \beta^0$ onto the subspace generated by $\widehat{S}_{TVA}$.} $\hat{\eta}_{\widehat{S}_{TVA}}$ are estimated coefficients in the (possibly misspecified) model. The best policy effect estimand $\eta^{0}_{\widehat{S}_{TVA},\hat\kappa^{\star}}$ is an element of $\eta_{\widehat{S}_{TVA}}^{0}$.

\subsection{Pooling and Pruning for Support Selection}

The next step is to identify the support $S_{\alpha}$. The natural
place to start would be to apply LASSO directly to (\ref{eq: TVA}). However,
this approach fails sign consistency because
the marginal effects matrix $X$ fails an ``irrepresentability criterion'' which
is necessary for consistent estimation \citep{zhao2006model}. Irrepresentability
bounds the acceptable correlations in the design matrix. Intuitively,
it requires that variables that are not in the
support are not too strongly correlated with those that are. Otherwise, an irrelevant
variable is ``representable'' by relevant variables, which makes
LASSO erroneously select it with non-zero probability irrespective
of sample size. We show with a proof by construction that irrepresentability
fails in our setup in Appendix \ref{sec:simulations}. The idea is higher dosage variants will end up
being representable by lower dosage treatment combinations, violating
the requirement.

A way out is provided by \cite{jia2015preconditioning}. They show that, under some conditions,
 one can estimate the LASSO support by transforming
the data to recover irrepresentability. They demonstrate
that a simple left-multiplication (pre-conditioning) can de-correlate
the data (at the expense of inflating variance in the error).

In Proposition \ref{prop: support} we demonstrate that in the specific instance of the crossed
RCT design with ordered intensities, the pre-conditioning strategy of
 \cite{jia2015preconditioning} can be applied because the relevant sufficient conditions are met. Specifically, with an RCT, we can exactly characterize the design matrix and therefore the inflation factor. We can
 show that the variance inflation cost is tolerable, in the sense that we can consistently recover the support and the treatment effects.

The weighting is constructed as follows. Let us take the singular
value decomposition of $X:=UDV'$ where $U$ is an $n\times K$ unitary
matrix, $D$ is a $K\times K$ diagonal matrix of singular values,
and $V$ is a $K\times K$ unitary matrix. The \emph{Puffer transformation}---so
named for the fish whose shape is suggested by the geometry of this transformation---is $F:=UD^{-1}U'$. The regression of interest
is now
\begin{equation}
Fy=FX\alpha+F\epsilon\label{eq: puffer}
\end{equation}
where $F\epsilon\sim\mathcal{N}\left(0,UD^{-1}\Sigma D^{-1}U'\right)$.
As  \cite{jia2015preconditioning} note, this satisfies irrepresentability since
$\left(FX\right)'\left(FX\right)=I$, which is sufficient (\cite{jia2015preconditioning}, \cite{bickel2009simultaneous}).

To understand why this works, recall that the matrices $U$ and $V'$
can be thought of as rotations and $D$ as a rescaling of the principal
components. So, the transformation $F$ preserves the rotational elements
of $X$ without the rescaling by $D$ and $FX=UV'$ as its singular
value decomposition (with singular values of 1).

The reason this is useful is because when a matrix $X$ has correlation,
then the $i$th singular value of $X$ captures the residual variance
of $X$ explained by the $i$th principal component after partialling
out the variance explained by the first $i-1$ principal components.
So, when there is high correlation within $X$, less than $K$ principal
components effectively explain the variation in $X$ and so the later
(and therefore lower) singular values shrink to zero. $F$ inflates
the lowest singular values of $X$ so that each of the principal components
of the transformed $FX$ explains the variance in $FX$ equally. In
that sense, $FX$ is de-correlated and, for $K<n$, is mechanically
irrepresentable. The cost is that this effective re-weighting of the
data also amplifies the noise associated with the observations that
would have had the lowest original singular values.  Of course if the amplification is too strong,
it can can hinder efficiency of LASSO in finite sample and even prevent the sign consistency of  LASSO, in the worst case.\footnote{In $K>n$ cases--not  studied  here and not having a full characterization in the literature--even irrepresentability is not immediate and the theory developed is only for special cases (a uniform distribution on the  Stiefel manifold) and a collection of empirically relevant simulations \citep{jia2015preconditioning}.}

The reason why LASSO is particularly amenable to the Puffer transformation
in our specific setting of the cross-randomized experiment with varying
dosages is that the marginal effects design matrices are highly structured.
In particular, the assignment probabilities to the various unique
treatments are given, and as a result, the correlations with $X$
are bounded away from 1. This has the implication that the minimum
singular value is bounded below so that under standard assumptions
on data generation, LASSO selection is sign consistent. While this
is guaranteed for a sample size that grows in fixed $K$, the more
important test is whether it works when $K$ goes up with $n$; we
need to show that the Puffer transformation does not destroy the sign
consistency of LASSO selection as the minimal singular value of $X$
goes to zero as a function of $K$. In Lemma  \ref{lem:cmin}, we bound the rate
at which the minimal singular value of $X$ can go to zero as a function
of $K$ in a crossed RCT such as ours and Proposition \ref{prop: TVA_consistency} below relies
on this lemma to then prove that the Puffer transformation ensures
irrepresentability and consistent estimation by LASSO in our context.

We make the following additional assumptions and discuss their restrictiveness below.
\begin{ass}[RCT design gowth]
\label{assu: design}$R\geq3$, $K<n$, and $K=O\left(n^{\gamma}\right)$
for some $\gamma\in[0,1/2)$.\footnote{This assumption can be slightly weakened to $K = o(n^{\frac{1}{2}})$ for the results up to support consistency (Proposition \ref{prop: support}) to hold. However, one would then need a further assumption that $K^2 \log(K) = o(n)$ for  post-LASSO inference under a normal distribution (Proposition \ref{prop:normalinf}) to hold. The intuition is that the growth rate of $K$ must be tempered for the central limit theorem to operate in this growing parameter regime. Our Assumption \ref{assu: design} satisfies this requirement automatically. }
\end{ass}

\begin{ass}[Minimal marginal effect size]
\label{assu: support}$\left|S_{\alpha}\right|<K$ and $\min_{k\in S_{\alpha}}\left|\alpha_{k}\right|>c>0$
for $c$ fixed in $n$.
\end{ass}

\begin{ass}[Homoskedasticity]\label{assu: noise}$\epsilon_i\overset{\mathrm{iid}}{\sim}\mathcal{N}\left(0,\sigma^2 I_n\right)$, with $\sigma^2 > 0$ fixed in $n$.


\end{ass}

\begin{ass}[Penalty sequence]
\label{assu: penalty}Take a sequence $\lambda_{n}\geq0$ such that
$\lambda_{n}\rightarrow0$ and $\lambda_{n}^{2}n^{1-2\gamma}=\omega\left(\log n\right)$.
\end{ass}
\
Assumption \ref{assu: design} restricts the growth of the problem, preventing settings with too many treatments relative to  observations. Without this assumption, the correct support may not be estimated with probability tending to one, and the post-estimators may not necessarily be asymptotitcally normally distributed.  In practice, it means that the RCT cannot have cells in the fully saturated treatment design with very few units assigned to that unique treatment combination. Assumption \ref{assu: support} is the conventional LASSO-sparsity assumption applied to the marginal effects formulation. It imposes that \textit{adjacent policy variants} are either appreciably different or have no difference (i.e., the so-called ``beta-min'' assumption in the literature). We do not handle the case of local alternatives among adjacent variants — i.e., very small yet non-zero differences,  but policies that are not variants of each other or are nowhere adjacent are allowed to be local alternatives as discussed in Section \ref{subsub: wc_adjustment}. Assumption \ref{assu: noise} places our theory under homoskedastic errors following the literature on Puffer transformation. Extension to heteroskedasticity is left for future work.  Finally, Assumption \ref{assu: penalty} imposes a restriction on the LASSO-penalties, standard in the regularization literature.

\begin{prop}
\label{prop: support}Assume \ref{assu: admissible}-\ref{assu: penalty}. Let $\widetilde{\alpha}$
be the estimator of (\ref{eq: puffer}) by LASSO:
\[
\widetilde{\alpha}:=\text{argmin}_{a\in\mathbb{R}^{K}}\left\Vert Fy-FXa\right\Vert _{2}^{2}+\lambda_{n}\left\Vert a\right\Vert _{1}.
\]
Then $\Pr\left(\mathrm{sign}\left(\widetilde{\alpha}\right)=\mathrm{sign}\left(\alpha^{0}\right)\right)\rightarrow1$.
\end{prop}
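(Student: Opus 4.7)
My plan is to exploit the fact that Puffer preconditioning turns LASSO into a coordinate-separable soft-thresholding problem with an identity Gram matrix, so that the only non-trivial ingredient left is controlling the inflated Gaussian noise. First I would verify, using the SVD $X = UDV'$, that $FX = UD^{-1}U' \cdot UDV' = UV'$, so $(FX)'(FX) = VU'UV' = I_K$ since $U$ has orthonormal columns and $V$ is $K\times K$ unitary. This means the strong irrepresentability condition of \cite{zhao2006model} holds trivially (with constant $0$) for \emph{any} candidate support. Moreover, the LASSO objective becomes diagonal in $a$, so each $\widetilde{\alpha}_k$ is the soft-thresholding at $\lambda_n/2$ of the coordinate $((FX)'Fy)_k = \alpha_k^0 + W_k$, where $W := V D^{-1} U' \epsilon$.

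Next I would control the noise vector $W$. Each component $W_k$ is Gaussian with variance bounded above by $\sigma^2 / c_{\min}(X)^2$, where $c_{\min}(X)$ is the minimum singular value of $X$, because the rows of the unitary $V$ have unit norm. Invoking Lemma \ref{lem:cmin} (the lower bound on $c_{\min}(X)$ specific to the crossed-RCT marginal-effects design), combined with Assumption \ref{assu: design} that $K = O(n^\gamma)$ for some $\gamma < 1/2$, yields a polynomial-in-$n$ lower bound on $c_{\min}(X)^2$, of the order $n^{1-2\gamma}$ up to design-dependent constants. A Gaussian tail bound together with a union bound over the $K$ coordinates then gives
\[
\Pr\!\bigl(\|W\|_\infty > \lambda_n/4\bigr)\;\le\;2K \exp\!\bigl(-\lambda_n^2 \, c_{\min}(X)^2 / (32 \sigma^2)\bigr),
\]
which is $o(1)$ precisely because Assumption \ref{assu: penalty} imposes $\lambda_n^2 n^{1-2\gamma} = \omega(\log n)$ and $\log K = O(\log n)$.

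Finally I would assemble the pieces. On the event $\{\|W\|_\infty \le \lambda_n/4\}$: for $k \notin S_\alpha$, $|\alpha_k^0 + W_k| = |W_k| \le \lambda_n/4 < \lambda_n/2$, so soft-thresholding returns $0$; for $k \in S_\alpha$, the minimum-signal strength of Assumption \ref{assu: support} together with $\lambda_n \to 0$ ensures $\alpha_k^0 + W_k$ is bounded away from $0$ with the same sign as $\alpha_k^0$, so soft-thresholding preserves $\mathrm{sign}(\alpha_k^0)$. Hence the event $\{\mathrm{sign}(\widetilde{\alpha}) = \mathrm{sign}(\alpha^0)\}$ contains the high-probability event above, delivering the conclusion.

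The main obstacle is the interaction between noise inflation and the rate of $K$: the second step is where all the design-specific structure (Lemma \ref{lem:cmin}) and the growth condition (Assumption \ref{assu: design}) must be reconciled with Assumption \ref{assu: penalty} to guarantee that the $1/c_{\min}(X)^2$ variance inflation does not outstrip the penalty sequence. The first step is linear algebra and the third is a standard calibration once the second is in hand; the whole approach would collapse if $c_{\min}(X)$ decayed faster than $n^{-(1-2\gamma)/2}$, which is exactly why Lemma \ref{lem:cmin} is doing the real work and why the cap $\gamma < 1/2$ in Assumption \ref{assu: design} is needed.
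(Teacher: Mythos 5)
Your proposal is correct and follows essentially the same route as the paper: the paper simply cites Theorem 1 of \cite{jia2015preconditioning} (which packages exactly your orthogonal-design soft-thresholding argument, the Gaussian tail bound with variance inflated by $\xi_{\min}^{-2}$, and the union bound over $K$ coordinates) and then, just as you do, combines Lemma \ref{lem:cmin} with Assumptions \ref{assu: design}, \ref{assu: support}, and \ref{assu: penalty} to make the failure probability vanish. The only difference is that you re-derive the cited theorem from first principles in the identity-Gram special case, which is a self-contained but not substantively different argument.
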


All proofs are in Appendix \ref{sec:proofs} unless otherwise noted.

\subsection{Consistency of the TVA Estimator}

Having constructed an estimator  $\widehat{S}_{\alpha}$ of the support $S_\alpha$,  the next step is to use Algorithm \ref{alg:pooling} in Section \ref{sec:simulations} to construct $\widehat{S}_{TVA}$, the estimated set of pooled and pruned unique policies, and  then estimate policy  effects.  The regression of interest is (\ref{eq: pooled_pruned}). We show this estimator is consistent.\footnote{We thank Adel Javanmard for a helpful discussion of the proof.}

\begin{prop}\label{prop: TVA_consistency}
Assume \ref{assu: admissible}-\ref{assu: penalty}. Let $\hat{\eta}_{\widehat{S}_{TVA}}$ be the post-Puffer
LASSO   OLS estimator of (\ref{eq: pooled_pruned})
on support $\widehat{S}_{TVA}$. Then, with probability at least $1-2e^{-\frac{n^{1-2\gamma}\lambda^2}{2\sigma^{2}}+\gamma\log n}\rightarrow1$,
\[
\left\Vert \hat{\eta}_{\widehat{S}_{TVA}}-\eta_{S_{TVA}}^{0}\right\Vert _{\infty}\leq C\cdot\sqrt{\frac{\log n}{n^{1-\gamma/2}}}
\]
for any $C>0$ fixed in $n$.
\end{prop}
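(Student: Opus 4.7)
The plan is to combine the support-recovery result from Proposition \ref{prop: support} with a standard analysis of OLS on a correctly specified, disjoint-indicator design. The proof naturally splits into three steps: (i) establish that the selected pooled-and-pruned support equals the true one on a high-probability event; (ii) on that event, analyze the post-selection OLS estimator as a linear Gaussian on a fixed support; (iii) bound the sup-norm using a union bound together with the minimum cell size of the RCT.

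First, by Proposition \ref{prop: support}, on an event $\mathcal{E}_{1}$ of probability at least $1 - 2e^{-n^{1-2\gamma}\lambda^{2}/(2\sigma^{2}) + \gamma \log n}$ we have $\widehat{S}_{\alpha} = S_{\alpha}$. Because the mapping from the marginal-effects support to the pooled-and-pruned support (Algorithm \ref{alg:pooling} in Section \ref{sec:simulations}) is a deterministic readout of zero patterns in $\alpha$, via equation \eqref{eq:alpha_beta_rel}, the event $\mathcal{E}_{1}$ also implies $\widehat{S}_{TVA} = S_{TVA}$, a fixed non-random set. Conditional on $\mathcal{E}_{1}$, the model \eqref{eq: pooled_pruned} is correctly specified: $y = Z_{S_{TVA}} \eta^{0}_{S_{TVA}} + \epsilon$.

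Second, by construction the columns of $Z_{S_{TVA}}$ are indicators of disjoint pools of unit assignments---each unit lies in exactly one pooled policy---so $Z'_{S_{TVA}} Z_{S_{TVA}}$ is diagonal with entries $n_{\pi}$ equal to the pool sizes. The OLS estimator is thus the pool-wise sample mean, and under Assumption \ref{assu: noise},
\[
\hat{\eta}_{\pi} - \eta^{0}_{\pi} \sim \mathcal{N}\!\left(0, \sigma^{2}/n_{\pi}\right),
\]
independent across $\pi$. The RCT assignment, together with Assumption \ref{assu: design}, yields $\min_{\pi} n_{\pi} \geq c \cdot n/K$ on a high-probability event $\mathcal{E}_{2}$ (a multinomial-cell concentration bound), because each pool contains at least one unique policy with expected cell count $n/K = \Omega(n^{1-\gamma})$.

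Third, on $\mathcal{E}_{1} \cap \mathcal{E}_{2}$, a Gaussian tail bound and a union bound over at most $K = O(n^{\gamma})$ coordinates give
\[
\Pr\!\left( \|\hat{\eta}_{\widehat{S}_{TVA}} - \eta^{0}_{S_{TVA}}\|_{\infty} > t \;\Big|\; \mathcal{E}_{1} \cap \mathcal{E}_{2} \right) \;\leq\; 2 K \exp\!\left( -\frac{t^{2} \min_{\pi} n_{\pi}}{2\sigma^{2}} \right).
\]
Choosing $t = C \sqrt{\log n / n^{1-\gamma/2}}$ for $C$ large enough makes the right-hand side vanish faster than $e^{-n^{1-2\gamma}\lambda^{2}/(2\sigma^{2})+\gamma\log n}$, so the overall failure probability is dominated by that of Proposition \ref{prop: support}, giving the stated bound.

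The main obstacle is handling the interplay between the stochastic support-selection event and the Gaussian concentration of the OLS residual, since both depend on the same noise $\epsilon$. The cleanest resolution is to do the analysis strictly on $\mathcal{E}_{1}$, where $\widehat{S}_{TVA}$ is a deterministic set, thereby removing any post-selection distortion and reducing inference to linear Gaussian regression on a known design; the total failure probability is then controlled by a union bound. A secondary care point is that the pool sizes $n_{\pi}$ must concentrate uniformly around their means so that the diagonal of $(Z'_{S_{TVA}}Z_{S_{TVA}})^{-1}$ is uniformly $O(K/n)$; this reduces to a standard multinomial tail bound that is absorbed into the leading exponential rate.
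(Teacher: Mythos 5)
Your overall architecture is the same as the paper's: split on the support-recovery event $\mathcal{E}$ from Proposition \ref{prop: support}, observe that on $\mathcal{E}$ the design $Z_{S_{TVA}}$ consists of disjoint indicators so the OLS error is Gaussian coordinatewise, and finish with a Gaussian tail bound plus a union bound over at most $K=O(n^{\gamma})$ coordinates. The one place you are more explicit than the paper is the cell-size control: you derive $\min_{\pi} n_{\pi}\gtrsim n/K$ from a multinomial concentration bound, whereas the paper simply asserts $C_{\min}:=\sigma_{\min}(n^{-1}Z_{S_{TVA}}'Z_{S_{TVA}})\geq 1/\sqrt{K}$ ``by definition.''

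However, your final step does not close with your own cell-size bound. With $\min_{\pi} n_{\pi}\geq c\,n/K\asymp n^{1-\gamma}$ and $t^{2}=C^{2}\log n/n^{1-\gamma/2}$, the exponent in your union bound is
\[
\frac{t^{2}\min_{\pi}n_{\pi}}{2\sigma^{2}}\;\asymp\;\log n\cdot\frac{n^{1-\gamma}}{n^{1-\gamma/2}}\;=\;\log n\cdot n^{-\gamma/2}\;\longrightarrow\;0
\]
for any $\gamma>0$, so $2K\exp(-t^{2}\min_{\pi}n_{\pi}/(2\sigma^{2}))$ tends to $2K\to\infty$ rather than to zero. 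The stated rate $\sqrt{\log n/n^{1-\gamma/2}}$ only comes out if the minimum eigenvalue of $n^{-1}Z'Z$ is bounded below by $1/\sqrt{K}$ rather than $1/K$ --- which is exactly what the paper invokes, and which does \emph{not} follow from ``each pool contains at least one unique policy with expected share $1/K$.'' So you must either justify that stronger eigenvalue bound or accept the weaker (but still vanishing) rate $\sqrt{\log n/n^{1-\gamma}}$, which suffices for consistency but is not the displayed bound. A second, more minor point: writing the tail bound as a probability \emph{conditional} on $\mathcal{E}_{1}\cap\mathcal{E}_{2}$ is not quite legitimate, because conditioning on the selection event (which depends on the same $\epsilon$) changes the law of the OLS error. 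The paper's fix is to bound $\Pr(\mathcal{F}\cap\mathcal{E})\leq\Pr\bigl(\sup_{j}|\hat{\eta}_{S_{TVA},j}-\eta^{0}_{S_{TVA},j}|>t\bigr)$, i.e.\ drop the intersection and apply the unconditional Gaussian tail on the deterministic true support; you should do the same rather than condition.
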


\subsection{The Effect of the Best Policy}

Another policy relevant issue is the recommendation of a ``best policy''
together with an estimate of the effect of the best policy. Intuitively, to select
the best policy, we can scan the post-LASSO estimates of policies
in $\widehat{S}_{TVA}$. As \cite{andrews2019inference} note, the maximum of the set of estimated policies is subject to the winner's curse. In order to correct this, \cite{andrews2019inference} make use of the asymptotic normality of the estimators in question that are to be compared. So, we proceed in two steps. First, we demonstrate that $\hat{\eta}_{\widehat{S}_{TVA}}$ is indeed asymptotically normally distributed. This allows us to use the winner's curse procedure, that we then apply as a second step.

\subsubsection{Asymptotic Normality}

The post-Puffer LASSO estimators are asymptotically normally distributed (pointwise) for the following reason. If the correct support, $S_{TVA}$ were always selected, mechanically the estimators are asymptotically normal.

So, in practice, we need to worry about two errors: (a) the asymptotic distribution of the estimator with some incorrect support being selected and (b) the asymptotic distribution of the true estimator when the incorrect support is selected. We show in Appendix \ref{sec:proofs} that both of these terms are small in our setup.

Intuitively, the second term can be ignored. After all, the true estimator itself is asymptotically normally distributed, so given the very unlikely event of incorrect selection, this term is asymptotically negligible. The first term requires more work. But again, one can show that the amount of potential bias accumulated due to selecting the wrong support is slow relative to the rate of actually estimating the wrong support.\footnote{An entirely different approach would be to use a recent focus in the literature on exact post-selection inference using the observation that the LASSO procedure to select a model generates a polyhedral conditioning set \cite{lee2016exact}. This generates a parameter estimator distribution that is a truncated, rather than complete, normal. In our special environment---a correctly specified linear model, sparse parameters, restrictions on shrinkage rate of minimal values of parameters on the support---the truncation points diverge when conditioning on the event that the true model is the estimated model. In the winner's curse context an analogous point is made in \cite{andrews2019inference}, Proposition 3. This means that the distribution returns to the usual Gaussian. However, we provide a simpler, direct argument where we can calculate the distribution when the correct support is selected and bound the problematic terms in the event of poor selection.}  Therefore, we have the following result.\footnote{We again thank Adel Javanmard for a helpful discussion of the proof.}

\begin{prop}\label{prop:normalinf}
Assume  \ref{assu: design},  \ref{assu: support}, \ref{assu: noise},
 and  \ref{assu: penalty}. Let $H := \plim Z_{S_{TVA}}'Z_{S_{TVA}}/n$ and $J := \plim Z_{S_{TVA}}' \Sigma Z_{S_{TVA}}/n$ exist. Then,  $\hat{\eta}_{\widehat{S}_{TVA}}$, the post-Puffer LASSO selection OLS estimator of (\ref{eq: pooled_pruned})
performed on support $\widehat{S}_{TVA}$, is asymptotically distributed
\[
\sqrt{n}\left(\hat{\eta}_{\widehat{S}_{TVA}}-\eta_{\widehat{S}_{TVA}}^{0}\right)\rightsquigarrow\mathcal{N}\left(0,H^{-1}JH^{-1}\right).
\]
\end{prop}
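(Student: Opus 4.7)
The plan is a \emph{conditional-on-selection} argument. Let $A_n := \{\widehat{S}_{TVA} = S_{TVA}\}$ denote the correct-selection event. Proposition \ref{prop: support}, together with the tail bound already carried through the proof of Proposition \ref{prop: TVA_consistency}, gives
$$\Pr(A_n^c) \le 2\exp\left\{-\frac{n^{1-2\gamma}\lambda_n^2}{2\sigma^2} + \gamma \log n\right\} \longrightarrow 0$$
under Assumptions \ref{assu: design} and \ref{assu: penalty}. For any fixed finite-dimensional linear functional $c$, I would prove convergence of the characteristic function of $c'\sqrt{n}(\hat{\eta}_{\widehat{S}_{TVA}} - \eta^0_{\widehat{S}_{TVA}})$ by splitting
$$\mathbb{E}\bigl[e^{it\, c'\sqrt{n}(\hat{\eta}_{\widehat{S}_{TVA}} - \eta^0_{\widehat{S}_{TVA}})}\bigr] = \mathbb{E}\bigl[e^{it(\cdot)} \mathbf{1}_{A_n}\bigr] + \mathbb{E}\bigl[e^{it(\cdot)} \mathbf{1}_{A_n^c}\bigr],$$
where the second piece has modulus at most $\Pr(A_n^c) \to 0$ and is discarded.

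On $A_n$, the post-selection OLS estimator collapses to the OLS estimator on the true support, so that on this event
$$\hat{\eta}_{\widehat{S}_{TVA}} - \eta^0_{\widehat{S}_{TVA}} = \hat{\eta}_{\widehat{S}_{TVA}} - \eta^0_{S_{TVA}} = \bigl(Z_{S_{TVA}}'Z_{S_{TVA}}\bigr)^{-1} Z_{S_{TVA}}' \epsilon.$$
Under Assumption \ref{assu: noise} (Gaussian homoskedastic errors) this expression is \emph{exactly} mean-zero Gaussian with covariance $\sigma^{2}(Z_{S_{TVA}}'Z_{S_{TVA}})^{-1}$, so no asymptotic argument is needed for normality on this event. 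Scaling by $\sqrt{n}$ and invoking the assumed limits $n^{-1} Z_{S_{TVA}}' Z_{S_{TVA}} \to H$ and $n^{-1} Z_{S_{TVA}}' \Sigma Z_{S_{TVA}} \to J$ (with $\Sigma = \sigma^{2} I_n$, so that $J = \sigma^{2} H$ and the sandwich simplifies), the restricted characteristic function converges to $\exp(-\tfrac{1}{2} t^{2}\, c' H^{-1} J H^{-1} c)$ for every fixed $t$ and $c$. Combining the two pieces yields the claimed sandwich-covariance limit.

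The principal subtlety, and what I would spend the writing on, is twofold. First, since $K = O(n^{\gamma})$ grows with $n$, the random estimand $\eta^0_{\widehat{S}_{TVA}}$ lives in a space of changing dimension, so the statement must be interpreted as finite-dimensional marginal convergence along any fixed linear functional; this is what the downstream \cite{andrews2019inference} winner's-curse step actually consumes. Second, centering at the \emph{random} projection $\eta^0_{\widehat{S}_{TVA}}$ rather than at $\eta^0_{S_{TVA}}$ is what keeps the argument clean on $A_n^c$: by construction $\hat{\eta}_{\widehat{S}_{TVA}} - \eta^0_{\widehat{S}_{TVA}} = (Z_{\widehat{S}_{TVA}}'Z_{\widehat{S}_{TVA}})^{-1} Z_{\widehat{S}_{TVA}}' \epsilon$ on every sample path, so the sampling-noise representation is intact and the only defect on $A_n^c$ is that the covariance kernel may be wrong---but that defect is suppressed by $\Pr(A_n^c) \to 0$. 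This is the content of the paper's remark that ``the bias accumulated due to selecting the wrong support is slow relative to the rate of actually estimating the wrong support.'' The non-uniformity-in-the-parameter-space caveat of \cite{leeb2005model} is genuine, acknowledged in the footnote, and is precisely what the ``pointwise'' qualifier accommodates.
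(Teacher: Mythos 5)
Your proof is correct, and it shares the paper's basic skeleton---condition on the correct-selection event $\mathcal{E}=\{\widehat{S}_{TVA}=S_{TVA}\}$, use the exponential tail bound from Proposition \ref{prop: support} to make the complement negligible, and observe that on $\mathcal{E}$ the estimator collapses to oracle OLS---but the way you discharge the misselection event is genuinely different and more economical. The paper writes the estimator as an additive three-term decomposition, $\sqrt{n}(Z_{S_{TVA}}'Z_{S_{TVA}})^{-1}Z_{S_{TVA}}'\epsilon$ minus and plus indicator-weighted terms, and must then show the term $\mathbf{1}\{\mathcal{E}^c\}\sqrt{n}(Z_{\widehat{S}_{TVA}}'Z_{\widehat{S}_{TVA}})^{-1}Z_{\widehat{S}_{TVA}}'\epsilon$ vanishes; this forces it to bound $\bigl\Vert(Z_{\widehat{S}_{TVA}}'Z_{\widehat{S}_{TVA}})^{-1}Z_{\widehat{S}_{TVA}}'\epsilon\bigr\Vert_\infty$ uniformly over all misspecified supports by $O_p(Kn)$ and then check that $\Pr(\mathcal{E}^c)O_p(Kn^{3/2})=o_p(1)$. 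Your characteristic-function split makes that entire calculation unnecessary: $\lvert e^{it(\cdot)}\rvert=1$, so the bad-event contribution is killed by $\Pr(A_n^c)\to 0$ no matter how badly behaved the estimator is there. (In fact even in the paper's additive decomposition the magnitude bound is dispensable, since $\Pr(\lvert\mathbf{1}\{\mathcal{E}^c\}W_n\rvert>\delta)\le\Pr(\mathcal{E}^c)$ for any $W_n$, but your route makes this transparent.) A second real difference: you exploit the exact Gaussianity of $(Z_{S_{TVA}}'Z_{S_{TVA}})^{-1}Z_{S_{TVA}}'\epsilon$ under Assumption \ref{assu: noise} with a fixed design, so the only limiting operation left is convergence of the covariance $n(Z_{S_{TVA}}'Z_{S_{TVA}})^{-1}\to H^{-1}$ (and $J=\sigma^2H$ collapses the sandwich); the paper instead invokes a central limit theorem for a growing number of parameters via \cite{he2000parameters} and the condition $K^2\log K=o(n)$, which is the more robust route if one wanted to drop exact normality of the errors but is not needed under the stated assumptions. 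Your explicit framing of the conclusion as convergence of fixed finite-dimensional linear functionals is also a fair reading of what the growing-$K$ statement can mean and of what the downstream winner's-curse step uses; the paper leaves this implicit.
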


It is well-known that one cannot uniformly (over the parameter space) build post-selection asymptotic distributions  \citep{leeb2005model,leeb2008sparse}. This is the subject of much discussion of a larger literature on post-selection inference---interpretations of the post-estimation procedures and practical function \citep{berk2013valid,tibshirani2016exact,lee2016exact}. In our context, several remarks are worth making. First, our  claim is about pointwise inference, not uniformity over the parameter space. Second, we have nothing to say conditional on incorrect selection, hence the non-uniformity. Still, no matter what model is selected---even if an incorrect one---since in our setting the regressors are always orthogonal, there is some valid post-selection interpretation in the sense of \cite{berk2013valid}, but we do not characterize what occurs in the vanishing probability events. Third, as we recover the support with probability tending to one, and at an exponential rate, in a practical sense the non-uniformity occurs only for very small local alternatives.\footnote{We are grateful to Adam McCloskey for pointing this out.} Loosely, recall that the non-uniformity comes up when the probability of correct selection does not go to one, or along the sequence is local to the event of failed selection. Given the very high rate of correct selection (tending to one exponentially fast in $n$), these local alternatives must be exceedingly close to the true parameter (the sequence of alternatives converging to the true parameter at very fast rate in $n$). See analogous discussion in \cite{mccloskey2020asymptotically} and the discussion of (A.1) in that paper.

Indeed, consistent with the theoretical results, as we will show in Section \ref{subsec: properties_TVA}, the estimators look normal in practice indicating that the non-uniformity concerns are likely to not be large in at least many practical cases, in our specific setting. Further, there is an interesting subtlety that arises in our case in particular. In our  setting which concerns best policies, since the elements with the highest effects tend to be selected first, and because of orthogonality, in practice the large parameter estimates almost always perform well.

\subsubsection{Adjusting for the Winner's Curse}\label{subsub: wc_adjustment}

The next step is to apply the result of \cite{andrews2019inference}. We leverage
the results of their Proposition 9.\footnote{We thank Isaiah Andrews and Adam McCloskey for helpful discussions.}

We are interested in picking the best estimated policy:
\[
\hat\kappa^{\star}=\text{argmax}_{\kappa \in \widehat{S}_{TVA}}\hat{\eta}_{\widehat{S}_{TVA},\kappa}
\]
With this we have the estimate of the best estimated policy's effect $\hat{\eta}_{\widehat{S}_{TVA},\hat \kappa^\star}$. \cite{andrews2019inference} demonstrate that this naive estimate will be biased and  how to adjust for it.

Since all estimated policy effects, $\hat{\eta}_{\kappa}$ for $\kappa \in \widehat{S}_{TVA}$, are regression coefficients and therefore random variables, there is some chance in any sample that $\hat{\eta}_{\widehat{S}_{TVA},\kappa} > \hat{\eta}_{\widehat{S}_{TVA},\kappa'}$ even though $\eta^0_{\widehat{S}_{TVA},\kappa} < \eta^0_{\widehat{S}_{TVA},\kappa'}$ and therefore the ordinal ranking would be incorrect. For a fixed amount of noise in the regression model \eqref{eq: pooled_pruned}, if the policies were better separated --- so $\lvert \eta^0_{\widehat{S}_{TVA},\kappa} - \eta^0_{\widehat{S}_{TVA},\kappa'}\rvert $ were larger --- then the probability of incorrectly ordinally ranking the two policies $\kappa$ and $\kappa'$ would be smaller.

As we are doing asymptotic analysis and not exact finite-sample analysis, the asymptotic version of this finite-sample problem is when policy
effects differ by order $1/\sqrt{n}$ . That is, consider two policy
effects
\[
\eta_{S_{TVA},\kappa}^{0}=\eta_{S_{TVA},\kappa'}^{0}+\frac{r_{\kappa \kappa'}}{\sqrt{n}}
\]
with $\eta^0_{S_{TVA},\kappa'}$ well-separated from 0 and $r_{\kappa \kappa'}$ a constant
fixed in $n$. In this case, the asymptotic distribution of the difference
between the estimates of these two policy effects will be non-vanishing,
capturing the finite-sample problem.

In this setup, despite the fact that the two policies $\kappa$
and $\kappa'$ have effects that differ by a $\Theta\left(1/\sqrt{n}\right)$ term, they
are still well-separated from 0 and therefore satisfy the minimum
amplitude condition (Assumption \ref{assu: support}, relative to control). To see this, observe that if $\eta_{S_{TVA},\kappa'}^0 > c > 0$ then for at most all but finitely many $n$, $\eta_{S_{TVA},\kappa}^0 > c$ as well.

To satisfy our Assumption \ref{assu: support}, such policies $\kappa$, $\kappa'$ must either not be policy variants (i.e., have different treatment profiles), or, if they do have the same profile, not be too similar
in terms of dosage. If not, the difference between them, were they to cause the finite sample problem necessitating a winner's curse adjustment, would also contradict Assumption \ref{assu: support} and therefore affect the  LASSO estimation. This is formalized in the following lemma.

\begin{lemma}\label{lem: localalter}
Assume \ref{assu: admissible} and suppose $\kappa, \kappa' \in S_{TVA}$ are local alternatives, i.e., $\eta_{S_{TVA},\kappa}^{0}=\eta_{S_{TVA},\kappa'}^{0}+\frac{r_{\kappa \kappa'}}{\sqrt{n}}$ for some $r_{\kappa \kappa'}$ fixed in $n$. Then, for $\alpha^0$ to respect Assumption \ref{assu: support}, one of the following has to hold:

\begin{enumerate}
\item $P(\kappa) \neq P(\kappa')$ (i.e., $\kappa$ and $\kappa'$ have different treatment profiles \footnote{This is a slight abuse of notation, since $P(\cdot)$ was defined originally over treatment combinations, not pooled policies. So, $P(\cdot)$ here is the simply the well defined extension to the latter.}), or
\item $P(\kappa) = P(\kappa')$, and $\kappa$ and $\kappa'$ are nowhere adjacent in the Hasse diagram.\footnote{Formally, this is the condition that for any treatment combinations $k, k'$ such that $\kappa$ pools $k$ and $\kappa'$ pools $k'$, either $k$ and $k'$ are incomparable or there is a treatment combination $z$ pooled by neither $\kappa$ nor $\kappa'$ such that $\min\{k,k'\} < z < \max\{k,k'\}$.}
\end{enumerate}
These conditions are also sufficient to allow for local alternatives, in that for any $S_{TVA} \in \mathcal{P}_{\vert \Lambda}$, all pairs $\kappa, \kappa' \in S_{TVA}$ meeting condition (1) or (2) can be simultaneously made local alternatives by some choice of $\alpha^0$ satisfying Assumption \ref{assu: support}.
\end{lemma}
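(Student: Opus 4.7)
The plan is to split the proof into the necessity direction (conditions (1) or (2) are forced by the combination of local-alternative structure and Assumption \ref{assu: support}) and the sufficiency direction (both conditions accommodate local alternatives via explicit constructions of $\alpha^0$).

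For necessity I argue the contrapositive. Suppose that neither (1) nor (2) holds, so $P(\kappa)=P(\kappa')$ and there exist $k\in\kappa$, $k'\in\kappa'$ with $k<k'$ comparable and every strict intermediate $z$ (with $k<z<k'$) contained in $\kappa\cup\kappa'$. First I reduce to a Hasse-adjacent pair: walking up a chain $k=z_0<z_1<\cdots<z_T=k'$ whose steps are Hasse-covers, there must be an index $s$ at which $z_s\in\kappa$ and $z_{s+1}\in\kappa'$, and after relabeling I may assume $k\in\kappa$, $k'=k+e_i\in\kappa'$ with $k,k'$ Hasse-adjacent and in distinct pools. Next I apply the marginal decomposition \eqref{eq:alpha_beta_rel} to obtain
\[
\beta^0_{k'}-\beta^0_k=\sum_{w:\, w_i=k_i+1,\ w_j\le k_j\ \forall j\ne i,\ P(w)=P(k)}\alpha^0_w.
\]
The key structural step invokes Assumption \ref{assu: admissible} parts 3 and 4, together with the maximal pool construction of Appendix \ref{sec:pooling}: admissible pools within a profile are axis-aligned ``boxes,'' so that the only $w$ in the above sum lying on the support of $\alpha^0$ is $k'$ itself (the remaining $w<k'$ are in pool interiors, where the corresponding $\alpha^0_w$ vanish). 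Hence $\alpha^0_{k'}=\eta^0_{\kappa'}-\eta^0_{\kappa}=r_{\kappa\kappa'}/\sqrt n\to 0$, which violates the lower bound $|\alpha^0_{k'}|>c$ demanded by Assumption \ref{assu: support}.

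For sufficiency I give a constructive argument that handles all targeted pairs jointly. For case (1) pairs ($P(\kappa)\ne P(\kappa')$), \eqref{eq:alpha_beta_rel} shows that the $\alpha^0$ entries in distinct profiles contribute to disjoint sets of $\beta^0$'s, so I can independently set the $\alpha^0$ values in $P(\kappa)$ and $P(\kappa')$, fixing the base levels at $\Theta(1)$ and tuning a coefficient by $r_{\kappa\kappa'}/\sqrt n$, keeping all nonzero $\alpha^0$'s above $c$. For case (2) pairs ($P(\kappa)=P(\kappa')$ but nowhere adjacent), any pair of representatives $k\in\kappa$, $k'\in\kappa'$ is either incomparable or separated by a treatment combination $z$ in a third pool; the Möbius expansion of $\beta^0_{k'}-\beta^0_k$ therefore involves at least two marginals $\alpha^0_{w_1},\alpha^0_{w_2}$ on the support, allowing the $O(1/\sqrt n)$ difference to be produced by cancellation (e.g., $\alpha^0_{w_1}=a$, $\alpha^0_{w_2}=-a+r_{\kappa\kappa'}/\sqrt n$ for $a\gg c$), so each $|\alpha^0_{w_\ell}|\approx a>c$. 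Since the separating marginals for different non-adjacent pairs lie on distinct portions of the lattice, the constructions can be superimposed across all targeted pairs without conflict.

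The main obstacle is the structural claim underlying the necessity argument: that when $k\in\kappa$ and $k'=k+e_i\in\kappa'$ are Hasse-adjacent members of distinct admissible pools, the sole nonzero term in the Möbius expansion of $\beta^0_{k'}-\beta^0_k$ is $\alpha^0_{k'}$. This is where Assumption \ref{assu: admissible} parts 3 and 4 (in particular the parallelogram condition) do the real work, forcing the pool geometry to be box-shaped so that cancellations across multiple nonzero marginals cannot occur between adjacent same-profile pools. The sufficiency direction, by contrast, is essentially combinatorial bookkeeping once one sees that the $\alpha^0$ specification has enough degrees of freedom to realize the specified local-alternative pattern.
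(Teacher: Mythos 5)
Your overall strategy coincides with the paper's: necessity by reducing to a Hasse-adjacent pair of representatives of distinct pools and arguing that the $O(1/\sqrt n)$ gap must then appear as a single marginal, contradicting Assumption \ref{assu: support}; sufficiency by an explicit construction in which large ``in-between'' effects cancel to leave an $O(1/\sqrt n)$ residual. Your reduction to a covering pair (walking up a chain and locating the switch index) is more careful than the paper's, which skips it, and your sufficiency construction is essentially the paper's cancellation device, phrased at the level of marginals rather than of the intermediate pooled policies $\zeta_i$.

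The gap is exactly where you flag it, and your proposed fix does not close it. You assert that in the expansion $\beta^0_{k'}-\beta^0_k=\sum_{w}\alpha^0_w$, taken over the slab $\{w:\ w_i=k_i+1,\ w_j\le k_j\ \forall j\neq i,\ P(w)=P(k)\}$, the only support element is $k'$ itself because the remaining $w<k'$ ``are in pool interiors.'' That is false: a support marginal in the slab can be the minimal corner of a \emph{third} pool. Concretely, take $M=2$, $R=4$, the profile with both arms active, and support $\{[1,1],[2,1],[2,2]\}$ with $\alpha^0_{[1,1]}=1$, $\alpha^0_{[2,1]}=a$, $\alpha^0_{[2,2]}=-a+r/\sqrt n$ for large $a>c$. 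Assumption \ref{assu: support} holds, and the induced maximal admissible pooling is $\kappa=\{[1,1],[1,2],[1,3]\}$ with $\eta^0_\kappa=1$, $\zeta=\{[2,1],[3,1]\}$ with $\eta^0_\zeta=1+a$, and $\kappa'=\{[2,2],[2,3],[3,2],[3,3]\}$ with $\eta^0_{\kappa'}=1+r/\sqrt n$. Here $\kappa$ and $\kappa'$ are local alternatives, share a profile, and are adjacent at $[1,2]<[2,2]$, yet no single marginal is $O(1/\sqrt n)$: the slab for that adjacent pair contains two support elements, $[2,1]$ and $[2,2]$, whose contributions cancel. So the single-term reduction fails, and with it the necessity direction as you argue it---and, for that matter, as the paper argues it, since its one-line assertion that $\alpha_{\min\{k,k'\}}=r_{\kappa\kappa'}/\sqrt n$ (presumably $\max$ is intended) makes the same leap even more tersely. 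Note that your own case-(2) sufficiency construction is precisely this cancellation mechanism applied one Hasse step closer, which is a useful sanity check that the necessity step cannot be rescued without an additional hypothesis (e.g., a no-cancellation/genericity condition on sums of support marginals within a slab, or restating condition (2) at the level of marginals rather than adjacency of pools).
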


\begin{ass}
There are at most $q<\infty$, independent of $n$, pairs of local alternatives, i.e., pairs $\kappa, \kappa' \in S_{TVA}$ such that $\eta_{S_{TVA},\kappa }^{0}=\eta_{S_{TVA}, \kappa'}^{0}+\frac{r_{\kappa \kappa'}}{\sqrt{n}}$ where $r_{\kappa \kappa'}$ fixed in $n$. $\alpha^0$ respects Assumption \ref{assu: support}.\label{assu: finite_support}
\end{ass}

Under this assumption, there are still pairs of policies
that are distinct but may have similar effects. Since they might be selected as the first and second best in the data, may therefore exhibit finite-sample
bias due to the winner's curse.

Now define $X\left(\kappa\right):=\hat{\eta}_{\widehat{S}_{TVA},\kappa}$. From the arguments presented in the previous part of the paper $X\rightsquigarrow\mathcal{N}\left(\mu,\Omega\right)$, where $\mu$ can be written in terms of the mean of a reference policy $\kappa_0$, i.e., $\mu=  \eta^{0}_{S_{TVA},\kappa_0}+ c + \frac{r_{\kappa \kappa_0}}{\sqrt{n}}$, where $c$ is some constant of separation and $r_{\kappa \kappa_0}$ is either $0$ or a constant depending on whether $\kappa$ and $\kappa_0$ are local alternatives or not. \\

Before continuing, we adapt our setting to that of \cite{andrews2019inference}. The main text of the paper focuses on the case where the estimators in question, $X(\kappa)$ are \emph{exactly} jointly normally distributed. While two extensions are presented, one for a conditioning event such as model-selection (addressed in their Appendix A) and another for the case of asymptotic normality which is required for practical settings  such as regression (in their Appendix D), the paper does not formally work out the case with both issues present.

We have both in our setting, under Assumptions \ref{assu: admissible}-\ref{assu: finite_support}. We have a conditioning event ($\widehat{S}_{TVA} = S_{TVA}$) occurring with probability tending to one and we are in a regression setting with asymptotic normality. So while the theoretical properties of the estimator in our setting are highly plausible and coherent with our simulations below,  the extension to the nested case of model selection remains to be proven.  It is beyond the scope of our present paper here to nest both of their extensions, and we leave it for future work.

So, to continue, we assume that the distribution is exact. This allows us to focus on how local alternatives in our Hasse diagram  may impact the problem. Therefore, we assume
\[
X\sim\mathcal{N}\left(\mu,\Omega\right).
\]
Then we can {exactly} apply the results of Proposition 6 in \cite{andrews2019inference}, and
build the hybrid confidence set. In particular, we can construct the hybrid estimator based on chosen significance levels $\alpha, \beta$ such that $\beta < \alpha$. The hybrid estimator $\hat{\eta}^{hyb}_{\widehat{S}_{TVA},\hat \kappa^\star}$ is approximately median unbiased (with absolute median bias bounded by $\frac{\beta}{2}$)  with confidence intervals with coverage $\frac{1-\alpha}{1-\beta}$, conditional on $\eta^0_{\widehat{S}_{TVA},\hat \kappa}$ falling within a simultaneous confidence interval (with respect to the post-LASSO estimates $\hat{\eta}^0$) of level $1-\beta$.  A summary of the overall procedure is presented in Algorithm \ref{alg:TVA_overall}.

\begin{algorithm}[!h]
	\SetAlgoLined
	\SetKwInOut{Input}{input} \SetKwInOut{Output}{output}
	\begin{enumerate}
		\item Given treatment assignment matrix $T$, calculate the  treatment profile and marginal dosage intensity matrix $X$.
		\item Estimate $\widehat{S}_{\alpha}:=\left\{ j\in\left[K\right]:\ \left|\widetilde{\alpha}_{j}\right|\neq0\right\}$ by estimating  \eqref{eq: TVA} through a Puffer transformed LASSO.
		\item Calculate marginal effects support $\widehat{S}_{TVA}$ from $\widehat S_\alpha$ using Algorithm \ref{alg:pooling} in Appendix \ref{sec:pooling}.
		\item Estimate pooled and pruned treatment effects of unique (relevant) policies, $\hat{\eta}_{\widehat{S}_{TVA}}$, using regression \eqref{eq: pooled_pruned}.
		\item To estimate the best policy in $\widehat{S}_{TVA}$, select $\hat{\kappa}^\star = \argmax_{\kappa \in \widehat{S}_{TVA}} \hat{\eta}_{\widehat{S}_{TVA},\kappa}$ , construct the hybrid estimator  $\hat \eta_{\widehat{S}_{TVA},\hat \kappa^\star}^{\text{hyb}}$ with nominal size $\alpha$ and median bias tolerance $\beta/2$.
	\end{enumerate}
	\caption{Estimating Treatment Effects by Treatment Variant Aggregation}
	\label{alg:TVA_overall}
\end{algorithm}

\

Under the setting described above, it is important to note that even when wrong policies are selected the error is negligible.

\begin{rem} Conditional on correct support selection $\widehat{S}_{TVA} = S_{TVA}$, when there are local alternatives near the true best policy, the wrong policy may be selected with non-vanishing probability. (When there are no local alternatives around the true best policy, then with high probability the correct winner is always selected.) But the post-selection estimates of this wrong policy are $\Theta(1/\sqrt{n})$ different from the true best-policy effect.
\end{rem}


\section{Simulation Performance}\label{sec: simulations}

Here we run simulations in the environment described in Section \ref{sec:Estimation} -- namely, when a sparse set of policies have meaningful and meaningfully different impacts. In Section \ref{sec: performance_TVA} we show that TVA has strong performance in this environment.  In Section \ref{subsec: alternatives} we show it outperforms several other standard approaches; the relative deficiencies of other estimators also highlight the features that give TVA its edge. Finally, in Section \ref{subsec: sparisty robustness} we relax sparsity and lower bounded marginal effect sizes in the environment and show that TVA is still a strong candidate for these settings.

\subsection{Performance of TVA}\label{sec: performance_TVA}

Below, we describe the simulation setup and performance indicators. The idea is to generate simulated design matrices from marginal specifications (\ref{eq: TVA}) that resemble the data, score these on certain metrics, and aggregate these scores into measures of performance for sample size $n$.

\subsubsection{Simulation Setup}\label{sec: simulation_setup}

Throughout our simulations we use the following common setup:
\begin{enumerate}
\item Fix $R = (5,5,3), M = 3$ and $\sigma:= \sqrt{var(\epsilon)} = 2.3$: parameters are chosen to loosely mimic our experiment where $3$ treatment arms have asymmetric intensities leading to $75$ unique policies and where $\sigma$ is chosen such that the $R^2$ of the post-LASSO regression matches the experiment for a similar sample size.

\item The simulation results are plots of performance $\hat{m}(n)$ against sample size $n$ where $n$ ranges between 1,000 and 10,000.\footnote{For some computationally intensive simulations $n$ is logarithmically spaced.}
\item These scores $\hat{m}(n)$ are generically computed as follows.
\begin{enumerate}
\item  A set $\mathcal{C}$ of true supports of the marginal specification \eqref{eq: TVA} is randomly chosen. Each member $S_\alpha^i\in \mathcal{C}$ is a particular support or ``configuration" and each configuration has fixed support size $|S_\alpha^i| = M$. Specifically, each configuration $\mathcal{C}$ is constructed by randomly sampling $M$ covariates of $X$. Furthermore, if $S_\alpha^i =(k_1, k_2,..,k_M)$ in some given order, we assign coefficients $\alpha_{k_j} = 1 + 4\cdot\frac{j-1}{M-1}$. That is, these nonzero coefficients are linearly spaced between $1$ and $5$. Thus each configuration fully specifies the set of coefficients $\alpha$ for \eqref{eq: TVA}.
\item For each $S_\alpha^i \in \mathcal{C}$, a set $\mathcal{S}_{S_\alpha^i} (n)$ of simulations (design matrices) is generated based on the coefficients specified by the configuration, and the Gaussian noise, with sample size $n$. For each simulation $\hat{s}(n) \in \mathcal{S}_{S_\alpha^i}(n)$, it is scored by a metric $m(\hat{s}(n))$ that will be specified.
\item These scores are aggregated over simulations $\mathcal{S}_{S_\alpha^i}(n)$, and then aggregated again over configurations $\mathcal{C}$, to produce the aggregated performance score $\hat{m}(n)$.
\end{enumerate}
\end{enumerate}

\subsubsection{Performance Measures}\label{sec: performance_measures}

Denoting by $\widehat{S}^i_{\alpha} (\hat{s}(n))$ the model selection estimator for $S_\alpha^i$ for simulation $\hat{s}(n)$, we use the following performance metrics throughout our simulations:\\
\\
\textbf{Support selection accuracy:}
\[m(\hat{s}(n)) := \frac{|\widehat{S}^i_{\alpha} (\hat{s}(n)) \cap S^i_{\alpha}|}{|\widehat{S}^i_{\alpha} (\hat{s}(n)) \cup S^i_{\alpha}|}.\]
This is a value between $0$ and $1$ that increases with support coverage, and is $1$ if and only if the support is correctly selected.  To construct the  aggregated metric $\hat{m}(n)$ it is averaged over the simulations per configuration, and then averaged again over configurations.\\
\\
\textbf{``Some" best policy inclusion accuracy:}
\[m(\hat{s}(n))= \begin{cases}
1 & \text{ if } \hat \kappa^{i\star}(\hat{s}(n)) \cap \kappa^{i\star} \neq \emptyset \\
0 & \text{ otherwise }
\end{cases}\]
where $\kappa^{i\star}=\text{argmax}_{\kappa\in S_{TVA}^i}\eta_{S_{TVA},\kappa}$ denotes the true best pooled policy in the marginal effects support $S_{TVA}^i$ (uniquely determined from $S_\alpha^i$). This measure is again averaged over simulations per configuration, and then averaged over configurations. The final metric therefore gives the share of simulations per $n$ where at least one true best policy was pooled into the estimated best pooled policy.\\
\\
\textbf{Minimum dosage best policy inclusion accuracy:}
\[m(\hat{s}(n))= \begin{cases}
1 & \text{ if } k^{i\star\text{min}} \in \hat \kappa^{i \star}(\hat{s}(n)) \\
0 & \text{ otherwise }
\end{cases}\]
where $k^{i\star\text{min}}$ denotes the true minimum dosage best policy.\footnote{The ``intersection'' and ``inclusion'' operators for the best policy inclusion measures are to be understood in the following way: suppose the true best policy $\kappa^{i\star} \in S_{TVA}^i$ pools together $m$ policies as per $S_\alpha^i$ that we can organize into a set $S_1 = \{k^{i\star}_1, \cdots,k^{i\star\text{min}},\cdots,k^{i\star}_m\}$. Equivalently we organize into $S_2$ the $n$ policies composing the estimated best pool as per $\widehat S_\alpha^i(n)$. Then $\hat \kappa^{i \star}(\hat{s}(n))$ stands for $S_2$ and $\kappa^{i\star}$ for $S_1$.} Once aggregated this measure captures the share of simulations, per $n$ where the minimum dosage best policy was included in the estimated best pool.\\
\\
\textbf{Mean squared error (of best policy effect):} For each simulation $\hat{s}(n)$,  the estimated best policy treatment effect is scored by its error with respect to the true treatment effect:
\[m(\hat{s}(n)) :=  \hat \eta_{\widehat{S}_{TVA},\hat \kappa^\star}^{\text{hyb}} -   \eta_{S_{TVA}, \kappa^\star} .\]
And thus $\hat{m}(n)$ is simply the estimated MSE:
\[\hat{m}(n) := \frac{1}{|\mathcal{C}|}\sum_{\mathcal{C}} \frac{1}{|\mathcal{S}_{S_\alpha^i} (n)|} \sum_{\mathcal{S}_{S_\alpha^i} (n)} m^2(\hat{s}(n)). \]

\subsubsection{Properties of TVA} \label{subsec: properties_TVA}

Simulation performance of TVA attests to its main theoretical properties: support consistency, best policy estimation consistency, and normally distributed coefficient estimates.

Figure \ref{fig:puffer_vs_ols} depicts these results.  For TVA consistency, consider the blue and green performance points in panels B and D. Panel B shows that even for low $n$, TVA includes some of the best policies in $\widehat{S}_\alpha$ as well as rapidly (in $n$) and consistently including the \emph{minimum} dosage best policy, that is of particular interest to the policymaker. Because TVA pools policies that comprise the best policy, redundancies to winner's curse attenuations are minimized; Panel D shows that MSE of the best policy starts small and quickly falls to 0 (blue dots). Finally, besides best policy consistency, a more global concern is whether the TVA estimator is support consistent in the first place. This is explicitly verified in further simulations of our Online Appendix (e.g., Figure \ref{fig:puffer_vs_lasso2}). As elaborated in Section \ref{subsec: alternatives}, this overall performance is in marked contrast to other estimators.

Besides being consistent, TVA estimates are also distributed asymptotically normally,, which permits reliable inference in the usual manner.  This is demonstrated in Figure \ref{fig:puffer_vs_ols},  Panel A which depicts a close match between the empirical CDF of standardized marginal policy estimates $\frac{\hat\eta_\kappa - \eta_\kappa}{se(\hat\eta_\kappa)}$ (for $\kappa \in \widehat{S}_{TVA}$, entered as a mixture distribution with equal weights) and the CDF of a standard normal distribution for a series of sample sizes. First, abstracting away from model misspecification complications, consider the case of a large sample size ($n=10,000$) where the empirical support of TVA is always correct (blue empirical CDF). The almost perfect match to the theoretical CDF speaks to normally distributed estimates of all $M=3$ components of the mixture distribution.\footnote{For this exercise we use a single randomly chosen configuration $\mathcal{C}$ that determines the coefficients $\eta_\kappa$, $\kappa \in \{1,\cdots M\}$ against which to compare our estimates.}  The cases where $n < 10,000$ relax insistence on correct model specification and demonstrate that this is not an artifact of a very large sample size. Here, $\eta_{\widehat{S}_{TVA},\kappa}$ is the population pseudo-parameter in the (potentially) mispecified model. Even for rather moderate sample sizes ($n = 3000$), we see normally distributed estimates.

\subsection{Alternative Estimators}\label{subsec: alternatives}

In what follows, we stick to the simulation framework presented in Section \ref{sec: performance_TVA}. We primarily make the case for the strong performance of TVA relative to its most straightforward alternative, a direct application of OLS, by comparing the performance of both estimators on a range of measures outlined in Section \ref{sec: performance_TVA}. Looking at the various measures gives insight into what gives TVA its edge.  In an extended simulation section of our Online Appendix \ref{subsec: alternatives_appendix} we provide comparisons for further LASSO-based alternative estimation strategies, the results of which are summarized below.

\subsubsection{Direct OLS}

An intuitive route for inference in this setting is estimating the unique policy specification (\ref{eq:unique_policy}) using only an application of OLS and nothing else (a strategy we call ``direct OLS"). Since this is a fully saturated regression, this estimator has no theoretical issues with convergence nor with interpretation. Rather, this is about performance in the finite sample in the environment we describe. Most obviously, there is a loss of power in estimating, separately, the impact of 75 distinct interventions. Moreover, on the question of selecting and estimating the best policy effects, it faces the following inadequacies: (1) it fails at consistently identifying the minimum dosage best policy (2) the estimates of best policy exhibit a stronger winner's curse (3) the attenuations from applying \cite{andrews2019inference} are large (and relative to TVA, overly severe).

Figure \ref{fig:puffer_vs_ols}, panels B-D, which compare TVA and OLS on the relevant metrics, documents patterns (1)-(3) for best policy estimation. Panel B plots inclusion accuracies as a function of sample size, and speaks to pattern (1). Direct OLS (orange) does almost as well as TVA in estimating as best policy \emph{some} policy that is part of the true best pooled policy, but it effectively picks the dosage at random; thus selection of the \emph{minimum} dosage hovers at around $36.67\%$.

Panels C, and D document patterns (2) and (3). The direct OLS results are in orange. Panel C exhibits a strong winner's curse, which is to be expected in a situation with numerous candidates for ``best policy'', since the odds that a particularly large shock was drawn and thrust one to the top is quite high.  The attenuation resulting from the application of \cite{andrews2019inference}, as a result of this neck-and-neck competition, are also large; this is verified in Panel C, and in fact shrinkage goes up to 89\% in many individual simulations. These are actually over-attenuated; indeed, as the Panel D plots of attenuated best policy MSE over sample size show, a large MSE (of 0.52) persists even for large $n$.\footnote{Recall that while \cite{andrews2019inference} estimators are consistent, this assumes some local separation of parameters, which is not guaranteed in these neck-and-neck competitions.} In  contrast in all these panels, the winner's curse attenuations for TVA are much more modest, because of reduction of the number of competing policies and therefore greater separation between them.

Besides the specific issue of best policy estimation, direct OLS has low power. This is depicted in Figure \ref{fig:power_comparison}, where simulated OLS estimates of all the unique policies \eqref{eq:unique_policy} contrast with the pruned and pooled estimation \eqref{eq: pooled_pruned}. As expected, the estimated effects of the pooled policies are less dispersed  (panel A). In this visualization, we deliberately choose a configuration where the effects of different policies are similar, so that these histograms overlap. This makes the task of discovering the correct way to pool an interesting challenge  (further exemplified in the direct OLS estimates of a single simulated draw of data in Panel B, where 95\%-confidence intervals of policies from distinct pools strongly overlap), and it highlights the need of a disciplined procedure.

\subsubsection{Naive LASSO}\label{subsubsec: alternatives}

There are two ways we could ``naively" apply LASSO. The first is to disregard pooling, and apply LASSO on the unique policy specification (\ref{eq:unique_policy}) because sparse dosages might also mean a sparse set of policies. While there is no theoretical issue with this procedure in terms of model consistency, using this for policy estimation leads to much the same performance limitations as direct OLS with regards to best policy estimation, namely a persistently high best policy MSE stemming from overly severe correction from \cite{andrews2019inference}'s winner's curse adjustment. Figure \ref{fig:puffer_vs_lasso1} from our Online Appendix,  contrasts this ``No pooling, only pruning" version of LASSO to TVA on best policy estimation and documents these patterns in detail.

The second way to ``naively" apply LASSO is to consider both pooling and pruning as important, but adopt a sign inconsistent model selection procedure by applying LASSO directly on (\ref{eq: TVA}) without a Puffer transformation. As expected, simulations attest to inconsistent support selection though MSE on the best policy is comparable to TVA. Importantly, it fails to select the minimum dosage best policy with substantial probability relative to TVA (refer to Figure \ref{fig:puffer_vs_lasso2} in Online Appendix \ref{sec:simulations_appendix}, and related discussion for more details).

\subsubsection{Debiased LASSO}

Because we are interested in high dimensional inference\footnote{Albeit, we are still in a $K< n$ regime (mechanically since the number of treatments cannot exceed the number of units), sometimes called low dimensional with diverging number of parameters.} one alternative to a two step process of model selection and inference is the so-called ``debiased LASSO'' (\cite{zhang2014confidence}, \cite{javanmard2014confidence}, \cite{javanmard2018debiasing}, \cite{van2019asymptotic}). The basic idea is that since the downward bias in LASSO is estimable, we can reverse it.  A feature, however, is that these debiased coefficients are almost surely never exactly zero, so that there is no question of sparsity. We thus only need to consider applying debiased LASSO to (\ref{eq:unique_policy}).

In Figure \ref{fig:puffer_vs_dlasso} of our Online Appendix, we show that the debiased LASSO procedure suffers from the same limitations as direct OLS estimation, especially with regards to best policy estimation (high MSE due to over-attenuation of the winner's curse).

\subsubsection{``Off the Shelf" Bayesian approaches: Spike and Slab LASSO \citep{nie2022bayesian}}

The rules governing admissible pooling encodes the econometrician's prior about the environment. This raises the possibility of a Bayesian framework. Indeed, LASSO estimates have a Bayesian interpretation in terms of Laplace priors. One can ask whether a more sophisticated, ``explicitly" Bayesian approach can address our final objectives. This paradigmatically different route is the topic of future work. In Section \ref{subsubsec:bbssl} of our Online Appendix, we just show that``off the shelf" Bayesian approaches are unlikely to help. In particular, we show that a direct application of spike and slab formulations -- the most intuitively relevant method -- underperforms relative to our TVA procedure with a performance pattern similar to that of applying Naive LASSO to the marginal specification  \eqref{eq: TVA}.

\subsection{Performance Under Five Sparsity Relaxation Regimes}\label{subsec: sparisty robustness}

Our main theoretical guarantees in Section \ref{sec:Estimation} hold in an environment with exact sparsity and with marginal effect sizes uniformly bounded away from 0. Here we explore \emph{practical} performance relaxing this in several plausible ways. Although performance of TVA suffers, it is still strong; moreover,  as elaborated in section \ref{subsec: sparsity_appendix} of our Online Appendix, TVA does better than the next best practical alternative of applying naive LASSO to the marginal specification \eqref{eq: TVA}.

We consider five regimes of sparsity and effect size relaxations. Although the support configurations are no longer necessarily of cardinality $M$, they are still randomly chosen as in Section \ref{sec: performance_TVA}. In Regimes 1 and 2 we relax exact zeros in the non-primary marginals to small effect sizes; these are either rapidly diminishing as $\Theta(\frac{1}{n})$ (Regime 1) or moderately diminishing as $\Theta(\frac{1}{\sqrt{n}})$ (Regime 2).\footnote{Note that only a diminishing rate retains a threat of misspecified model selection at any $n$} In Regime 3 we further relax sparsity in the first regime by expanding the true support to include marginal effects of both large and medium sizes. In Regime 4 we relax the lower bound on marginal effect sizes for the primary marginals, diminishing at a rate $\Theta(\frac{1}{n^{0.2}})$ between moderate and rapid. In the fifth regime we further relax sparsity in Regime 4 by expanding the support with rapidly diminishing coefficients. A summary of the regime configurations is described below:

\begin{enumerate}
\item \textbf{Regime 1}: $M$ constant marginal effect sizes in $[1,5]$ \& $M$ rapidly diminishing remaining marginal effect sizes in $[1,5]/n$.
\item \textbf{Regime 2}: $M$ constant marginals in $[1,5]$ \& $M$ moderately diminishing remaining marginals ($[1,5]/\sqrt{n}$).
\item \textbf{Regime 3}: $M$ large constant marginals in $[5,10]$, $M$ medium marginals in $[1,2]$ \& $M$ rapidly diminishing remaining marginals in $[1,5]/n$.
\item \textbf{Regime 4}: $M$ decreasing marginals in $[1,5]/n^{0.2}$ (and zero marginals everywhere else).
\item \textbf{Regime 5}: $M$ decreasing marginals in $[1,5]/n^{0.2}$ \&  $M$ moderately diminishing remaining marginals  in $[1,5]/\sqrt{n}$.
\end{enumerate}

The main finding is that support accuracy of TVA is generally strong. Even in the case of model misspecification, the MSE of the best policy is still low for moderate sample size. Furthermore its distinct advantage relative to alternatives with regards to best policy estimation --- the much more reliable selection of the minimum dosage best policy --- remains equally strong.

Figure \ref{fig:regime_plots}, panel A speaks to the first pattern: even if support accuracy suffers from relaxing sparsity requirements performance remains generally high. For regimes 1-3 for example, support accuracy converges to 100\% albeit more slowly than in the exactly sparse environment (and convergence in R2 is, as expected, slower than in R1).  For regimes 4 and 5, even though TVA is support inconsistent, this does not imply a higher MSE of the best policy demonstrating that model misspecification is not generally threatening with regards to this final objective.  Panel B shows this very clearly where MSE is steadily decreasing and comparable across all regimes.
Performance on some best policy selection --- the easier task --- seems also particularly unaffected by the sparsity relaxations, with accuracy ranging between 80\%-90\% (panel C). Finally TVA's distinctive advantage over all alternatives explored in Section \ref{subsec: alternatives} is the reliable selection of the minimum dosage best policy. Performance on minimum best policy selection remains generally strong across regimes (panel D) with a steady increase as sample size grows. Notably the minimum best inclusion rate is above 90\% across all regimes for even moderate sample sizes ($n = 3000$). \\

Taken together, these simulation results make the case that TVA is both a powerful and robust candidate for our setting.

\section{Context, Experimental Design, and Data}\label{sec:context}

We now apply this method to a large-scale experiment conducted in collaboration with the government of Haryana to help them select the most effective policy bundle to stimulate demand for immunization.
The objective of the experiment was explicitly to select the best policy to scale up, after one year-long experiment with 75 potentially distinct treatments, making it an excellent setting for this method.

\subsection{Context}

This study took place in Haryana, a populous state in North India, bordering New Delhi.
In India, a child between 12 and 23 months is considered  to be fully immunized if he or she receives one dose of BCG, three doses of Oral Polio Vaccine (OPV), three doses of DPT,  
and at least one dose of a measles vaccination.
India is one of the countries  where immunization rates are puzzlingly low. According to the 2015-2016 National Family Health Survey, only 62\% of children were fully immunized \citep{nfhs4haryana}.
This is not due to lack of access to vaccines or health personnel. The Universal Immunization Program (UIP) provides all vaccines free of cost to beneficiaries, and vaccines are delivered in rural areas--even in the most remote villages. Immunization services have made considerable progress over the past few years and are much more reliably available than they used to be. During the course of our study we found
that the monthly scheduled immunization session were almost always run in each village.

The central node of the UIP is the Primary Health Centre (PHC). PHCs are health facilities that provide health services to an average of 25 rural and semi-urban villages with about 500 households each. Under each PHC, there are approximately four sub-centres (SCs). Vaccines are stored and transported from the PHCs to either sub-centers or villages on an appointed day each month, where there is a mobile clinic where the Auxiliary Nurse Midwife (ANM) administers vaccines to all eligible children. A local health worker, the Accredited Social Health Activist (ASHA), is meant to help map eligible households, inform and motivate parents, and take them to the immunization session. She receives a small fee for each shot given to a child in her village.

Despite this elaborate infrastructure, immunization rates are particularly low in North India, especially in Haryana.  According to the District Level Household and Facility Survey, the full immunization coverage among 12-23 months-old children in Haryana fell from 60\% in 2007-08 to 52.1\% in 2012-13  \citep{dlhs4haryana}.

In the district where we carried out the study, a baseline study revealed even lower immunization rates (the seven districts that were selected were chosen because they have low immunization). About 86\% of the children (aged 12-23 months) had received at least three vaccines. However, the share of children whose parents had reported they received the measles vaccine (the last in the sequence) was 39\%, and only 19.4\% had received the vaccine before the age of 15 months, while the full sequence is supposed to be completed in one year.

After several years focused on improving the supply of immunization services, the government of Haryana was interested in testing out strategies to improve household take-up of
immunization, and in particular, their completion of the full immunization schedule. With support from USAID and the Gates Foundation, they entered into a partnership with J-PAL
to test out different interventions. The final objective was to pick the best policy possibly scale up throughout the state.

Our study took place in seven districts where immunization was particularly low. In four districts, the full immunization rate in a cohort of children older than the ones we consider, was below 40\%, as reported by parents (which is likely a large overestimate of the actual immunization rate, given that children get other kinds of shots and parents often find it hard to distinguish between them, as noted in   \cite{report3ie2021}). Together, the districts cover a population of more than 8 million (8,280,591) in more than 2360 villages, served by 140 PHCs and 755 SCs. The study covered all these PHCs and SCs, and are thus fully representative of the seven districts. Given the scale of the project, our first step was to build a platform to keep a record of all immunizations. Sana, an MIT-based health technology group, built a simple m-health application that the ANMs used
to register and record information about every child who attended at least one camp in the sample villages. Children were given a unique ID that made it possible to track them across visits and centers. Overall, 295,038 unique children were recorded in the system, and 471,608 vaccines were administered. Data from this administrative database is our main source of information on immunization and we discuss its reliability below.  More details on the implementation are provided in the publicly available progress report \citep{report3ie2021}.

\subsection{Interventions}

The study evaluates the impact of several nudges on the demand for immunization: small incentives, targeted reminders, and local ambassadors.

\subsubsection{Incentives}

When households are indifferent or have a propensity to procrastinate, small incentives can offset any short term cost of getting to an immunization camp and lead to a large effect on immunization. \cite{banerjee2010improving} shows that small incentives for immunization in  Rajasthan (a bag of lentils for each shot and a set of plates for completing the course) led to a large increase in the rates of immunization. Similar results were subsequently obtained in other countries, suggesting that incentives tend to be effective {\citep{bassani2013financial,gibson2017mobile}}. In the Indian health system, households receive incentives for a number of health behavior, including hospital delivery, pre-natal care visits, and, in some states (like Tamil Nadu), immunization.

The Haryana government was interested in experimenting with incentives. The incentives that were chosen were mobile recharges for pre-paid phones, which can be done cheaply and reliably on a very large scale. Almost all families have at least one phone and the overwhelming majority of the phones are pre-paid. Mobile phone credits are of uniform quality and fixed price, which greatly simplify procurement and delivery.

A small value of mobile phone credit  was given to the caregivers each time they brought their child to get immunized. Any child under the age of 12 months receiving one of the five eligible shots (i.e., BCG, Penta-1, Penta-2, Penta-3, or Measles-1), was considered eligible for the incentives intervention. Mobile recharges were delivered directly to the caregivers' phone number that they provided at the immunization camp. Seventy (out of the 140) PHCs were randomly selected to receive the incentives treatment.

In \cite{banerjee2010improving}, only one reward schedule was experimented with.  It involved a flat reward for each shot plus a set of plates for completing the immunization program.
This left many important policy questions pending: does the level of incentive make a difference? If not, cheaper incentives could be used. Should the level of rewards increase with each immunization to offset the propensity of the household to drop
out later in the program?

To answer these questions, we varied the level of incentives and whether they increased over the course of the immunization program.
The randomization was carried out within each PHC, at the subcenter level. Depending on which sub-center the caregiver fell under, she would either receive a:

\begin{enumerate}
	\item Flat incentive, high: INR 90 (\$1.34 at the 2016 exchange rate, \$4.50 at PPP) per immunization (INR 450  total).
	\item	Sloped incentive, high: INR 50 for each of the first three immunizations, 100 for the fourth, 200 for the fifth (INR 450 total).
	\item	Flat incentive, low: INR 50 per payment (INR 250 total).
	\item	Sloped incentive, low: INR 10  for each of the first three immunizations, 60 for the fourth, 160 for the fifth (INR 250 total).
\end{enumerate}

Even the high incentive levels here are small and therefore implementable at scale, but they still constitute a non-trivial amount for the households. The ``high'' incentive level was chosen to be roughly equivalent to the level of incentive
chosen in the Rajasthan study: INR 90 was roughly the cost of a kilogram of lentils in Haryana during our study period. The low level was meant to be half of that (rounded to INR 50 since the vendor could not deliver recharges that
were not multiple of 10). This was meaningful to the households: INR 50 corresponds to 100 minutes of talk time on average.
The provision of incentives was linked to each vaccine. If a child missed a dose, for example Penta-1, but then came for the next vaccine (in this case, measles), they would receive both Penta-1 and measles and get the incentives for both at once, as per the schedule described above.

To diffuse the information on incentives, posters were provided to ANMs, who were asked to put them up when they set up for each immunization session.
The  village ASHAs and the ANMs were also supposed to inform potential beneficiaries of the incentive structure and amount in the relevant villages. However, there was no systematic large scale information campaign, and it is possible that not everybody was aware of the presence or the schedule of the incentives, particularly if they had never gone to a camp.

\subsubsection{Reminders}

Another frequently proposed method to increase immunization is to send text message reminders to parents. Busy parents have limited attention and reminders can put the immunization back at the ``top of the mind.'' Moreover, parents do not necessarily understand that the last immunization in the schedule (measles) is for a different disease and is at least as important as the previous ones. SMSs are also extremely cheap and easy to administer in a population with widespread access to cell phones. Even if not everyone gets the message, the diffusion may be reinforced by social learning, leading to  faster adoption.\footnote{See, e.g., \cite*{rogers1995,krackhardt1996,kempekt2003,jackson2008b,iyengarvv2010,hinzetal2011,katonazs2011,jacksony2011,banerjeecdj2013,blochjt2016,jackson2017,akbarpour}.}

The potential for SMS reminders is recognized in India. The Indian Academy of Pediatrics rolled out a program in which parents could enroll to get reminders by providing their cell phone number and their child's date of birth.
Supported by the Government of India, the platform planned to enroll 20 million children by the end of 2020.

Indeed, text messages have already been shown to be effective to increase immunization
in some contexts. For example, a systematic review of five RCTs finds that reminders for immunization increase take up on average \citep{mekonnen2019effect}.
However, it remains true that text messages could  have no effect or even backfire if parents do not understand the information provided and feel they have no one to ask \citep{banerjee2018less}.
Targeted text and voice call reminders were sent to the caregivers to remind them that their child was due to receive a specific shot. To identify any potential spillover to the rest of the network, this intervention followed a two step randomization. First, we randomized the study sub-centers into three groups: no reminders, 33\% reminders, and 66\% reminders. Second, after their first visit to that sub-center, children's families were randomly assigned to either get the reminder or not, with a probability corresponding to the treatment group for their sub-centers. The children were assigned to receive/not receive reminders on a rolling basis.

The following text reminders were sent to the beneficiaries eligible to receive a reminder. In addition, to make sure that the message would reach illiterate parents, the same message was sent through an automated voice call.

\begin{enumerate}
	\item Reminders in incentive-treatment PHCs:
	\begin{quote}
	\small{``Hello! It is time to get the <<name of vaccine>> vaccine administered for your child <<name>>. Please visit your nearest immunization camp to get this vaccine and protect your child from diseases. You will receive mobile credit worth <<range for slope or fixed amount for flat>>  as a reward for immunizing your child.''}
	\end{quote}
	
	\item Reminders in incentive-control PHCs:
	\begin{quote}
	\small{``Hello! It is time to get the <<name of vaccine>> vaccine administered for your child. Please visit your nearest immunization camp to get this vaccine and protect your child from diseases.''}
	\end{quote}
\end{enumerate}

\subsubsection{The Immunization Ambassador: Network-Based Seeding}

The goal of the  ambassador  intervention was to leverage the social network to spread information. The objective was to identify influential individuals who could relay to villagers both the information on the existence of the immunization camps, and, wherever relevant, the information that incentives were available.
Existing evidence shows that people who have a high centrality in a network (e.g., they have many friends who themselves have many friends) are able to spread information more widely in the community \citep{katzl1955,aralw2012,banerjeecdj2013,beaman2018can,banerjeeusing}.
Further,  members in the social network are able to easily identify individuals, whom we call information hubs, who are the best placed to diffuse information as a result of their centrality as well other personal characteristics (social mindedness, garrulousness, etc.)\citep{banerjeeusing}.

This intervention took place in a subset of 915 villages where we collected a full census of the population (see below for data sources).
Seventeen respondents in each village were randomly sampled from the census to participate in the survey, and were asked
to identify people with certain characteristics (more about those later). Within each village, the six people nominated most often by the group of 17 were  recruited
to be ambassadors for the program. If they agreed, a short survey was conducted to collect some demographic variables, and they were then formally asked to become program ambassadors. Specifically, they agreed to receive
one text message and one voice call every month, and to relay it to their friends. In villages without incentives, the text message was a bland reminder of the value of immunization. In villages with incentives, the text message further
reminded the ambassador (and hence potentially their contacts) that there was an incentive for immunization.

While our previous research had shown that villagers can reliably identify information hubs, a pertinent question for policy unanswered by previous work is whether the information hubs can effectively transmit messages about health, where trust in the messengers may be more important than in the case of more commercial messages.

There were four groups of ambassador villages, which varied in the type of people that the 17 surveyed households were asked to identify. The full text is in Appendix \ref{sec:hub_questions}.

\begin{enumerate}
	\item \emph{Random} seeds: In this treatment arm, we did not survey villages. We picked six ambassadors randomly from the census.
	
	\item 	\emph{Information hub} 	seed: Respondents were asked to identify who is good at relaying information.

	\item \emph{Trusted} seed: Respondents were asked to identify those who are generally trusted to provide good advice about health or agricultural questions

	\item \emph{Trusted information hub} seed:
	Respondents were asked to identify who is both trusted and good at transmitting information

\end{enumerate}

\subsection{Experimental Design}

The government was interested in selecting the best policy, or bundle of policies, for possible future scale up.
We were agnostic as to the relative merits of the many available variants. For example, we did not know
whether the incentive level was going to be important, nor did we know if the villagers would be able to identify trusted people effectively and hence, whether the intervention to
select trusted people as ambassadors would work. However,  we believed that there could be significant interactions between different policies. For example, our prior was that the
ambassador intervention was going to work more effectively in villages with incentives, because the message to diffuse was clear.  We therefore implemented a completely cross-randomized design, as illustrated in our Online Appendix Figure \ref{fig:experimental_design}.

We started with 2,360 villages, covered by 140 PHCs, and 755 sub-centers. The 140 PHCs were randomly divided into 70 incentives PHCs, and 70 no incentives PHCs (stratifying by district).
Within the 70 incentives PHCs, we randomly selected the sub-centers to be allocated to each of the four incentive sub-treatment arms. Finally, we only had resources to conduct a census and a baseline exercise
in about 900 villages. We selected about half of the villages from the coverage area of each subcenter, after excluding the smallest villages. Only among the 915 villages did we conduct the ambassador
randomization: after stratifying by sub-center, we randomly allocated the 915 villages to the control group (no ambassador) or one of the four ambassador treatment groups.

In total, we had one control group, four types of incentives interventions, four types of ambassador interventions, and two types of SMS interventions. Since they were fully cross-randomized (in the sample of 915 villages), we had 75 potential policies, which is large  even in relation to our relatively large sample size.  Our goal is to identify the most effective and cost-effective policies
and to provide externally valid estimates of the best policy's impact, after accounting for the winner's curse problem. Further, we want like to identify other
effective policies  and answer the question of whether different variants of the policy had the same or different impacts.

\subsection{Data}

\subsubsection{Census and Baseline}

In the absence of a comprehensive sampling frame, we conducted a mapping and census exercise across 915 villages falling within the 140 sample PHCs. To conduct the census, we visited 328,058 households, of which 62,548 households satisfied our eligibility criterion (children aged 12 to 18 months). These exercises were carried out between May and November 2015. The data from the census was used to sample eligible households for a baseline survey. We also used the census to sample the respondent of the ambassador identification survey (and to sample the ambassadors in the ``random seed'' villages). Around 15 households per village were sampled, resulting in data on 14,760 households and 17,000 children. The baseline survey collected data on demographic characteristics, immunization history, attitudes and knowledge and was conducted between May and July 2016. {A village-level summary of baseline survey data is given in Appendix  \ref{table:baseline}}.

\subsubsection{Outcome Data}
Our outcomes of interest are the number of vaccines administered for each vaccine every month, and the number of fully immunized children every month.
The main analysis of this paper focuses on the number of children who received the measles vaccines in each village every month. The measles vaccine is the last vaccine in the immunization schedule, and the ANMs check the immunization history and administer missing vaccines when a child is brought in for this vaccine. As a result, it is a good proxy for a child being fully immunized.

For our analysis, we use administrative data collected by the ANM using the e-health application on the tablet,  stored on the server, to measure immunization. At the first visit, a child was registered using a government provided ID (or in its absence, a program-generated ID) and past immunization history, if any. In subsequent visits, the unique ID was used to pull-up the child's details and update the data. Over the course of the program, about 295,038 children were registered, yielding a record of 471,608 immunizations.  We use the data from December 2016 to November 2017. We do this because of a technical glitch in the system--the SMS intervention was discontinued from November 2017, although the incentives and information hub interventions were continued a little longer, through March 2018.

Since this data was also used to trigger SMS reminders and incentives, and for the government to evaluate the nurses' performance,\footnote{Aggregated monthly reports generated from this data replaced  the monthly reports previously compiled by hand by the nurses.} it was important to assess its accuracy. Hence, we conducted a validation exercise, comparing the administrative data with random checks, as described in Appendix \ref{sec:data_validation}.  The data quality appears to be excellent. Finally, one concern (particularly with the incentive program) is that the intervention led to a pattern of substitution, with children who would have been immunized elsewhere (in the PHC or at the hospital) choosing to be immunized in the camp instead. To address this issue, we collected data immediately after the intervention on a sample of children who did not appear in the database (identified through a census exercise), to ascertain the status of their immunization. In Appendix \ref{sec:substitution}, we show that there does not appear to be a pattern of substitution, as these children were not more likely to be immunized elsewhere.

Below, the dependent variable is the number of measles shot given in a village in a month (each month, one immunization session is held at each site). On average, in the entire sample, 6.16 measles shot were delivered per village every month (5.29 in the villages with no intervention at all). In the sample at risk for the ambassador intervention (which is our sample for this study) 6.94 shots per village per month were delivered.

\subsection{Interventions Average Effects}\label{subsec: average_effects}

In this section, we present the average effects of the interventions using a standard regression without interactions.

We focus on the sample of census villages used throughout our analysis - which are the villages where the ambassador intervention was also administered - and run the following specification:
\begin{align*}
	y_{dsvt}&=\alpha+\beta^\prime \text{Incentive}_s +\gamma^\prime \text{SMS}_s  +\delta^\prime \text{Ambassador}_v+ \upsilon_{dt}+\epsilon_{dsvt}.
\end{align*}
We weight these village-level regressions by village population, and standard errors are clustered at the SC level.\footnote{{This is the highest level at which a treatment is administered, so clustering at this level should yield the most conservative estimate of variance. In practice clustering at the village level or SC level does not make an appreciable difference.}}

{The results (already reported in \cite{banerjeeusing}) are depicted graphically in Figure \ref{fig:agg-policy-effects} and show that, on average,  using information hubs (``gossips'' in that paper) as ambassadors has positive effects on  immunization: 1.89 more children receive a measles vaccine on a base of 7.32 in control  in this sample ($p = 0.04$). This is nearly identical to the effect of the high-powered, sloped incentive, though this intervention is considerably cheaper. In contrast, none of the other ambassador treatments--random seeding, seeding with trusted individuals, or seeding with trusted information hubs--have benefits statistically distinguishable from zero ($p = 0.42, \  p = 0.63, \text{ and } p = 0.92$ respectively) and the point estimates are small,  as well. To ensure that conclusions are not simply an artifact of this particular subsample, we show in Appendix \ref{sec:fig-table-appendix} that these results are robust to running the analysis on the full sample, .

The conclusion from this analysis is that financial incentives can be effective to boost demand for immunization, but only if they are large enough and increase with each immunization. Of the two cheaper interventions, the SMS interventions, promoted widely in India and elsewhere, seem disappointing. In contrast,
leveraging the community by enrolling local ambassadors, selected using the cheap procedure of asking a few villages who are good information hubs, seems to be as effective as using incentives. It leads to an increase of 26\% 
in the number of children who complete the schedule of immunization every month. This alone could increase full immunization rate in those districts from 39\% (our baseline full immunization rate, as reported by parents) to nearly 49\%.
This analysis does not fully answer the policymaker's question, however. It could well be that the interventions have powerful interactions with each other, which has two implications. First, the main effect, as estimated, does not tell us what the impact of the policy would be in Haryana if implemented alone (because as it is, they are a weighted average of a complicated set of interacted treatments). Second, it is possible that the government could do better by combining two (or more) interventions. For example, our prior in designing the information hub  ambassador intervention (described in our proposal for the project)\footnote{\url{https://doi.org/10.1257/rct.1434-4.0}} was that it would have a positive interaction effect with incentives, because it would be much easier for the information hubs  to relay hard information (there are incentives) than a vaguer message that immunization is useful.  The problem, however, is that there are a large number of interventions and interactions: we did not---nor was it feasible to---think through ex-ante all of the interactions that should or should not be included, which is why in \cite{banerjeeusing}, we only reported the average effects of each different type of seeds in the entire sample, without interactions. In the next section, we adapt our disciplined approach to select which ones to include, and to then estimate the impact of the ``best'' policy.



\section{Results}\label{sec:results}

\subsection{Identifying effective policies}

\subsubsection{Method}

We adapt the TVA procedure for our case. We allow only some pooling within arms depending on the nature of the sub-treatment. In the incentive arms, slope and flat incentives are not allowed to pool, but the amount of money (high or low) is considered to be
a dosage.  In the ambassador arms, we do not allow pooling between random selection of ambassadors, trusted ambassador, and information hub. Within information hubs, however,
we consider that the ``trusted information hub" is an increased dosage of information hub,  so these may pool with one another.

To summarize, interventions ``information hubs," ``slope," ``flat," and ``SMS" are found in two intensities. The marginal specification \eqref{eq: TVA} therefore looks like

{\small
\begin{align*}\label{eq:smart_pool_haryana}
	y_{dsvt} &= \alpha_0 + \alpha_{SMS} \text{SMS}_{s} + \alpha_{H,SMS} \text{High SMS}_{s}\\
	&+ \alpha_{Slope} \text{Slope}_{s} + \alpha_{H,Slope}\text{High Slope}_{s}
	+ \alpha_{Flat} \text{Flat}_{s} + \alpha_{H,Flat}\text{High Flat}_{s}\\
	&+ \alpha_{R} \text{Random}_{v} + \alpha_{H}\text{Info Hub (All)}_{v} + \alpha_{T}\text{Trust}_{v} + \alpha_{TH}\text{Trusted Info Hub}_{v} \\
	& + \alpha_{X}^\prime X_{sv} + v_{dt} + \epsilon_{dsvt},
\end{align*}
}
where we have explicitly listed the variables in ``single arm" treatment profiles. $X_{sv}$ is a vector of the remaining 64 marginal effects variables in ``multiple arm" treatment profiles, and $v_{dt}$ is a set of district-time dummies. Here $\text{SMS}$ refers to ``any SMS''.

Our model selection estimation follows the recommended implementation in  \cite{rohe2014note}, which uses a sequential backward elimination version of LASSO (variables with $p$-values above some threshold are progressively deselected)   on the  $\text{Puffer}_N$ transformed variables (this aids in correcting for the heteroskedasticity induced by the Puffer transformation).
 We select penalties $\lambda$ for both regressions (number of immunizations and immunizations per dollar) to minimize a Type I error, which is particularly important to avoid in the case of policy implementation.\footnote{ \cite{rohe2014note} notes a bijection between a backwards elimination procedure based on using Type I error thresholds and the penalty in LASSO. We take $\lambda = 0.48$ and $\lambda = 0.0014$ for the number of immunizations and immunizations per dollar outcomes, respectively. Both of these choices map to the same Type I error value ($p = 5 \times 10^{-13}$) used in the backwards elimination implementation of LASSO selected to essentially eliminate false positives. Appendix \ref{sec: robustness} elaborates on this choice.} This makes sense because it is extremely problematic to have a government introduce a large policy based on a false positive. This reasoning is elaborated in Appendix \ref{sec: robustness}.

This gives $\widehat{S}_{\alpha}$, an estimate of the true support set $S_\alpha$ of the marginal effects specification (\ref{eq: TVA}). We then generate a use of unique pooled policy set $\widehat{S}_{TVA}$ (following the procedure we outline in Algorithm \ref{alg:pooling} in Appendix \ref{sec:pooling}). Next, we run the pooled specification \eqref{eq: pooled_pruned} to obtain post-LASSO estimates $\hat{\eta}_{\widehat{S}_{TVA}}$ of the pooled policies as well as $\hat{ \eta}_{\widehat{S}_{TVA},\hat{\kappa}^\star}^{\text{hyb}}$, the winner's curse adjusted estimate of the best policy.

\subsubsection{Results}

The results are presented in Figure \ref{fig:postlasso_combined}. Panel A presents the  post-LASSO estimates where the outcome variable is the number of measles vaccines per month in the village. Panel B presents the  post-LASSO estimates where the outcome variable is the number of measles vaccines per dollar spent. In each, a relatively small subset of policies is selected as part of $\widehat{S}_{TVA}$ out of the universe of 75 granular policies (16\% of the possible options in Panel A and 35\% in Panel B).


In Figure \ref{fig:postlasso_combined}, Panel A, two of the four selected pooled policies  are estimated to do significantly better than control:  information hubs seeding with sloped incentives (of both low and high intensities) and SMS reminders (of both 33\% and 66\% saturation) are estimated to increase the number of immunizations by 55\% 
relative to control ($p = 0.001$), while trusted seeds with high-sloped incentives and SMS reminders (of both saturation levels) are estimated to increase immunizations by 44\% 
relative to control ($p = 0.009$). 

These two effective policies increase the number of immunizations, relative to the status quo,
at the cost of a greater cost for each immunization (compared to standard policy). These policies induce 36.0 immunizations per village per month per \$1,000 allocation (as compared with 43.6 immunizations per village per month in control). The reason is that that the gains from having incentives in terms of immunization rates is smaller than the increase in costs (e.g.,  the incentives must be paid to all the infra-marginal parents).

Two things are worth noting to qualify those results, however. First, in \cite{chernozhukov2018generic}, we show that in the places where the full package treatment is predicted to be the most effective (which tends to be the places with low immunization), the number of immunizations per dollar spent is not statistically different in treatment and control villages. Second, immunization is so cost-effective, that this relatively small increase in the cost of immunization may still mean a much more cost-effective use of funds than the next best use of dollars on policies to fight childhood disease \citep{ozawa2012cost}.

Nevertheless, a government may be interested in the most cost-effective policy, if they have a given budget for immunization.  We turn to policy cost effectiveness in Figure \ref{fig:postlasso_combined}, Panel B. The most cost-effective policy (and the only policy that reduces per immunization cost) compared to control is the combination of information hub
seeding (trusted or not) with SMS reminders (at both 33\% or 66\% saturation) and no incentives, which leads to a 9.1\% 
increase in vaccinations per dollar ($p = 0.000$).

\subsection{Estimating the Impact of the Best policy}

To estimate the impact of the best policy, we first select the best policy from $\widehat{S}_{TVA}$ based on the post-LASSO estimate. Then, we attenuate it using the hybrid estimator with $\alpha = 0.05$ and $\beta = \frac{\alpha}{10} = 0.005$, which this is the value used by \cite{andrews2019inference} in their simulations. The hybrid confidence interval has the following interpretation: conditional on policy effects falling within a 99.5\% simultaneous confidence interval, the hybrid confidence interval around the best policy has at least 95\% coverage. It also has at least 95\% coverage unconditionally.\footnote{Per Proposition 6 of \cite{andrews2019inference}, it has unconditionial coverage between $1-\alpha = 95\%$ and $\frac{1-\alpha}{1-\beta} = 95.58\%$.}

Table \ref{tab:max} presents the results.  In column 1, the outcome variable is the number of measles vaccines given every month in a given village. We find that for the best policy in the sample (information hub seeds with sloped incentives at any level and SMS reminders at any saturation) the hybrid estimated best policy effect relative to control is 3.26 with a 95\% hybrid confidence interval of [0.032,6.25].  This is lower than the original post-LASSO estimated effect of 4.02. The attenuation is owing to a second best policy (trusted seeds with high sloped incentives with SMS reminders at any saturation), chasing the best policy estimate somewhat closely.\footnote{The increased attenuation from a more closely competing second-best policy emerges from the formulas for conditional inference given in Section 3 of \cite{andrews2019inference}.} Nevertheless, even accounting for winner's curse through the attenuated estimates and the adjusted confidence intervals, the hybrid estimates still reject the null. Thus, the conclusion is that accounting for winner's curse, this policy increases immunizations by 44\% 
relative to control.

While policymakers may chose this policy if they are willing to bear a higher cost to increase immunization, there may be settings where cost effectiveness is an important consideration.
In column 2, the outcome variable is the number of vaccinations per dollar.  Accounting for winner's curse through hybrid estimation, for the best policy of information hubs (all variants) and SMS reminders (any saturation level), the hybrid estimated best policy effect relative to control is 0.004 with a 95\% hybrid confidence interval of [0.003,0.004]. Notably, this appears almost unchanged from the naive post-LASSO. This is because no other pooled policy with positive effect is ``chasing" the best policy in the sample; the second-best policy is the control (status quo), which is sufficiently separated from the best policy so as to have an insignificant adjustment for winner's curse. Thus, adjusting for winner's curse, this policy increases the immunizations per dollar by 9.1\% relative to control. 

One concern with these estimates may be that they are sensitive to the implied LASSO penalty $\lambda$ chosen. To check the robustness of our results, we consider alternative values of $\lambda$. However, we also need a criteria for evaluating results under various $\lambda$ since a marginal effects support will never be robust for the whole range of $\lambda$.  Appendix \ref{sec: robustness} spells out this criteria, which amounts to formulating a set of ``admissible" $\lambda$ for the practitioner. In a nutshell, the criterion is to avoid including in $\widehat{S}_\alpha$ first and second best policies that are very likely to be false positives. Including a false positive as the first best is obviously the most serious error in the context of policy advising, but it also matters for the second best, since including these in the support may overly attenuate the best policy estimate for winner's curse. In our case, we find that for both immunizations and immunizations per dollar, the winner's curse adjusted estimates are robust for their respective sets of admissible $\lambda$. To exemplify this robustness, we can take the union of confidence intervals within their admissible sets. This is [0.32,6.25] for immunizations and [0.001,0.006] for immunizations per dollar. Neither is much wider than the single confidence interval for the choice of $\lambda$ we highlight.

Though admissible $\lambda$ are on a different scale for the two outcomes, there is a sense in which the admissible set is larger for immunizations per dollar. This suggests a different kind of robustness concern which is more about the relative fragility of the TVA estimator for each of the outcomes.   We can speak to this fragility using a bootstrapping analysis described in detail in our Online Appendix \ref{sec: extended_robustness}. Intuitively, it captures stability of best policy estimation in terms of observation leverage, where conclusions driven by outliers will fare worse. In this analysis, the best policy for cost-effectiveness holds for 96\% with highly concentrated estimates around the main one in actual data. Meanwhile, the best policy for immunization holds for 77\% of bootstrapped samples with estimates more widely dispersed. This speaks to the relative stability of the best policy for cost effectiveness over that for immunizations.


\section{Conclusion} \label{sec:conclusion}

Despite immunization being one of the most effective and cost-effective methods to prevent disease, disability, and mortality, millions of children each year go unvaccinated. The COVID-19 epidemic has made the situation worse: vaccine coverage has dipped to levels not seen since the 1990s \citep{gates2020}. Swift policy action is critical to ensure that this dip is temporary and children who missed immunizations during the pandemic get covered soon.

In   rural India, there was a priori reason to believe that nudges may work. After all, many children get their first vaccines but caregivers rarely follow through. This is consistent with the vast majority of caregivers reporting that vaccines are helpful. Yet, it was a priori unclear as to which nudge, let alone which policy bundle out of the 75 candidates, would be effective.

Respecting this genuine uncertainty was critical. If we had simply done parallel treatments of incentives, reminders, and ambassadors, we might have found no effects. Our key finding is that combined interventions work better than each in isolation. Although there is temptation of paring down the number of treatments a priori for the sake of power, there is a danger in not doing this in a data-driven way. The suggestion of avoiding all interactions in this setting (made in \cite{muralidharan2019factorial}) would  have led to the conclusion that nothing is effective.

Additionally, the interaction effects identified by TVA teach us something about the world. From the point of view of public health policy it tells us that it is valuable to add network-based insights (information hubs), which are not in a typical policymaker's toolkit, to catalyze the effects of conventional instruments. From a basic research perspective, it also suggests that the information hubs, i.e. the person best placed in a village to circulate information, may be more effective when they have something concrete to talk about, such as incentives or something to explain such as SMSs. Such questions merit future research.

The method suggested here is applicable to many domains where policymakers have several arms with multiple potential doses, do not have the time or capacity to adaptively experiment, or have genuine uncertainty about which policy bundles should be effective. Rather than guessing, or pretending to be an oracle, we suggest that policymakers consider the data-driven approach of treatment variant aggregation which may apply to their setting. The proposed method relies on strong assumptions that rule out some of the cases where model-selection leads to invalid inferences. Provided these assumptions are palatable, our findings show that  TVA prunes and pools effectively and, thispays dividends when the policymaker wishes to adjust for the winner's curse without falling into the trap of over-conservatism. The algorithm can be easily pre-specified and does not require the researcher to take a stance on the possible effects of myriad interactions which are likely difficult to predict in advance.


\bibliographystyle{ecta}
\bibliography{networks2,gossips}

\clearpage

\section*{Figures}



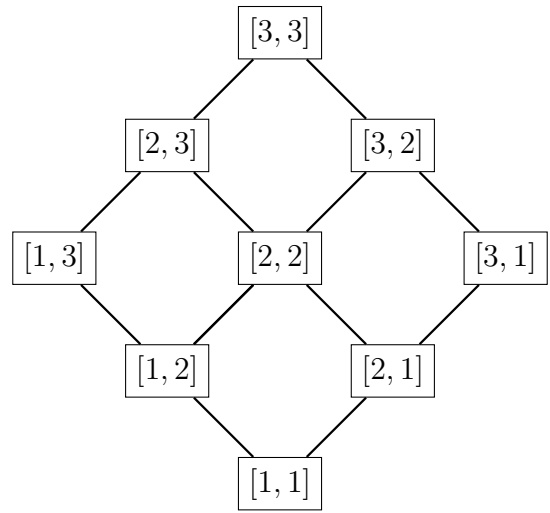
\begin{figure}[!ht]
\begin{centering}
 \begin{center}
 \vfill
 \begin{tikzpicture}[scale = 0.75] \def \n {5} \def \radius {2cm} \def \margin {8}
 \node[draw, rectangle,  minimum size=19pt] at (0,-4) (v1){$[1,1]$};
 \node[draw, rectangle, minimum size=19pt] at (-2,-2) (v2){$[1,2]$};
 \node[draw, rectangle,  minimum size=19pt] at (-4,0) (v3){$[1,3]$};
 \node[draw, rectangle,  minimum size=19pt] at (2,-2) (v4){$[2,1]$};
 \node[draw, rectangle,  minimum size=19pt] at (4,0) (v5){$[3,1]$};
 \node[draw, rectangle,  minimum size=19pt] at (0,0) (v6){$[2,2]$};
 \node[draw, rectangle, minimum size=19pt] at (-2,2) (v7){$[2,3]$};
  \node[draw, rectangle, minimum size=19pt] at (2,2) (v8){$[3,2]$};
   \node[draw, rectangle,  minimum size=19pt] at (0,4) (v9){$[3,3]$};

 \draw[line width = 0.3mm, >=latex] (v1) to (v2);
 \draw[line width = 0.3mm,  >=latex] (v1) --(v4);
 \draw[line width = 0.3mm,  >=latex] (v2) to (v3);
 \draw[line width = 0.3mm,  >=latex] (v2) to (v6);
 \draw[line width = 0.3mm,  >=latex] (v4) -- (v6);
 \draw[line width = 0.3mm, >=latex] (v2) to (v6);
 \draw[line width = 0.3mm, >=latex] (v3) to (v7);
 \draw[line width = 0.3mm, >=latex] (v4) -- (v5);
   \draw[line width = 0.3mm, >=latex] (v6) to (v7);
  \draw[line width = 0.3mm, >=latex] (v6) to (v8);
  \draw[line width = 0.3mm, >=latex] (v5) to (v8);
 \draw[line width = 0.3mm, >=latex] (v7) to (v9);
  \draw[line width = 0.3mm, >=latex] (v8) to (v9);
\end{tikzpicture}
\caption{Hasse diagram for $M=2, R=4$ for the treatment profile where both arms are active. A line upwards from treatment combinations $[r_1,r_2]$ to $[r_1',r_2']$ means that $[r_1,r_2] \leq [r_1',r_2']$ and $[r_1,r_2] \neq [r_1',r_2']$ in the intensity ordering.}
  \label{fig:hasse}
 \vfill
 \end{center}
\par\end{centering}
\end{figure}


\begin{figure}[!ht]
\centerfloat
\includegraphics[scale=0.5]{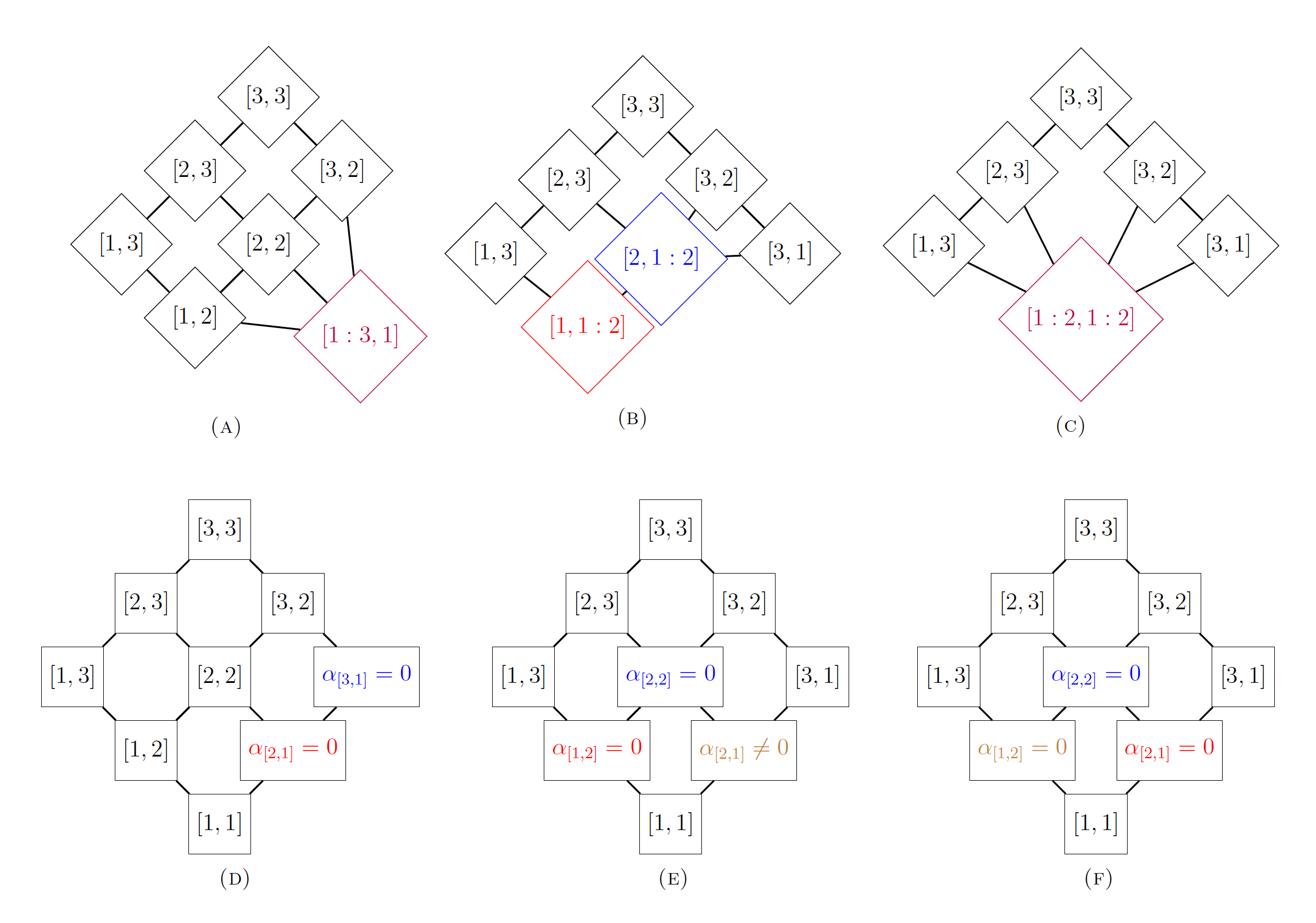}

\caption{Panels A, B, C show examples of policy-concatenations in the $\eta$-space while panels D, E, F show the zeros in the marginal effects space, corresponding to these concatenations. On \textit{panel A}, policies $[1,1], [2,1], [3,1]$ are concatenated since $\beta_{[1,1]} = \beta_{[2,1]} = \beta_{[3,1]}$. Correspondingly, panel D shows $\alpha_{[2,1]} = \beta_{[2,1]} - \beta_{[1,1]}$ and $\alpha_{[3,1]} = \beta_{[3,1]} - \beta_{[2,1]}$. On \textit{panel B}, policies $\{[1,1],[1,2]\}$ and $\{[2,1], [2,2]\}$ are concatenated since $\beta_{[1,1]} = \beta_{[1,2]}$ and $\beta_{[2,1]} = \beta_{[2,2]}$. The equivalent marginal space (panel E) shows, $\alpha_{[2,1]} = \beta_{[2,1]} - \beta_{[1,1]}$, $\alpha_{[1,2]} = \beta_{[1,2]} - \beta_{[1,1]}$, and $\alpha_{[2,2]} = \left(\beta_{[2,2]} - \beta_{[2,1]}\right) - \left(\beta_{[1,2]} - \beta_{[1,1]}\right)$. On \textit{panel C}, $\beta_{[1,1]} = \beta_{[1,2]} = \beta_{[2,1]} = \beta_{[2,2]}$. In panel F note that the only change relative to panel B is that here $\alpha_{[2,1]} = 0$.}\label{fig:hasse_concatenations}
\end{figure}


\begin{figure}[!h]
\begin{center}
	\begin{subfigure}[Panel A]{.4\linewidth}
		\includegraphics[scale=0.43]{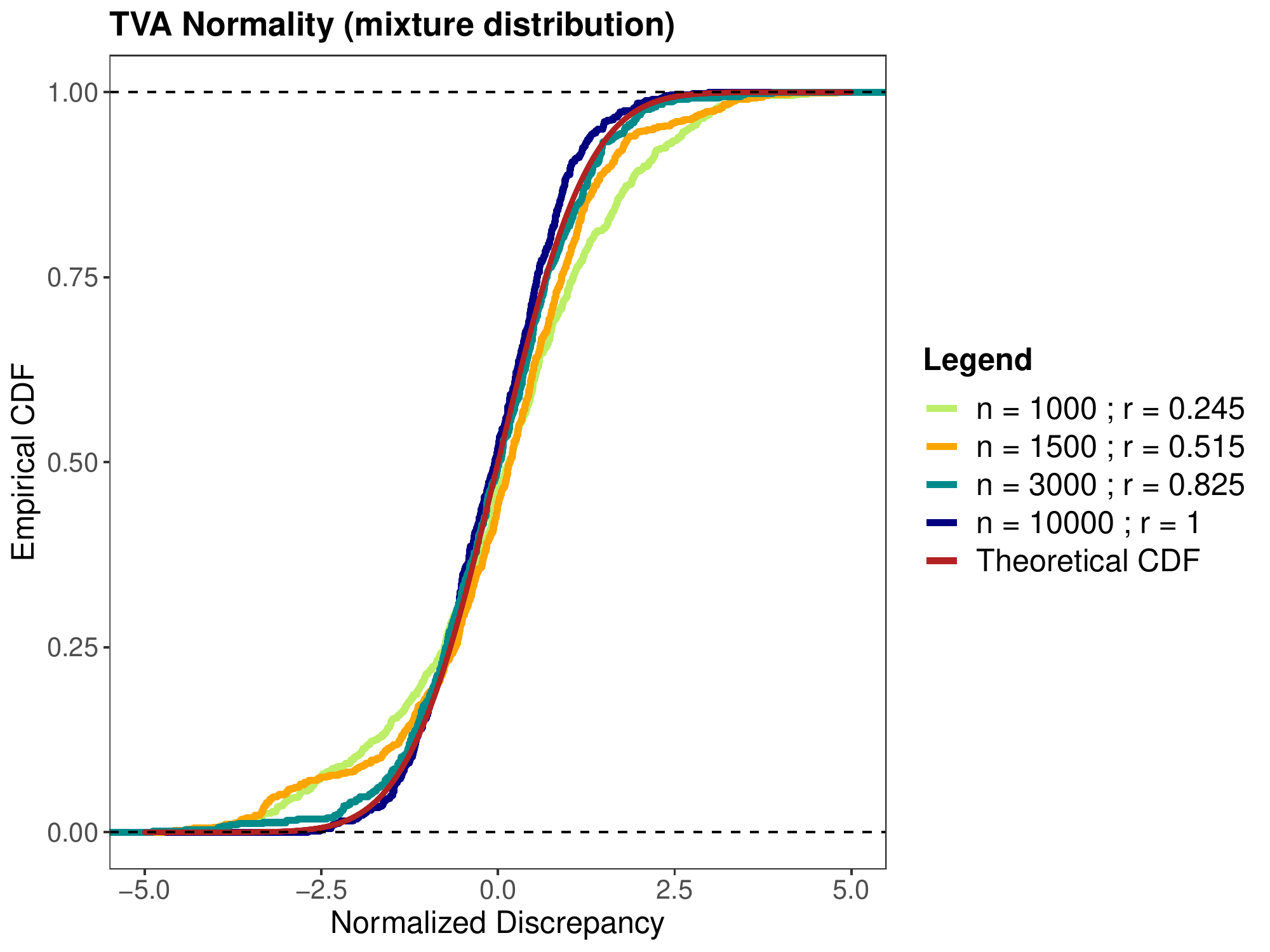}
		\caption{TVA: Normality}
	\end{subfigure}
	\qquad
	\hfill
	\begin{subfigure}[Panel B]{.4\linewidth}
		\includegraphics[scale=0.43]{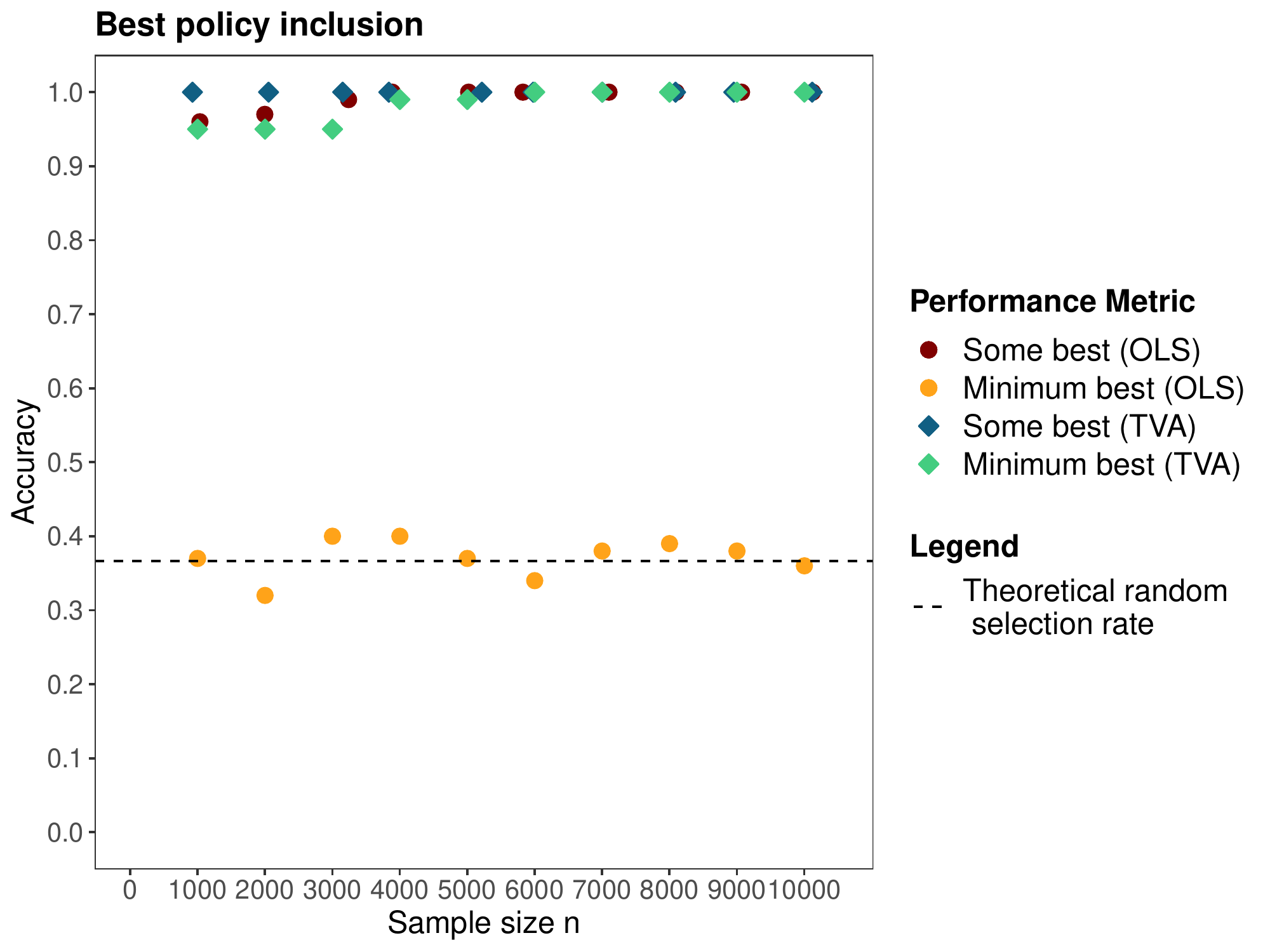}
		\caption{TVA vs OLS: Best Policy Inclusion}
	\end{subfigure}
\end{center}
\begin{center}	
	\begin{subfigure}[Panel C]{.4\linewidth}
		\includegraphics[scale=0.43]{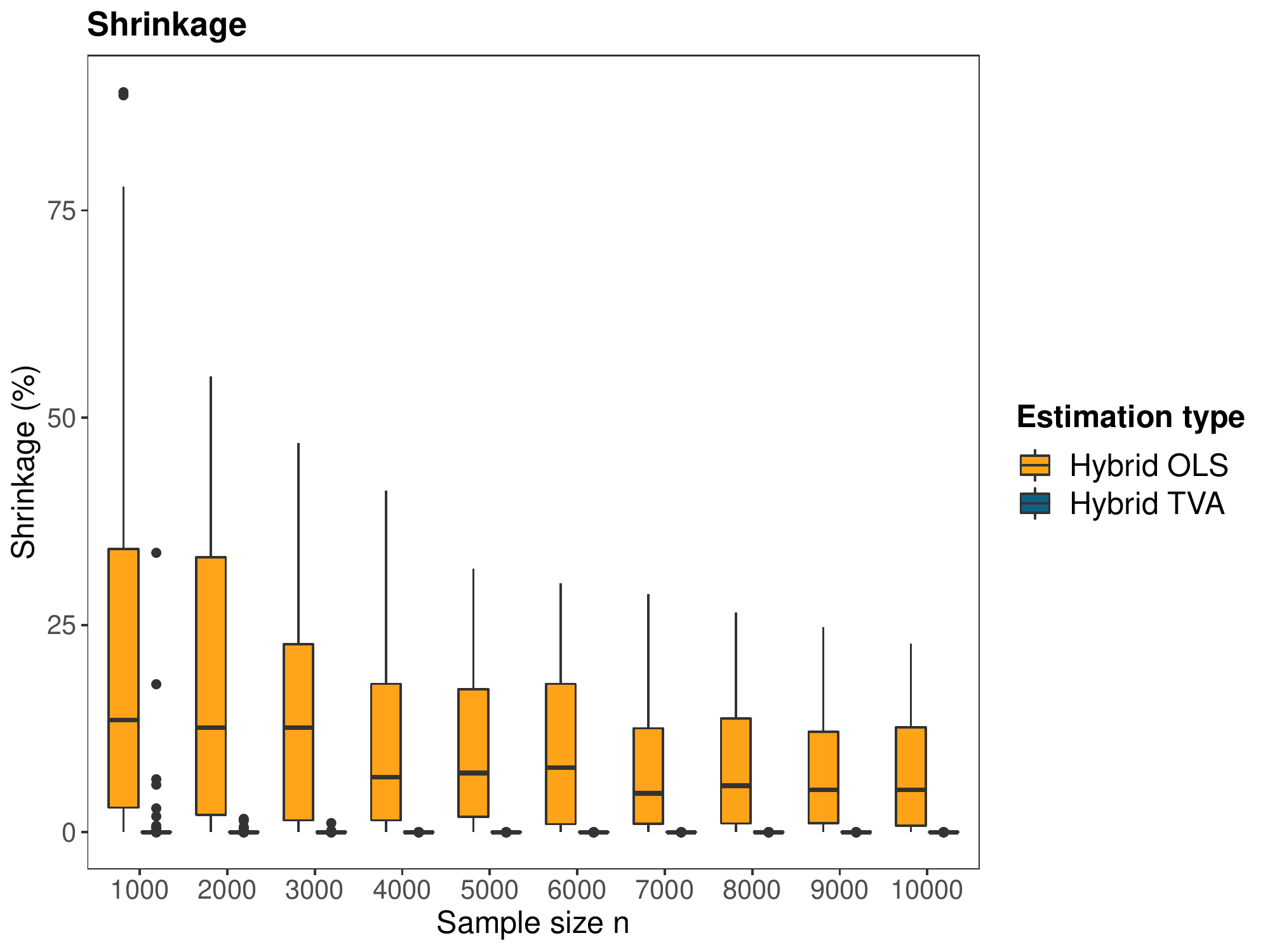}
		\caption{TVA vs OLS: Shrinkage}
	\end{subfigure}
	\qquad
	\hfill
	\begin{subfigure}[Panel D]{.4\linewidth}
		\includegraphics[scale=0.43]{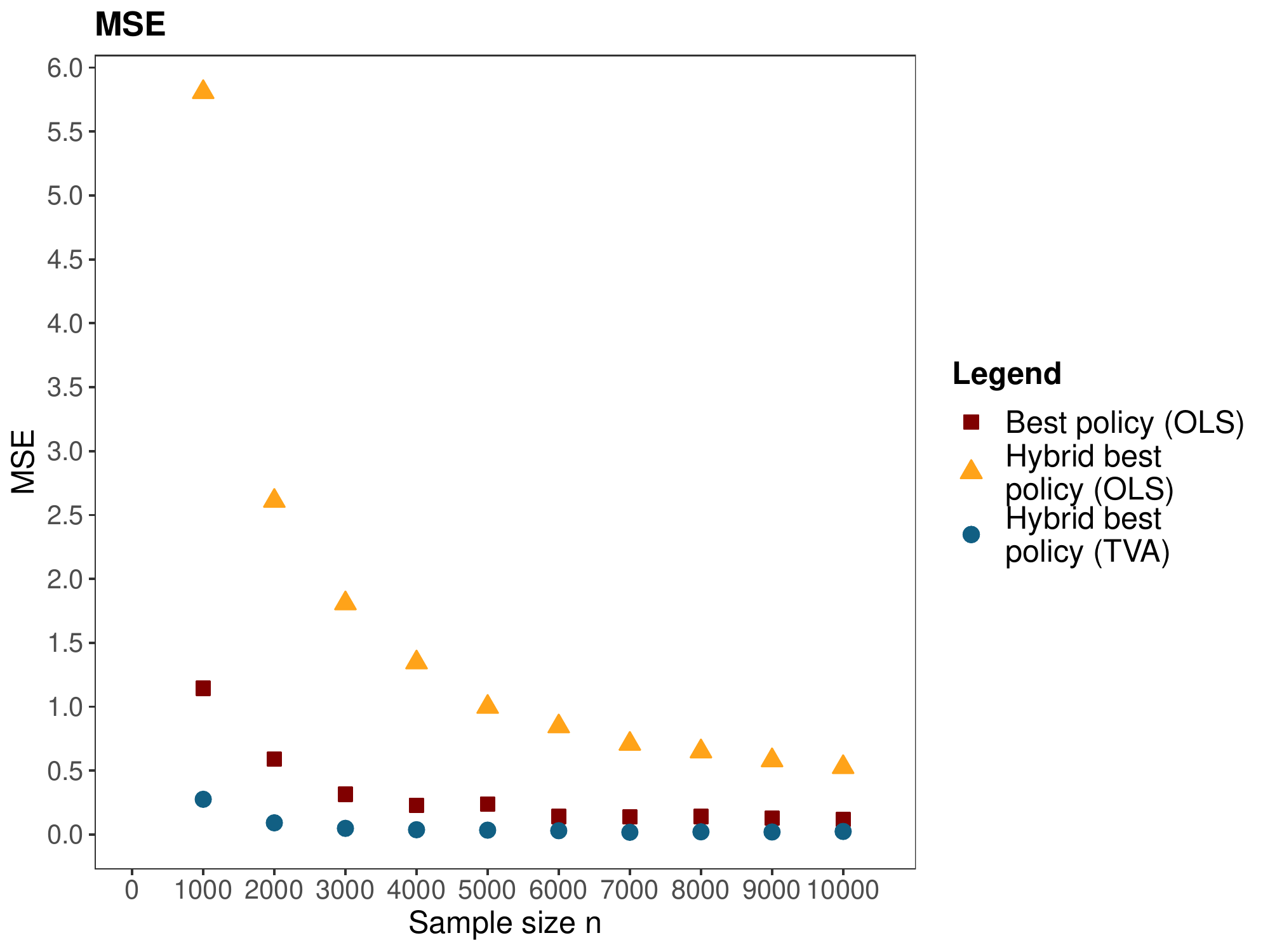}
		\caption{TVA vs OLS: MSE}
	\end{subfigure}
	
\caption{A plot comparing the performance of the TVA estimator to applying OLS on the unique policy specification (\ref{eq:unique_policy}) for a range of measures.  On panel A, we first expose the normality of TVA estimates ($r$ is correct support selection rate). Panel B then uses the best policy inclusion measures defined in subsection \ref{sec: performance_measures} and points are slightly jittered for better readability. For OLS, this measure is set to $1$ whenever the highest treatment effect policy is part of the true best pool (some best) or equal to the minimum dosage best policy (min best).  Panel C compares the amount of shrinkage imposed by the winner's curse adjustment as percentage of the initial coefficient. Panel D compares the MSE of the best policy estimation, between the TVA  and the OLS estimator of the unique policy specification (\ref{eq:unique_policy})  before and after adjusting for the winner's curse. In all panels, there are 20 simulations per configuration and 5 configurations per $n$. \label{fig:puffer_vs_ols}\label{fig:penalty_sensitivity}}
\end{center}
\end{figure}

\begin{center}
\begin{figure}[!h]

	 \begin{subfigure}[OLS_density]{\linewidth}
	 \centerfloat
		\includegraphics[scale=0.42]{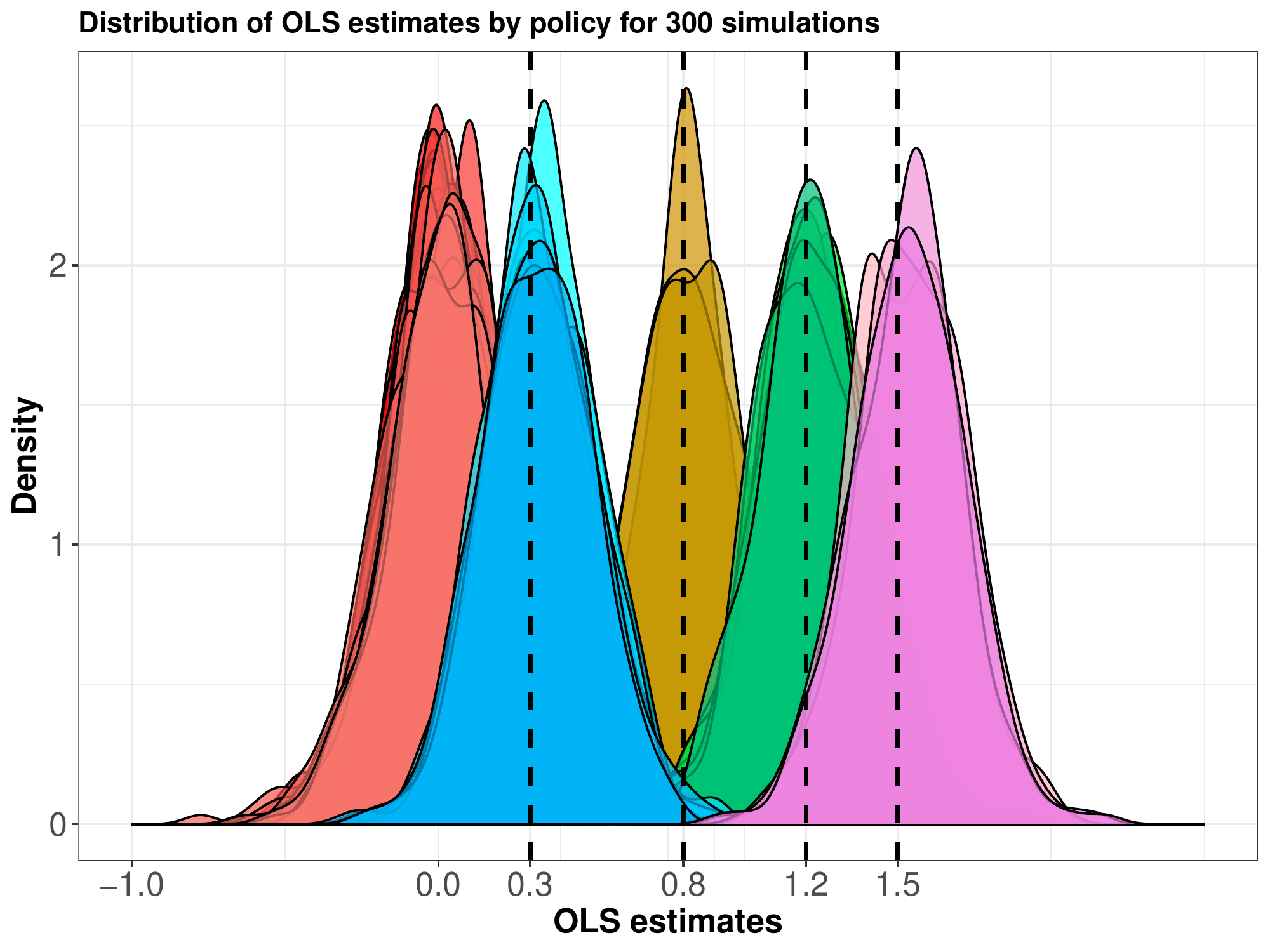}
		\qquad
		\hspace{-.9cm}
		\includegraphics[scale=0.42]{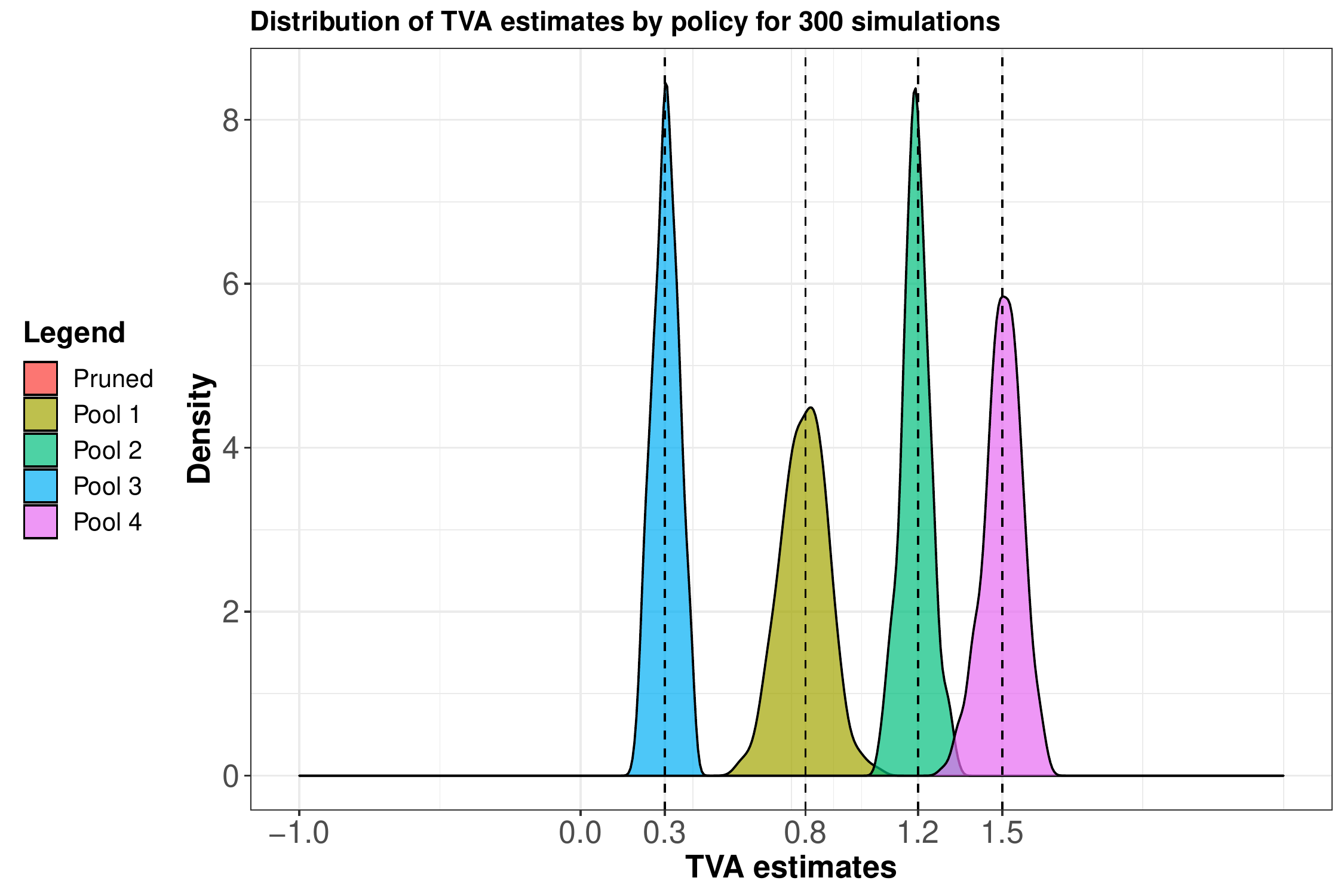}
	\caption{Densities (300 simulations)}
	\end{subfigure}

	\vspace{1cm}
	\begin{subfigure}[OLS_coefplot]{\linewidth}
	\centerfloat
		\includegraphics[scale=0.38]{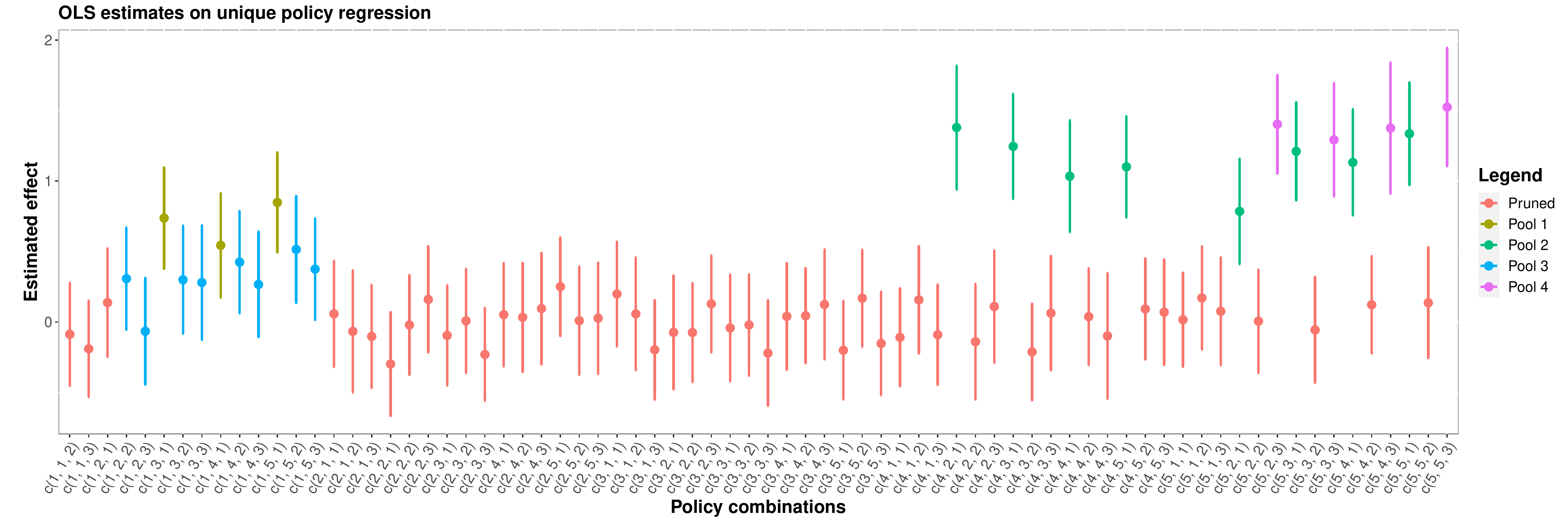}
		
	\end{subfigure}
	
	\begin{subfigure}[puffer_coefplot]{\linewidth}
	\centerfloat
		\includegraphics[scale=0.38]{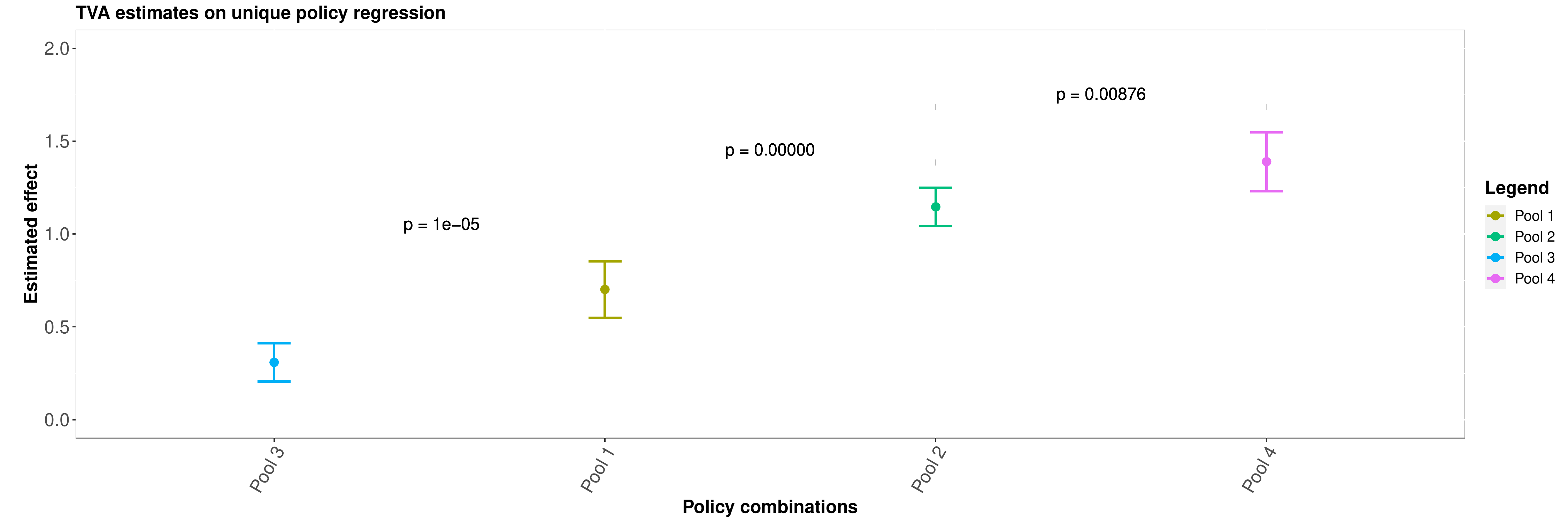}
		\caption{Estimated Coefficients}
	\end{subfigure}
\captionsetup{width=\linewidth}
\caption{This figure presents results from a simulation setting with $n = 4000$ and where 4 pooled policies (each composed of 3-8 unique policies) are non-zero with effects 0.3, 0.8, 1.2 and 1.5 respectively.  For OLS applied to (\ref{eq:unique_policy}) and TVA, we show both the distribution of policy estimates across 300 simulations (panel A) and the estimated policy coefficients for one representative simulated draw (panel B).  Note that the color labeling for OLS plots corresponds to the true underlying pooled and pruned policies. For the TVA density plots, we condition on the event that the TVA estimator has selected the correct support (mean support accuracy is $92.2\%$ across simulations). \label{fig:power_comparison}}
\end{figure}
\end{center}

\begin{center}
\begin{figure}[!h]
	\begin{subfigure}[Panel A]{.45\linewidth}
		\includegraphics[scale=0.39]{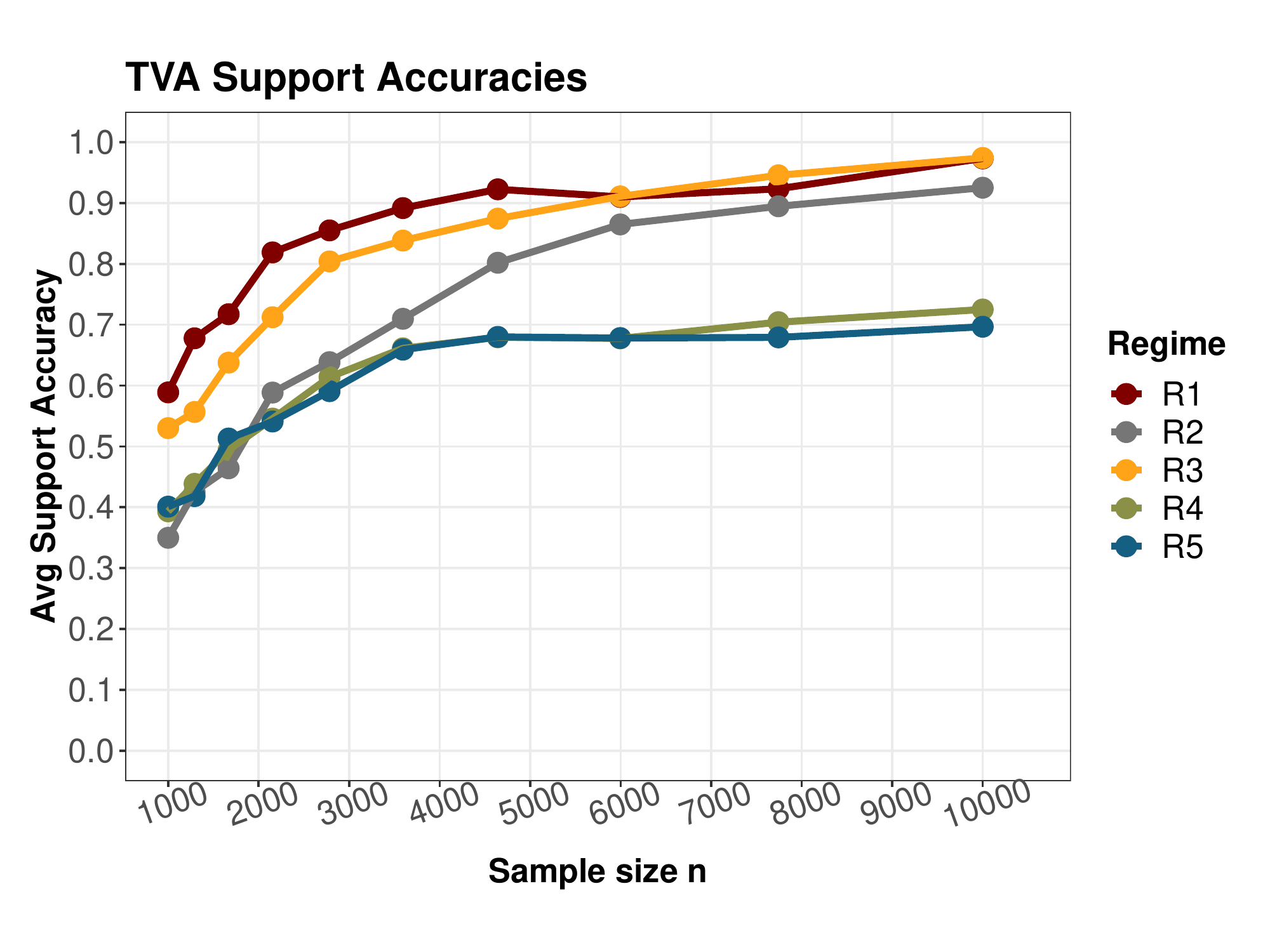}
		\caption{Support Accuracy}
	\end{subfigure}
\qquad
\hfill
	\begin{subfigure}[Panel A]{.45\linewidth}
		\includegraphics[scale=0.39]{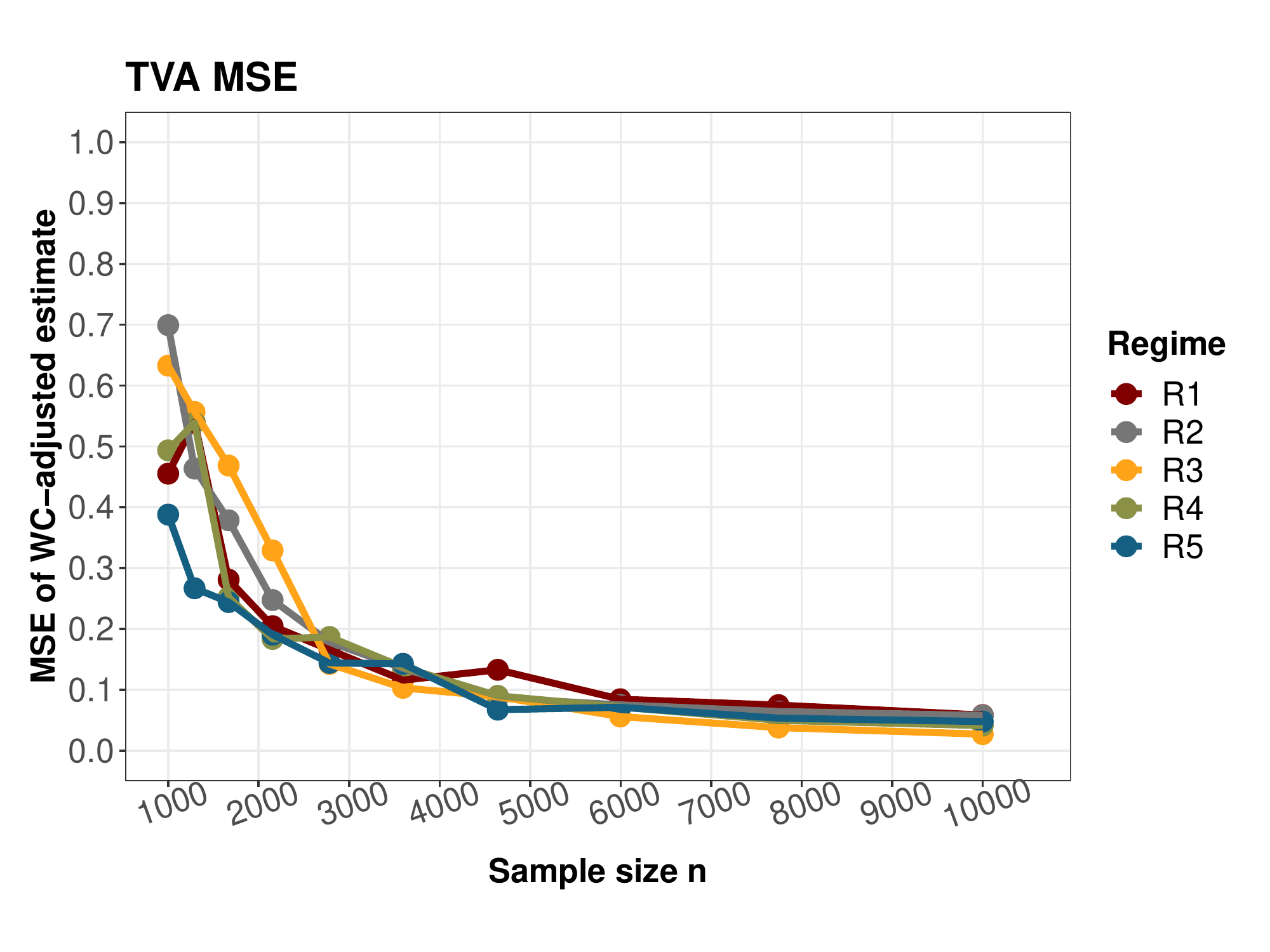}
		\caption{MSE of Best Policy}
	\end{subfigure}

	\begin{subfigure}[Panel A]{.45\linewidth}
	\vspace{2\baselineskip}
		\includegraphics[scale=0.39]{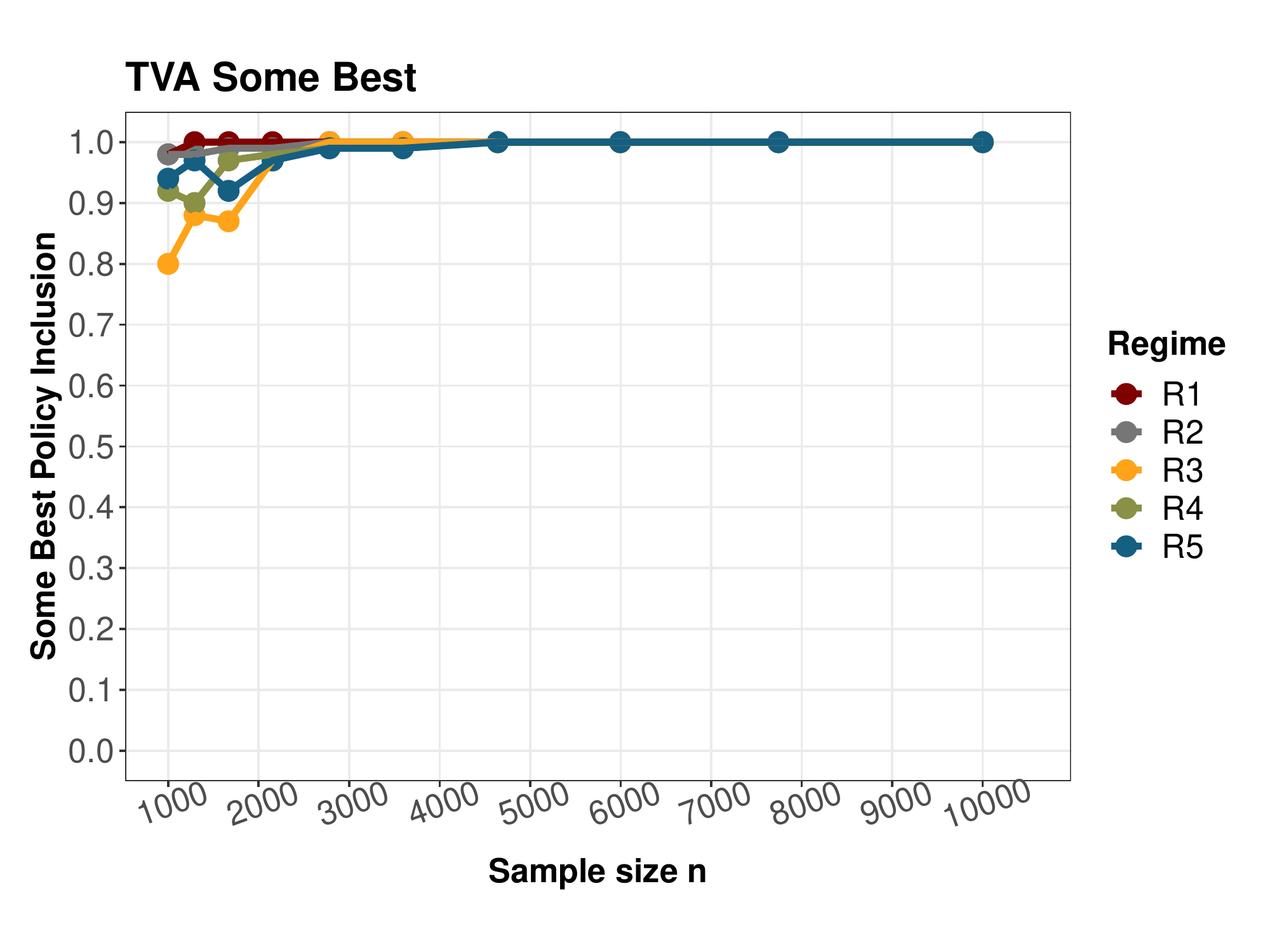}
		\caption{Some Best Policy Inclusion}
	\end{subfigure}
\qquad
\hfill
	\begin{subfigure}[Panel A]{.45\linewidth}
	\vspace{2\baselineskip}
		\includegraphics[scale=0.39]{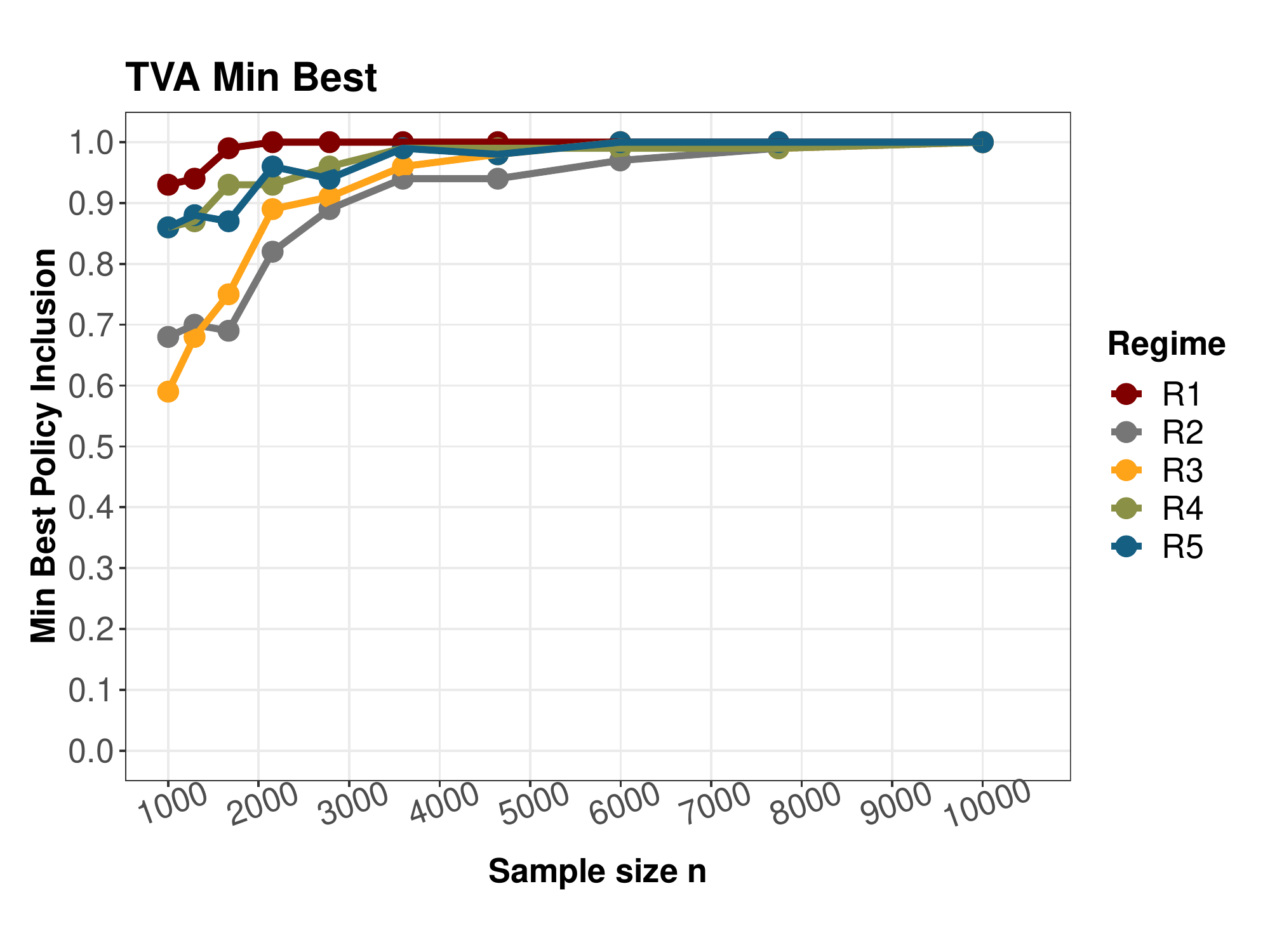}
		\caption{Min Best Policy Inclusion}
	\end{subfigure}
\end{figure}

\begin{figure}[!h]
\ContinuedFloat
\caption{This figure shows the performance of the TVA estimator under relaxations of the sparsity assumptions. We show support accuracy (panel A), MSE of the best policy (panel B), Some Best policy inclusion rate (panel C) and Minimum Best policy inclusion rate (panel D) as a function of $n$ for five different levels of violating sparsity requirements (R1-R5). Per regime, there are $20$ simulations per support configuration per $n$, for five support configurations. }\label{fig:regime_plots}
\end{figure}
\end{center}

\clearpage


\begin{figure}[!h]
\centerfloat
	\includegraphics[scale=0.61]{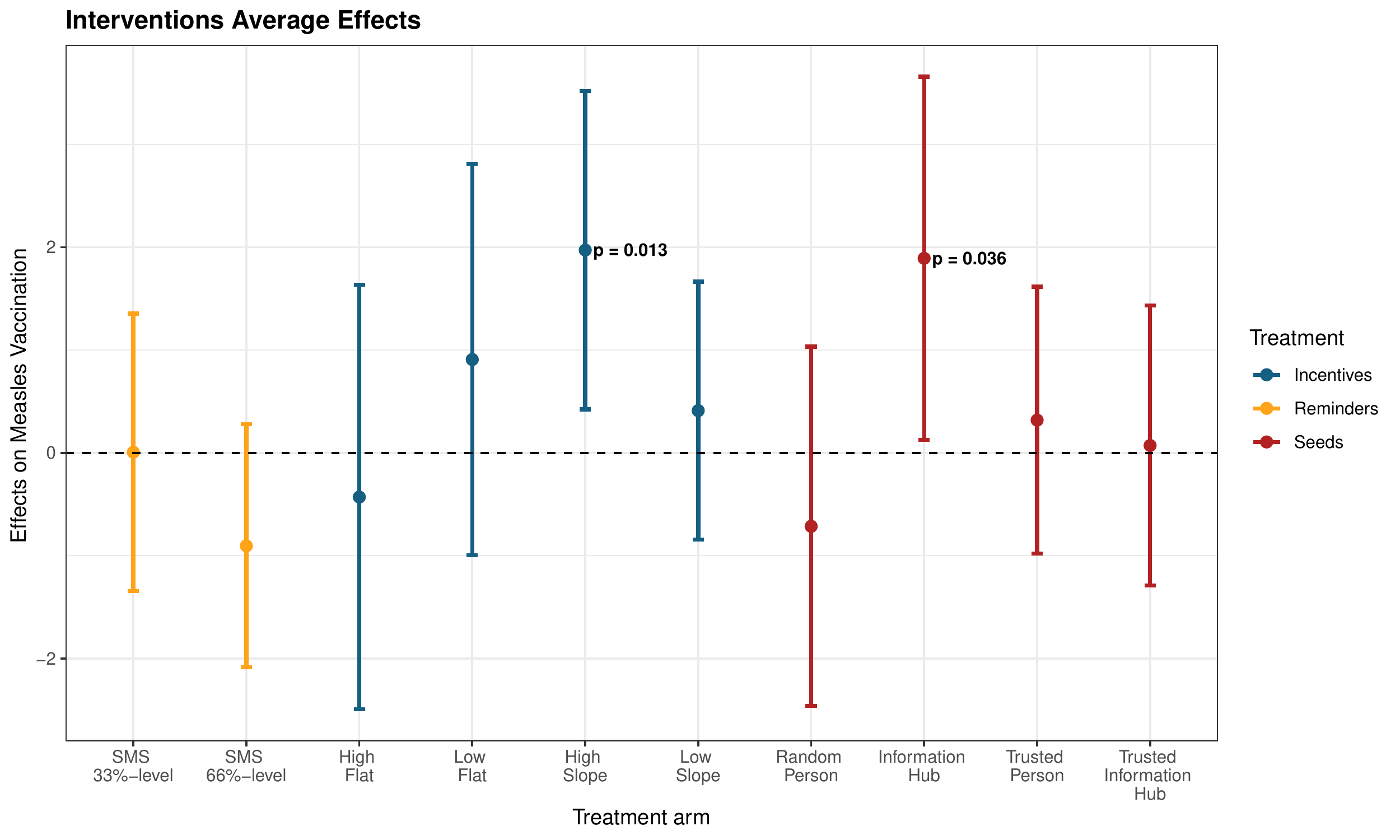}
\caption{Effects on the number of measles vaccinations relative to control (7.32) by reminders, incentives, and seeding policies, restricted to the villages were the ambassador intervention was administered. The specification is weighted by village population, controls for district-time fixed effects, and clusters standard errors at the sub-center level.\label{fig:agg-policy-effects}
}
\end{figure}

\clearpage

\begin{figure}[!h]

	 \begin{subfigure}[Panel A]{\linewidth}
		\includegraphics[scale=0.12]{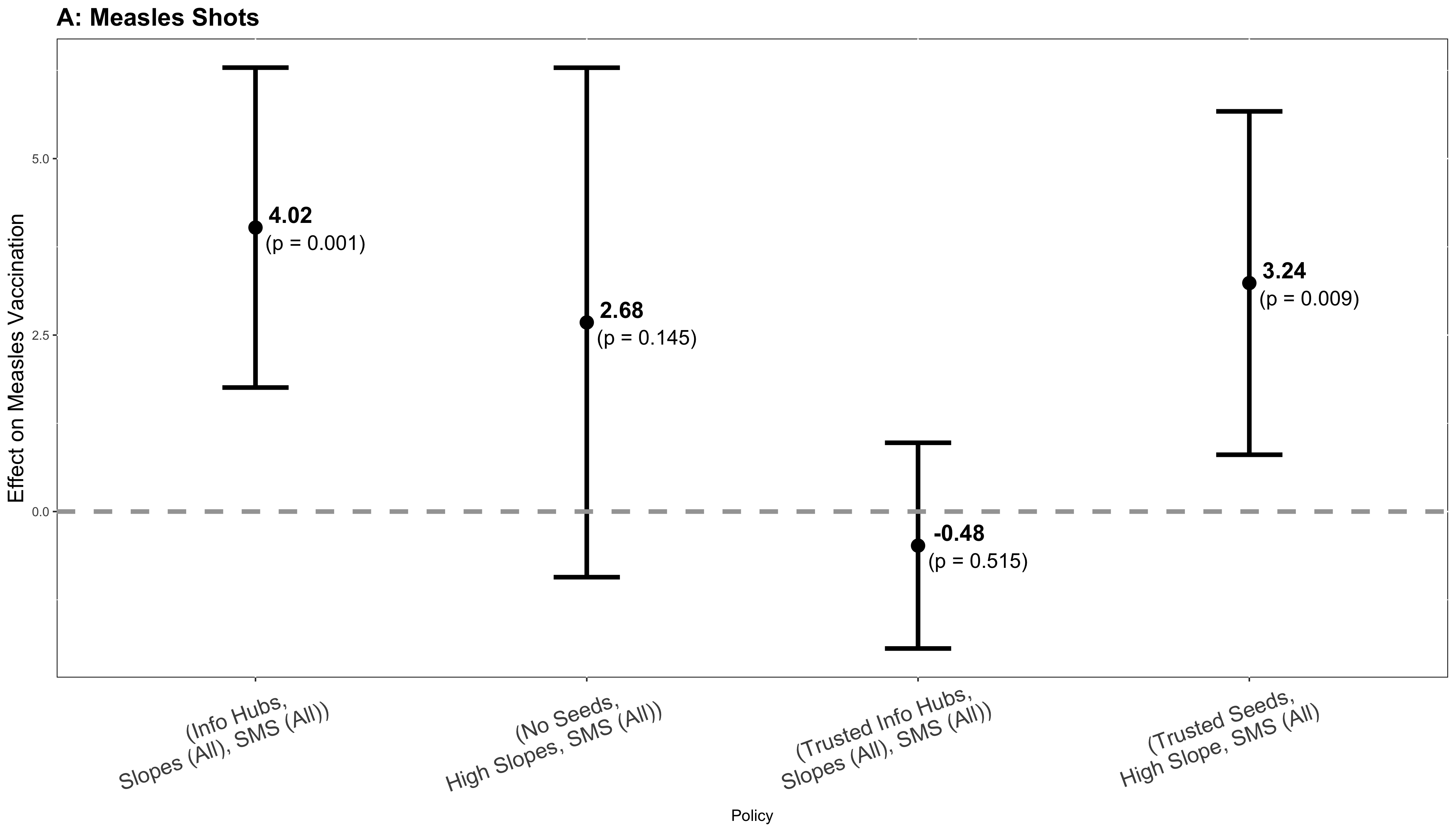}
	\end{subfigure}
	
	\begin{subfigure}[Panel B]{\linewidth}
		\includegraphics[scale=0.12]{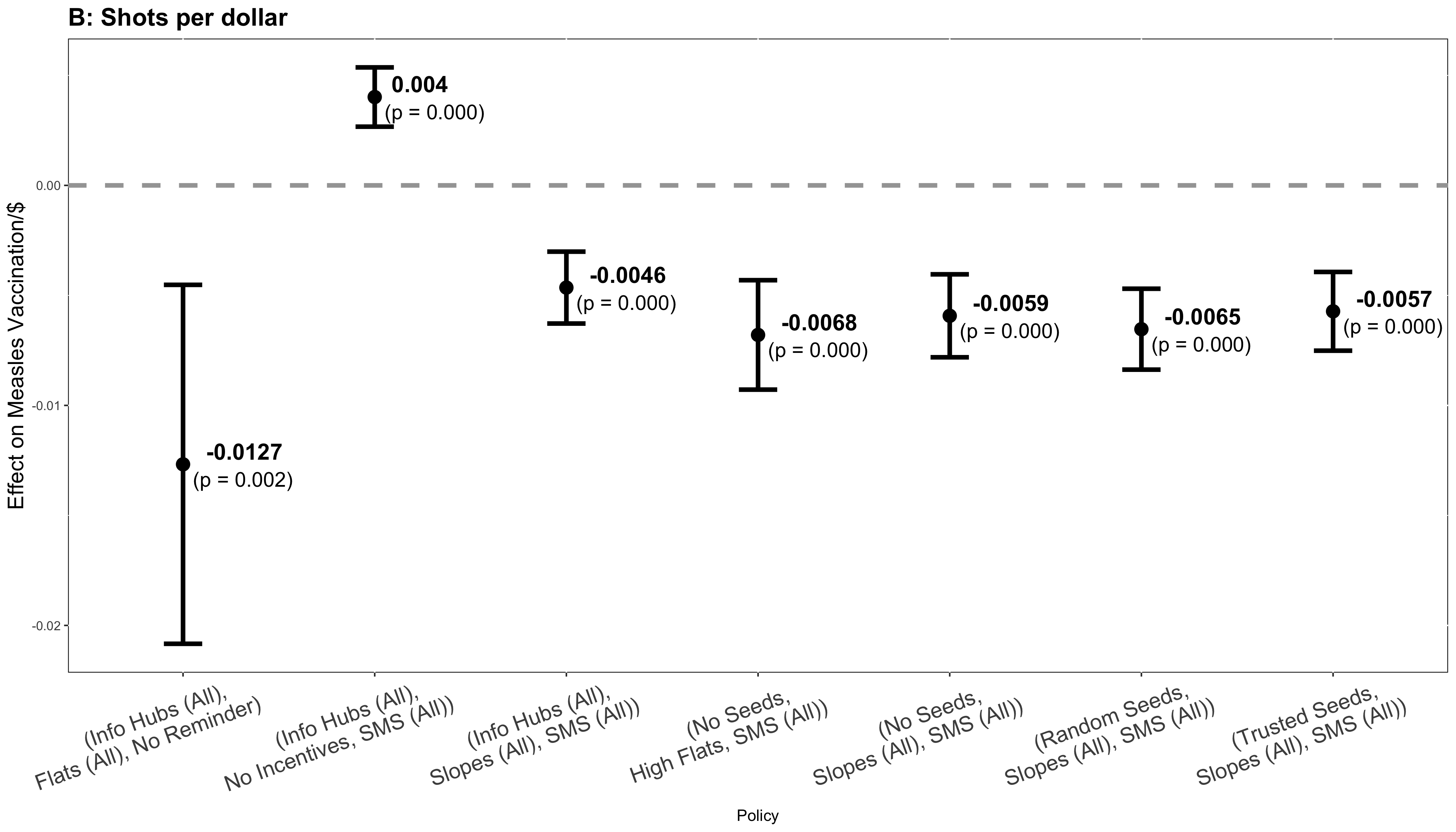}
	\end{subfigure}
	
\caption{TVA estimates of combinations of reminders, incentives, and seeding policies on the number of measles vaccines (Panel A) and the number of measles vaccines per \$ (Panel B) relative to control (7.32 and 0.0436 respectively). The specifications are weighted by village population and include controls for district-time fixed effects. Standard errors are clustered at the sub-center level. 95\% confidence intervals displayed.\label{fig:postlasso_combined}}
\end{figure}

\clearpage

\section*{Tables}


\begin{table}[!h]
	\centering
	\caption{Best Policies}\label{tab:max}
	\scalebox{0.8}{\begin{threeparttable}
			\begin{tabular}{lcccc} \hline
 & (1) & (2)  \\
  & \# Measles Shots & \# Measles Shots per \$1 \\ \hline
 &  &  &  &  \\
WC Adjusted Treatment Effect & 3.26   & 0.004  \\
 Confidence Interval (95\%) & [0.32,6.25]   &  [0.003, 0.005]  \\
Control Mean & 7.32 & 0.0435 \\
Observations & 204  & 814  \\
Optimal Policy & (Information Hubs, SMS, Slope) & (Information Hubs POOLED, SMS) \\ \hline
\end{tabular}


			\begin{tablenotes}
				Notes: Estimation using \cite{andrews2019inference}; hybrid estimation with $\alpha = 0.05, \beta = 0.005$. The specifications are weighted by village population and account for district-time fixed effects as well as variance clustered at the sub-center level.
			\end{tablenotes}
	\end{threeparttable}}
\end{table}

\clearpage

\appendix

\section{Proofs}\label{sec:proofs}

\

\begin{proof}  [Proof of Proposition \ref{prop: support}] According to Theorem 1 of \cite{jia2015preconditioning}, if $\min_{j \in S_\alpha} |\alpha_j| \geq 2 \lambda_n$, then $\widetilde{\alpha} =_s \alpha$ with probability greater than
	
	\begin{equation*}
		f(n) := 1- 2K\exp{ \big( - \frac{n \lambda_n^2 \xi^2_{\min}}{2 \sigma^2} \big)},
	\end{equation*}
	where   $\xi_{\min} = \xi_{\min}(\frac{X}{\sqrt{n}})$ is the minimum singular value of the $\sqrt{n}$-normalized design matrix.
	\noindent By Assumption \ref{assu: support}, there is a uniform lower bound $c > 0$ (independent of $n$) on the magnitude of the non-zero $\{\alpha\}$.   Since by Assumption \ref{assu: penalty}, $\lambda_n \rightarrow 0$, for sufficiently high $n$ $\min_{j \in S_\alpha} |\alpha_j| \geq 2 \lambda_n$. Theorem 1 applies and $\mathrm{sign}(\hat{\alpha}) =\mathrm{sign}( \alpha)$ with probability greater than $f(n)$. \\

 It will be convenient to re-express $f(n)$ as follows:
	\[
		f(n) = 1- 2\exp{ \big(\log(K)- \frac{n \lambda_n^2 \xi^2_{\min}}{2 \sigma^2} \big)}.
\]
	And applying Lemma \ref{lem:cmin}, for sufficiently high $n$:
	\[
		f(n) \geq 1- 2\exp{ \big(\log(K) - \frac{n \lambda_n^2}{2 \sigma^2 K^2} \big)}.
\]
	Per Assumption \ref{assu: design}, $K = O(n^\gamma)$, it follows that for sufficiently high $n$:
	\begin{align*}
		f(n) \geq 1- 2\exp{ \big(\gamma \log(n)- \frac{n^{1- 2 \gamma} \lambda_n^2}{2 \sigma^2} \big)}.
	\end{align*}

	\noindent By Assumption \ref{assu: design}, $0 < \gamma <\frac{1}{2} \implies n^{1 - 2 \gamma} = \omega(\log(n))$ and by Assumption $\ref{assu: penalty}$, since $\lambda_n^2 n^{1 - 2 \gamma} = \omega(\log(n))$, it follows that $\lim_{n\rightarrow \infty} f(n) \geq 1$. Since also $f(n) \leq 1$, it follows that $f(n) \rightarrow 1$ and the proof is complete. \end{proof}

\

\begin{lem} \label{lem:cmin} For the marginal effects design matrix $X$, for $R \geq 3$, wpa1 the lowest singular value of $\sqrt{n}$ normalized design matrix, i.e., {$\xi_{\min}(\frac{X}{\sqrt{n}})$}, has the value $\xi_{\min} (\frac{X}{\sqrt{n}}) = \Big( 4 R \sin ^2 \big( \frac{R-\frac{3}{2}}{R - \frac{1}{2}} \frac{\pi}{2} \big) \Big)^{-\frac{M}{2}}$. Thus, with probability approaching 1, $ \xi_{\min}(\frac{X}{\sqrt{n}}) > (\frac{1}{K})$.\footnote{This is a conservative bound; the optimal uniform lower bound is  $\big(\frac{1}{K} (\frac{1}{4^M})\big)^\frac{1}{2}$.}

\end{lem}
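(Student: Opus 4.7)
The plan is to reduce the problem to an explicit eigenvalue computation on a small, classical matrix. First, since $X$ has $\{0,1\}$ entries and the rows are i.i.d.\ under the RCT, each entry of $X'X/n$ is an average of bounded Bernoulli products; Hoeffding's inequality combined with a union bound over at most $K^{2}=O(n^{2\gamma})$ entries, $\gamma<1/2$, gives that $\|X'X/n-E[x_{1}x_{1}']\|_{F}\to 0$ with probability approaching one. By Weyl's inequality this transfers to convergence of the eigenvalues, so it suffices to identify $\lambda_{\min}\!\bigl(E[x_{1}x_{1}']\bigr)$ and then take the square root.

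Next, I would exploit the factorial/RCT structure. Because $x_{\ell}x_{\ell'}$ vanishes unless $P(\ell)=P(\ell')$, the matrix $E[x_{1}x_{1}']$ decomposes into blocks indexed by the $2^{M}$ treatment profiles (plus the $1\times 1$ null-profile block). Under uniform assignment the arms $T_{1},\ldots,T_{M}$ are independent with $\Pr(T_{a}\geq s)=(R-s)/R$, and a direct computation shows that the block for a profile with $m$ active arms equals $(1/R)^{M}\cdot A^{\otimes m}$, where $A$ is the $(R-1)\times(R-1)$ matrix with $A_{r,r'}=R-\max(r,r')$. The eigenvalues of $A^{\otimes m}$ are $m$-fold products of eigenvalues of $A$, and (as verified below) $\lambda_{\min}(A)<1$, so the overall minimum is attained on the full-profile block: $\lambda_{\min}\!\bigl(E[x_{1}x_{1}']\bigr)=(1/R)^{M}\lambda_{\min}(A)^{M}$.

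The main obstacle is identifying $\lambda_{\min}(A)$ in closed form. Reindexing $s=R-r$ transforms $A$ into the matrix $B$ with $B_{s,s'}=\min(s,s')$ on $\{1,\ldots,R-1\}$, the classical ``min matrix.'' Its inverse is the tridiagonal discrete Laplacian with diagonal $(2,\ldots,2,1)$ and off-diagonal $-1$ entries (Dirichlet at one end, Neumann at the other), for which the eigenvalues are known in closed form to be $4\sin^{2}\!\bigl(\tfrac{(2k-1)\pi}{2(2R-1)}\bigr)$ for $k=1,\ldots,R-1$ (a standard Chebyshev-polynomial calculation). Taking reciprocals and maximizing $\sin^{2}$ at $k=R-1$ yields $\lambda_{\min}(A)=\bigl[4\sin^{2}\!\bigl(\tfrac{R-3/2}{R-1/2}\cdot\tfrac{\pi}{2}\bigr)\bigr]^{-1}$, and substituting back gives the stated formula $\xi_{\min}(X/\sqrt{n})=\bigl(4R\sin^{2}(\cdot)\bigr)^{-M/2}$ with probability approaching one.

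Finally, to obtain $\xi_{\min}(X/\sqrt{n})>1/K=R^{-M}$, it suffices to check $4\sin^{2}\!\bigl(\tfrac{R-3/2}{R-1/2}\tfrac{\pi}{2}\bigr)<R$ for all $R\geq 3$. The argument of the sine is strictly less than $\pi/2$, so $\sin^{2}<1$ and hence $4\sin^{2}<4\leq R$ for $R\geq 4$; for $R=3$ one verifies directly that $4\sin^{2}(3\pi/10)=(3+\sqrt{5})/2<3$. The remaining pieces---the block-diagonal reduction, the tensor-product eigenvalue decomposition, and the Hoeffding concentration---are routine given the factorial design; the crux of the argument is the tridiagonal-eigenvalue identity for the inverse min-matrix.
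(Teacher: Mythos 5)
Your proof is correct and follows essentially the same route as the paper's: block-diagonalize the limiting Gram matrix by treatment profile, recognize each block as a Kronecker power of the $(R-1)\times(R-1)$ matrix with entries $R-\max(r,r')$, and compute its smallest eigenvalue via the known spectrum of the tridiagonal inverse of the min-matrix. The only additions are that you make explicit the concentration step ($X'X/n$ to its expectation via Hoeffding and Weyl) and the final check that $4\sin^2\bigl(\tfrac{R-3/2}{R-1/2}\tfrac{\pi}{2}\bigr)<R$, both of which the paper leaves implicit.
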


\begin{proof}[Proof of Lemma \ref{lem:cmin}] It will be useful to index the design matrix $X$ by $R$ and $M$, i.e., $X = \mathbf{X}_{R,M}$. Let $\mathbf{C}_{R,M} = \lim_{n \rightarrow \infty} \frac{1}{n} \mathbf{X}_{R,M}' \mathbf{X}_{R,M}$. Then $\lim_{n \rightarrow \infty}{\xi^2_{\min}}(\frac{\mathbf{X}_{R,M}}{\sqrt{n}}) = \lambda_{\min}(\mathbf{C}_{R,M})$, i.e., the lowest eigenvalue of $\mathbf{C}_{R,M}$. We will characterize this eigenvalue. \\
	
	\noindent The combinatorics of the limiting frequencies of ``1"s in marginal effects variables imply that $\mathbf{C}_{R,M}$ is a block diagonal matrix with structure
	
	\begin{equation*}\mathbf{\mathbf{C}_{R,M}} = \frac{1}{K} \begin{bmatrix}
			\mathbf{B}_{R,M} & 0             & \cdots & 0            \\
			0            & \mathbf{B}_{R,M-1}  & \cdots & 0            \\
			\vdots       & \vdots        & \ddots & \vdots       \\
			0            & 0             & \cdots & \mathbf{B}_{R,1}
		\end{bmatrix}
	\end{equation*}
	
	\noindent where $\mathbf{B}_{R,M-1}$ implies this block is found in $\mathbf{C}_{R,M-1}$ (pertaining to an RCT with one less cross-treatment arm), etc. More than one block of $\mathbf{B}_{R,M-1}$ , $\mathbf{B}_{R,M-2}$, ... $\mathbf{B}_{R,1}$ is found in $\mathbf{\mathbf{C}_{R,M}}$, but only $\mathbf{\mathbf{B}_{R,M}}$ determines
	the minimum eigenvalue.
	
	The combinatorics of variable assignments also implies that
	
	\begin{enumerate}
		\item  $\mathbf{B}_{R,M}$ is an $(R-1)^M \times (R-1)^M$ matrix with recursive structure $\mathbf{B}_{R,M} = \mathbf{B}_{R,1} \otimes  \mathbf{B}_{R,M-1}$, where $\otimes$ is the Kronecker product.\footnote{Thanks to Nargiz Kalantarova for noticing this Kronecker product and its consequent implication for $\lambda_{\min}(\mathbf{B}_{R,M})$.}
		\item $\mathbf{\mathbf{B}_{R,1}}$ is an $(R-1) \times (R-1)$ matrix with recursive structure
		
		$\mathbf{\mathbf{B}_{R,1}} = \begin{bmatrix}
			R-1 & R-2             & \cdots & 1           \\
			R-2            &  & &            \\
			\vdots            &  & \mathbf{B_{R-1,1}}  &             \\
			1            &              & &
		\end{bmatrix}$ and $\mathbf{B_{2,1}} = [1].$
	\end{enumerate}
	
	\
	
	\begin{sublemma}\label{sublem:BR1} $\lambda_{\min}(\mathbf{\mathbf{B}_{R,1}}) =  \Big( 4 \sin ^2 \big(\frac{R-\frac{3}{2}}{R - \frac{1}{2}} \frac{\pi}{2} \big) \Big)^{-1}$
	\end{sublemma}
	
	\begin{proof} The  key insight of the argument\footnote{The argument is provided on Mathematics Stackexchange \citep{stackexchange}.}
		  is that ${\mathbf{B}_{R,1}}^{-1}$ is the $(R-1) \times (R-1)$ tridiagonal matrix:
		
		\begin{equation*}
			{\mathbf{B}_{R,1}}^{-1}=\begin{bmatrix}
				1&-1\\
				-1&2&-1\\
				&-1&\ddots&\ddots\\
				&&\ddots&\ddots&-1\\
				&&&-1&2
			\end{bmatrix}
		\end{equation*}
		Which has known eigenvalues $\mu_j=4\sin^2\left(\frac{j-\frac12}{R-\frac12}\frac\pi2\right)$ for $j = 1,2,...,R-1$. Thus, given that the inverse of a matrix's eigenvalues are the inverse matrix's eigenvalues, $\lambda_{\min}(\mathbf{\mathbf{B}_{R,1}}) =  \Big( 4 \sin ^2 \big( \frac{R-\frac{3}{2}}{R - \frac{1}{2}} \frac{\pi}{2} \big) \Big)^{-1}$.
	\end{proof}
	
	\
	
	\noindent (Resuming the proof of Lemma \ref{lem:cmin}) Per the multiplicative property of the eigenvalues of a Kronecker product, together with the fact that all matrices in question are positive definite, it immediately follows that $\lambda_{\min}(\mathbf{B}_{R,M}) = \lambda_{\min}(\mathbf{B}_{R,1}) \lambda_{\min} (\mathbf{B}_{R,M-1})$, which in turn implies $\lambda_{\min}(\mathbf{B}_{R,M}) = \lambda_{\min}(\mathbf{B}_{R,1})^M$. Since by Sublemma \ref{sublem:BR1}, $\lambda_{\min}(\mathbf{\mathbf{B}_{R,1}}) < 1$, $\mathbf{B}_{R,M}$ is the
	block determining the rate  with the smallest eigenvalue, and therefore, given that the eigenvalues of a block diagonal matrix are the eigenvalues of the blocks:
	
	\begin{align*}
		\lambda_{\min} (\mathbf{C}_{R,M}) = \frac{1}{K} \lambda_{\min}(\mathbf{B}_{R,M})
		= \frac{1}{K} (\lambda_{\min}(\mathbf{B}_{R,1}))^M
		= \Big( 4 R \sin ^2 \big(\frac{R-\frac{3}{2}}{R - \frac{1}{2}}  \frac{\pi}{2} \big)  \Big)^{-M}
	\end{align*}
	
	\noindent where the last equality uses Sublemma \ref{sublem:BR1}. The Lemma follows.
\end{proof}

\

\begin{proof}[Proof of Proposition \ref{prop: TVA_consistency}]
The proof is found in \cite{javanmard2013model}, proof of Theorem
2.7, with minor modifications, which we reproduce for completeness.
Label the events $\mathcal{E}:=\left\{ \widehat{S}_{TVA}=S_{TVA}\right\} $
(that the treatment variants were aggregated correctly).  Define the pseudo-true value  $
\eta_{S}^{0}:=\argmin_{\eta}{\rm E}\left[\left\Vert y-Z_{S}\eta\right\Vert _{2}^{2}\right],
$
noting that $\eta_{S_{TVA}}^{0}$ satisfies this for $S=S_{TVA}$.  Finally, let $\mathcal{F}:=\left\{ \left\Vert \hat{\eta}_{\widehat{S}_{TVA}}-\eta_{\widehat{S}_{TVA}}^{0}\right\Vert _{\infty}>\epsilon\right\} $, so it is the event that the estimator exceeds the pseudo-true value
on the estimated support by $\epsilon$.

Then, we can write
\[
\Pr\left(\mathcal{F}\right)=\Pr\left(\mathcal{F}\cap\mathcal{E}\right)+\Pr\left(\mathcal{F}\cap\mathcal{E}^{c}\right)\leq\Pr\left(\mathcal{F}\cap\mathcal{E}\right)+\Pr\left(\mathcal{E}^{c}\right).
\]
By the proof of Proposition \ref{prop: support}, we have
\[
\Pr\left(\mathcal{E}^{c}\right)\leq2K\exp\left(-\frac{n\left(\lambda/K\right)^{2}}{2\sigma^{2}}\right) = 2\exp\left(\gamma\log n-\frac{n^{1-2\gamma}\lambda}{2\sigma^{2}}\right)
\]

Turning to $\Pr\left(\mathcal{F}\cap\mathcal{E}\right)$, on the event
$\mathcal{E}$,
\[
\hat{\eta}_{\widehat{S}_{TVA}}-\eta_{\widehat{S}_{TVA}}^{0}=\left(Z_{\widehat{S}_{TVA}}'Z_{\widehat{S}_{TVA}}\right)^{-1}Z_{\widehat{S}_{TVA}}'\epsilon,
\]
since $\widehat{S}_{TVA}=S_{TVA}$. Therefore, for every $j\in\left\{ 1,\ldots,K\right\} $,
$\hat{\eta}_{S_{TVA},j}-\eta_{S_{TVA},j}^{0}$ is normally distributed with
variance order bounded above by $\frac{\sigma^{2}}{n\cdot C_{\min}}$
where $C_{\min}:=\sigma_{\min}\left(n^{-1}Z_{S_{TVA}}'Z_{S_{TVA}}\right)$
is the minimum singular value of the design matrix. But $C_{\min}\geq \frac{1}{\sqrt{K}}$
by definition since each unit is assigned to a disjoint pooled policy, and each policy pools one or more variants. And so,
\begin{align*}
\Pr\left(\mathcal{F}\cap\mathcal{E}\right) & =\Pr\left( \left\Vert \hat{\eta}_{\widehat{S}_{TVA}}-\eta_{\widehat{S}_{TVA}}^{0}\right\Vert _{\infty}>\epsilon\cap\mathcal{E}\right)\\
 & \leq\Pr\left(\sup_{j}\left|\hat{\eta}_{S_{TVA},j}-\eta_{S_{TVA},j}^{0}\right|>\epsilon\right)\\
 & \leq2\exp\left(-\frac{n\cdot\epsilon^2\cdot K^{-1/2}}{2\sigma^{2}}\right),
\end{align*}
which uses a Gaussian tail bound and then a union bound for uniform
control over $j\in\left\{ 1,\ldots,\left|S_{TVA}\right|\right\} $.

Putting the pieces together we have
\[
\Pr\left(\mathcal{F}\right)\leq2\exp\left(-\frac{n\cdot\epsilon^2\cdot K^{-1/2}}{2\sigma^{2}}\right)+2\exp\left(\gamma\log n-\frac{n^{1-2\gamma}\lambda}{2\sigma^{2}}\right)
\]
This establishes that $\Pr(\mathcal{F}) \rightarrow 0$ for every $\epsilon$, i.e., the consistency of the estimator to the pseudo-true values. With probability $1-2\exp\left(\gamma\log n-\frac{n^{1-2\gamma}\lambda}{2\sigma^{2}}\right)\rightarrow1$, the event $\mathcal{E}$ is active and the pseudo-true value will be the true value. In this case consider $\epsilon=\sqrt{q\cdot\frac{\log n}{n^{1-\gamma/2}}\cdot2\sigma^{2}}$
for any positive $q$. Then
\begin{align*}
	\Pr\left(\mathcal{F}\cap\mathcal{E}\right) &= 2\exp\left(-\frac{n\epsilon^2 C_{\text{min}}}{2\sigma^{2}}\right) \\
& \leq2\exp\left(-q\cdot\frac{\log n}{n^{1-\gamma/2}}\cdot2\sigma^{2}\cdot\frac{nK^{-1/2}}{2\sigma^{2}}\right)\\
	& \leq2\exp\left(-q\cdot\log n\right)\\
	& =\frac{2}{n^{q}}\rightarrow0,
\end{align*}
i.e., $\left\Vert \hat{\eta}_{\widehat{S}_{TVA}}-\eta_{S_{TVA}}^{0}\right\Vert _{\infty} < \sqrt{q\cdot\frac{\log n}{n^{1-\gamma/2}}\cdot2\sigma^{2}}$ with high probability, completing the proof.
\end{proof}

\

\begin{proof}[Proof of Proposition \ref{prop:normalinf}] As in the proof of Proposition \ref{prop: TVA_consistency}, let $\mathcal{E}:=\left\{ \widehat{S}_{TVA}=S_{TVA}\right\} $. We can decompose\footnote{We thank Adel Javanmard for a helpful discussion.} $\sqrt{n}\left(\hat{\eta}_{\widehat{S}_{TVA}}-\eta_{\widehat{S}_{TVA}}^{0}\right)$ as

\begin{align*}
\sqrt{n}\left(\hat{\eta}_{\widehat{S}_{TVA}}-\eta_{\widehat{S}_{TVA}}^{0}\right) & ={\bf 1}\left\{ \mathcal{E}\right\} \cdot\sqrt{n}\left(Z_{S_{TVA}}'Z_{S_{TVA}}\right)^{-1}Z_{S_{TVA}}'\epsilon+{\bf 1}\left\{ \mathcal{E}^{c}\right\} \cdot \sqrt{n}\left(Z_{\widehat{S}_{TVA}}'Z_{\widehat{S}_{TVA}}\right)^{-1}Z_{\widehat{S}_{TVA}}'\epsilon\\
		& =\sqrt{n}\left(Z_{S_{TVA}}'Z_{S_{TVA}}\right)^{-1}Z_{S_{TVA}}'\epsilon-{\bf 1}\left\{ \mathcal{E}^{c}\right\} \cdot\sqrt{n}\left(Z_{S_{TVA}}'Z_{S_{TVA}}\right)^{-1}Z_{S_{TVA}}'\epsilon\\
		& +{\bf 1}\left\{ \mathcal{E}^{c}\right\} \cdot \sqrt{n} \left(Z_{\widehat{S}_{TVA}}'Z_{\widehat{S}_{TVA}}\right)^{-1}Z_{\widehat{S}_{TVA}}'\epsilon.
\end{align*}

The proof strategy is to see that central limit theorem ensures the asymptotic normality of the first term in this sum in the usual way, while the remaining terms will asymptotically vanish in probability.

Let us take the first term. First, by Assumption \ref{assu: design}, note that $K^2 \log(K) = O \left(n^{2\gamma} \log(n) \right)= o(n)$, using that $2\gamma < 1 $ and that $\log(n)$ grows more slowly than any polynomial in $n$. $K$ thereby satisfies a critical condition for asymptoptic normality of OLS in a regime where the number of parameters grow (Corollary 2.1 of \cite{he2000parameters}, using also that the score function for linear regression is smooth). Then, using that these are OLS estimates of $\eta_{S_{TVA}}^0$,
	\[
	\sqrt{n}\left(Z_{S_{TVA}}'Z_{S_{TVA}}\right)^{-1}Z_{S_{TVA}}'\epsilon\rightsquigarrow\mathcal{N}\left(0,H^{-1}JH^{-1}\right).
	\]

The second term is ${\bf 1}\left\{ \mathcal{E}^{c}\right\} \cdot \sqrt{n} \left(Z_{S_{TVA}}'Z_{S_{TVA}}\right)^{-1}Z_{S_{TVA}}'\epsilon$. We already showed that $\sqrt{n} \left(Z_{S_{TVA}}'Z_{S_{TVA}}\right)^{-1}Z_{S_{TVA}}'\epsilon$ is asymptotically normal and so $O_p(1)$. Since ${\bf 1}\left\{ \mathcal{E}^{c}\right\}$ is $o_p(1)$, the whole term is $o_p(1)$.

The less trivial term is the third one: ${\bf 1}\left\{ \mathcal{E}^{c}\right\} \cdot\sqrt{n} \left(Z_{\widehat{S}_{TVA}}'Z_{\widehat{S}_{TVA}}\right)^{-1}Z_{\widehat{S}_{TVA}}'\epsilon$.  The point is that $\left(Z_{\widehat{S}_{TVA}}'Z_{\widehat{S}_{TVA}}\right)^{-1}Z_{\widehat{S}_{TVA}}'\epsilon$, which potentially inherits omitted variable bias by including incorrect regressors, is nevertheless uniformly bounded in $n$ and $K$. In detail, first note that $\left\Vert\left(Z_{\widehat{S}_{TVA}}'Z_{\widehat{S}_{TVA}}\right)^{-1}Z_{\widehat{S}_{TVA}}'\epsilon\right\Vert_\infty  \leq \left\Vert Z_{\widehat{S}_{TVA}}'\epsilon\right\Vert_\infty $ because $\left(Z_{\widehat{S}_{TVA}}'Z_{\widehat{S}_{TVA}}\right)^{-1}$ is a positive definite block diagonal matrix with every entry $< 1$. Secondly $\left\Vert Z_{\widehat{S}_{TVA}}'\epsilon \right\Vert_\infty < K n \epsilon$ since $Z_{\widehat{S}_{TVA}}$ is a binary matrix. Since $\epsilon$ is $O_p(1)$, $ \left\Vert \left(Z_{\widehat{S}_{TVA}}'Z_{\widehat{S}_{TVA}}\right)^{-1}Z_{\widehat{S}_{TVA}}'\epsilon \right\Vert_\infty $ is uniformly $O_p(Kn)$ over all misspecifications $\widehat{S}_{TVA}$.

Thus, ${\bf 1}\left\{ \mathcal{E}^{c}\right\} \cdot \sqrt{n} \left(Z_{\widehat{S}_{TVA}}'Z_{\widehat{S}_{TVA}}\right)^{-1}Z_{\widehat{S}_{TVA}}'\epsilon$ is uniformly bounded in probability by $\Pr(\mathcal{E}^{c}) O_p(Kn^{\frac{3}{2}})$. But since $\Pr\left(\mathcal{E}^{c}\right)\leq2K\exp\left(-\frac{n\left(\lambda/K\right)^{2}}{2\sigma^{2}}\right) = 2\exp \left(\log(K) -\frac{n\left(\lambda/K\right)^{2}}{2\sigma^{2}} \right)$, recalling the proof of Proposition \ref{prop: TVA_consistency},

\begin{align*}
O_p(K n^\frac{3}{2})\Pr(\mathcal{E}^{c}) &= O_p \left(\frac{3}{2}\log(n) + 2 \log(K) - \frac{n\left(\lambda/K\right)^{2}}{2\sigma^{2}}  \right) \\
& =O_p \left(\frac{3}{2}\log(n) + 2 \gamma \log(n) - \frac{n^{1-2\gamma}\lambda^2}{2\sigma^{2}}  \right)\\
& = o_p(1).
\end{align*}

And thus ${\bf 1}\left\{ \mathcal{E}^{c}\right\} \cdot \sqrt{n} \left(Z_{\widehat{S}_{TVA}}'Z_{\widehat{S}_{TVA}}\right)^{-1}Z_{\widehat{S}_{TVA}}'\epsilon$ also vanishes in probability. Putting this all together
\begin{align*}
\sqrt{n}\left(\hat{\eta}_{\widehat{S}_{TVA}}-\eta_{\widehat{S}_{TVA}}^{0}\right) &\rightsquigarrow\mathcal{N}\left(0,H^{-1}J H^{-1}\right) + o_p(1) + o_p(1) \\
& \rightsquigarrow\mathcal{N}\left(0,H^{-1}J H^{-1}\right) ,
\end{align*}
which completes the proof.
\end{proof}

\

\begin{proof}[Proof of Lemma \ref{lem: localalter}]
The only non-trivial case is for $\kappa$ and $\kappa'$ that are policy variants, i.e. have the same treatment profile. Here the proof follows the basic intuition from the Hasse diagram. If $\kappa$ and $\kappa'$ were adjacent anywhere in the diagram (say for some treatment combinations $k, k'$ such that $\kappa$ pools $k$ and $\kappa'$ pools $k'$) then $\alpha_{\min\{k, k'\}} = \frac{r_{\kappa \kappa'}}{\sqrt{n}}$, violating Assumption 2. On the other hand, if $\kappa$ and $\kappa'$ are not adjacent anywhere, one can consider the policies "in between" them, i.e. the policies $\zeta_1,...,\zeta_n \in S_{TVA}$ such that $\kappa$ is adjacent to $\zeta_1$, $\kappa'$ is adjacent to $\zeta_n$, $\zeta_{i}$ is adjacent to $\zeta_{i+1}$ for $i \notin \{1,n\}$, and the union of these $\zeta_i$ pools all $z$ such that $\min\{k,k'\} < z < \max\{k,k'\}$ and which are neither pooled by $\kappa$ nor $\kappa'$ . Then $\kappa$ and $\kappa'$ can be made local alternatives with the relevant ``in between" marginals from $\alpha^0$ satisfying Assumption 2, by making the $|\eta_{S_{TVA},\zeta_i}^{0}|$ sufficiently large. Following this construction where applicable (in each stage of the iterative procedure where a new pair is made into local alternatives, to ensure that prior local alternatives remain local alternatives, the absolute magnitude of the policy effects of the prior alternatives may have to be increased), an entire set $\alpha^0$ satisfying Assumption 2 ensues.
\end{proof}

\


\clearpage

\setcounter{table}{0}
\renewcommand{\thetable}{B.\arabic{table}}
\setcounter{figure}{0}
\renewcommand{\thefigure}{B.\arabic{figure}}

\section{Pooling Procedure and Properties}\label{sec:pooling}

Here we show how to construct a set of pooled policies $\hat{S}_{TVA}$ from an estimated support of marginals
 $\hat{S}_{\alpha}$. In case the marginals are correctly estimated $\hat{S}_{\alpha} = S_{\alpha}$, we will show that the implied pooling $\hat{S}_{TVA} = S_{TVA}$ has the properties of $\Lambda$-admissibility and maximality mentioned in Section \ref{sec:Estimation}. 

Given $\hat{S}_\alpha$ let $[\hat{S}_\alpha]$ denote its partition into sets of support vectors with the same treatment profile.  Each $S \in [\hat{S}_\alpha]$ is thus a set of treatment combinations $\{k_1,...k_n\}$. For each $k_i \in \{k_1,...k_n\}$, define the set:

\begin{equation}
	A_{k_i} = \{k \in \mathcal{K} | P(k) = P(k_i) \text{ and } k \geq k_i\}
\end{equation}

In words, $A_{k_i}$ are the policies sharing policy $k_i$'s treatment profile, that weakly dominate $k_i$. 

Now consider a simple operator that selects, for every set $B$, either the set or its complement. Let us write this as $B^a$, where $a \in \{1,c\}$. When $a=1$, the operator selects $B$, while when $a =c$ the operator selects $B^c$. Following this, the pooled policies $\widehat{S}_{TVA}$ is defined as the collection of sets: 

\begin{equation}\label{eq:pooledpolicyatoms}
	\mathcal{A} = A_{k_1}^{a_1} \cap ... \cap A_{k_n}^{a_n}
\end{equation}

which are nonempty, and where furthermore at least one $a_i = 1$. Algorithm \ref{alg:pooling} describes this construction of $\widehat{S}_{TVA}$ procedurally.

We will now lend intuition for this construction, which will clarify its properties. For this discussion, assume the marginal support is correctly estimated, i.e. $\widehat{S}_\alpha = S_{\alpha}$. Considering the relationship between marginal effects $\alpha$ and policy effects $\beta$ given by \eqref{eq:alpha_beta_rel}, $A_k$ is the set of policies whose effects are partly determined or ``influenced'' by $\alpha_{k}$.

It is useful to depict this on the Hasse with a simple example. Take $M=2, R=4$, and the treatment profile where both arms are on. The Hasse diagram of the unique policies within this treatment profile is shown on Figure \ref{fig:Hasse_3}. 

Consider for example the set $A_{[2,1]}$, stemming from $\alpha_{[2,1]}$, depicted as all those policies within the blue contour on this figure. These are all the policies that are influenced by $\alpha_{[2,1]}$. Motivated by this visual depiction of influence, let us say that in general $A_k$ is the ``sphere of influence'' of $\alpha_k$. 

A policy's effect is entirely determined by the various ``spheres'' acting on it. We can restrict attention to those spheres stemming from $\alpha_k \neq 0$ because the other spheres are inactive (spheres of ``non-influence''). Each set $\mathcal{A}$ constructed per \eqref{eq:pooledpolicyatoms}, describes a set of policies subject to a unique set of active spheres. When $a_i = 1$, it means that the sphere $A_{k_i}$ stemming from $\alpha_{k_i}$ is active; when $a_i = c$ the sphere is inactive. Thus in Figure \ref{fig:Hasse_3},  $\mathcal{A}_1 := A_{[1,2]} \cap A_{[2,1]}$ are those policies influenced by both $\alpha_{[1,2]}$ and $\alpha_{[2,1]}$.  $\mathcal{A}_2 := A_{[1,2]} \cap A_{[2,1]}^c$ are those policies influenced by \emph{only} $\alpha_{[1,2]}$ (and not by $\alpha_{[2,1]}$); vice-versa for $\mathcal{A}_3 := A_{[1,2]}^c \cap A_{[2,1]}$.

The first main point is that policies subject to the same active spheres of influence have the same policy effects. For any pooled policy $\mathcal{A} \in S_{TVA}$ constsructed per \eqref{eq:pooledpolicyatoms}, and any policy $k \in \mathcal{A}$, the $\alpha \leftrightarrow \beta$ parameter relationship \eqref{eq:alpha_beta_rel} immediately implies $\beta_{k} = \sum_{i|a_i = 1} \alpha_{k_i}$. In words, the pooled policy effects is the sum of the the marginal effects whose spheres are active for it. Thus $S_{TVA}$ meets requirement (1) of $\Lambda$-admissibility (defined in Assumption \ref{assu: admissible} of Section \ref{sec:Estimation})

Requirement (2) of $\Lambda$-admissability, that only variants are pooled, is met by construction. Requirement (3), that contiguous intensities are pooled, stems from the fact that each sphere of influence $A_k$ is clearly a convex shape in the Hasse, and that the intersection of convex shapes is convex. It is easily seen to be met in the example of Figure \ref{fig:Hasse_3}. 

Finally requirement (4) is easily seen to be met in a quick proof by contradiction. Note that requirement (4), while a technical condition, has an elegant description in the Hasse diagram: for every parallelogram one draws in the Hasse diagram of a treatment profile, if the ``top'' segment is pooled together, so is the ``bottom segment''. Assume not, i.e. the bottom segment is cleaved. There must be some nonzero marginal effect along the bottom segment that was responsible for this cleaving. But the sphere of influence of this marginal cuts through the top segment too, cleaving the top segment into distinct pooled sets, a contradiction. 

\begin{obs} $S_{TVA}$ is $\Lambda$-admissible. 
\end{obs}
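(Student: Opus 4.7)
The plan is to verify each of the four requirements of Assumption \ref{assu: admissible} in turn for the partition $S_{TVA}$ built from $S_\alpha$ via \eqref{eq:pooledpolicyatoms}. The organizing observation I would exploit throughout is that the pools $\mathcal{A}$ are the nonempty atoms of the Boolean algebra generated by the spheres of influence $\{A_{k_i}\}_{k_i \in S_\alpha}$: two variants $k$, $k'$ lie in the same pool if and only if they have the same indicator profile $(\mathbf{1}\{k \in A_{k_i}\})_{k_i \in S_\alpha}$.

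For requirement (1), I would invoke the parameter relation \eqref{eq:alpha_beta_rel} directly: any $k \in \mathcal{A}$ satisfies $\beta_k = \sum_{k_i \leq k,\, P(k_i) = P(k)} \alpha_{k_i} = \sum_{i: a_i = 1} \alpha_{k_i}$, because $\alpha_{k'} = 0$ for $k' \notin S_\alpha$; this sum depends only on $\mathcal{A}$, giving equal treatment effects. Requirement (2) holds because each $A_{k_i}$ is contained in the single profile $P(k_i)$, so any nonempty intersection with at least one active sphere lies in a single profile, while the residual class of policies lying outside every active sphere is pooled with $k_0$, which the condition explicitly permits. For requirement (3), within a fixed profile each $A_{k_i}$ is an up-set in the intensity order and each $A_{k_i}^c$ is the corresponding down-set, and any intersection of up-sets and down-sets is order-convex: if $k^{(1)} \leq k'' \leq k^{(2)}$ with $k^{(1)}, k^{(2)} \in \mathcal{A}$, then up-set constraints pass up from $k^{(1)}$ to $k''$ and down-set constraints pass down from $k^{(2)}$ to $k''$, placing $k'' \in \mathcal{A}$.

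The principal obstacle is the parallelogram condition (4), where I plan to use a direct algebraic characterization of co-pooling for adjacent variants. Take $k = [r_1,\ldots,r_i,\ldots,r_M]$ and $k' = [r_1,\ldots,r_i-1,\ldots,r_M]$. For any $k_\ell = [t_1,\ldots,t_M] \in S_\alpha$ with $P(k_\ell) = P(k)$, the indicators $\mathbf{1}\{k \in A_{k_\ell}\}$ and $\mathbf{1}\{k' \in A_{k_\ell}\}$ disagree if and only if $r_m \geq t_m$ for all $m \neq i$ and $t_i = r_i$. Hence $k$, $k'$ are co-pooled iff no $k_\ell \in S_\alpha$ exhibits this obstruction. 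For any parallel bottom pair $j, j'$ with coordinates $s_m \leq r_m$ for $m \neq i$ and the same arm-$i$ step $(r_i, r_i-1)$, an obstruction against co-pooling $j, j'$ would require some $k_\ell$ with $s_m \geq t_m$ for $m \neq i$ and $t_i = r_i$; but then $r_m \geq s_m \geq t_m$, so the same $k_\ell$ would also obstruct $k, k'$, contradicting the hypothesis. Thus $j, j'$ lie in a single pool (possibly distinct from that of $k,k'$), completing (4) and the verification.
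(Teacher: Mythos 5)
Your proof is correct and follows essentially the same route as the paper's: the atoms-of-spheres-of-influence characterization, the relation \eqref{eq:alpha_beta_rel} for requirement (1), profile containment for (2), order-convexity of intersections of up-sets and down-sets for (3), and the observation that any marginal separating the bottom pair of a parallelogram also separates the top pair for (4). You merely make explicit (via the algebraic characterization of when adjacent variants' indicator profiles disagree) what the paper's argument for the parallelogram condition leaves as an informal contradiction sketch.
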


The second point is that contiguous policies subject to distinct active spheres of influence have distinct policy effects. While one can construct instances of distinct and adjacent active spheres resulting out to the same policy effects (exactly or statistically), these rather contrived constructing are precluded by Assumption \ref{assu: support} in Section \ref{sec:Estimation} of well separated nonzero marginal effect sizes. More precisely, if $\mathcal{A}_1$ and $\mathcal{A}_2$ are pooled policies adjacent at some $k \in \mathcal{A}_1, k' \in \mathcal{A}_2$), then $\eta_{\mathcal{A}_1}$ and $\eta_{\mathcal{A}_2}$ cannot be exactly or statistically indistinguishable unless 
$\alpha_{kk'}$ is zero or tending to zero, which is precluded by Assumption \ref{assu: support}. Thus, $S_{TVA}$ has no redundancies, in the sense of leaving contiguous intensities with the same policy effects unpooled or ``cleaved'' from one another. It is the coarsest $\Lambda$-admissible partition. 

\begin{obs} $S_{TVA}$ is maximal in $\mathcal{P}_{\Lambda}$.
\end{obs}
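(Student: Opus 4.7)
The plan is to argue by contradiction. Suppose there exists $S' \in \mathcal{P}_{\mid \Lambda}$ that strictly coarsens $S_{TVA}$, meaning every pool of $S_{TVA}$ is contained in some pool of $S'$, with at least one pool of $S'$ containing at least two distinct pools of $S_{TVA}$. I will show that this forces a nonzero marginal effect $\alpha_k^0$ to vanish, contradicting Assumption \ref{assu: support}.

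First I would reduce the problem to an adjacency claim. If two distinct pools $\mathcal{A}_1,\mathcal{A}_2 \in S_{TVA}$ are merged in $S'$, they must share a treatment profile by admissibility part (2), and by admissibility part (3) (contiguity of the merged pool), there must exist treatment combinations $k \in \mathcal{A}_1$ and $k' \in \mathcal{A}_2$ that are adjacent in the Hasse diagram of that profile, i.e., $k'$ covers $k$ (or vice versa) in the intensity order. By admissibility part (1) applied to $S'$, $\beta_k^0 = \beta_{k'}^0$.

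Next I would exploit the ``sphere of influence'' description used in the preceding discussion. Since $\mathcal{A}_1$ and $\mathcal{A}_2$ were constructed in \eqref{eq:pooledpolicyatoms} as intersections picking, for each $\alpha_{k_i}^0 \neq 0$ in $S_\alpha$, whether $A_{k_i}$ or its complement is active, the two pools differ in at least one choice of $a_i$. Because $k$ and $k'$ are adjacent and differ by a single unit of dosage in exactly one arm, exactly one sphere boundary can separate them: namely, the sphere $A_{k^\ast}$ where $k^\ast = \max(k, k')$ in the intensity order. Hence the distinction between $\mathcal{A}_1$ and $\mathcal{A}_2$ at this adjacency is driven by whether $A_{k^\ast}$ is active. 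Combining this with the $\alpha$-to-$\beta$ relation \eqref{eq:alpha_beta_rel}, one obtains $\beta_{k'}^0 - \beta_k^0 = \pm \alpha_{k^\ast}^0$, so the equality $\beta_k^0 = \beta_{k'}^0$ forces $\alpha_{k^\ast}^0 = 0$. But $k^\ast \in S_\alpha$ by construction of the spheres (only nonzero marginals generate the spheres used to build $S_{TVA}$), and Assumption \ref{assu: support} bounds $|\alpha_{k^\ast}^0|$ away from zero, yielding the contradiction.

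The step I expect to be the main obstacle is the clean justification that adjacency in the Hasse between $k$ and $k'$ translates into exactly one active sphere being flipped between $\mathcal{A}_1$ and $\mathcal{A}_2$. This requires carefully verifying that the spheres $A_{k_i}$ used in \eqref{eq:pooledpolicyatoms} form a ``cover'' of the Hasse whose boundaries are precisely the covering relations of the intensity order restricted to each profile; in particular, one must rule out that $\mathcal{A}_1$ and $\mathcal{A}_2$ differ in the activation of two or more spheres simultaneously at the adjacency $(k, k')$. Once that geometric fact is established, the arithmetic using \eqref{eq:alpha_beta_rel} and the beta-min condition of Assumption \ref{assu: support} closes the argument immediately.
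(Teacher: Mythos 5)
Your overall strategy---reduce a strict coarsening to a Hasse adjacency $k \lessdot k'$ inside one merged pool, then argue that the implied equality $\beta^0_k=\beta^0_{k'}$ forces a nonzero marginal to vanish---mirrors the paper's informal ``spheres of influence'' discussion, but the step you yourself flag as the main obstacle is genuinely false, and the argument does not close without a different ingredient. Adjacent treatment combinations are \emph{not} separated by a single sphere, and $\beta^0_{k'}-\beta^0_k$ is not $\pm\alpha^0_{\max(k,k')}$. From \eqref{eq:alpha_beta_rel}, if $k'=k+e_i$ then $\beta^0_{k'}-\beta^0_k=\sum_{k''}\alpha^0_{k''}$ where the sum runs over the entire slab $\{k''\,:\,k''_i=k'_i,\ k''\le k',\ P(k'')=P(k')\}$; this is a singleton only when every other active arm of $k'$ sits at its lowest nonzero dosage. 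Concretely, with $M=2$ take $\alpha^0_{[1,1]}=5$, $\alpha^0_{[2,1]}=c$, $\alpha^0_{[2,2]}=-c$, $\alpha^0_{[1,2]}=0$: every nonzero marginal respects Assumption \ref{assu: support}, the atoms $\{[1,1],[1,2]\}$ and $\{[2,2]\}$ of $S_{TVA}$ are adjacent at $([1,2],[2,2])$ and are separated by \emph{two} active spheres, $A_{[2,1]}$ and $A_{[2,2]}$, yet $\beta^0_{[1,2]}=\beta^0_{[2,2]}=5$. So the beta-min condition alone cannot rule out this merge, and the contradiction you aim for never materializes. (Relatedly, $\max(k,k')$ need not even lie in $S_\alpha$ when the two atoms are distinct.)

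What actually blocks such merges is part (4) of Assumption \ref{assu: admissible}, which your proof never invokes. In the example, pooling $[1,2]$ with $[2,2]$ forces $[1,1]$ and $[2,1]$ into a common pool (the ``bottom'' of the parallelogram), and $\beta^0_{[2,1]}-\beta^0_{[1,1]}=\alpha^0_{[2,1]}=c\neq 0$ then violates part (1). The general repair is to pick a minimal $k''\in S_\alpha$ separating $k$ from $k'$, use part (4) to descend to the adjacent pair $(k'',k''-e_i)$ where the slab sum collapses to the single term $\alpha^0_{k''}$, and only then apply Assumption \ref{assu: support}. A secondary gap: part (3) is order-convexity, which does not by itself guarantee that two distinct merged atoms meet at a Hasse adjacency---incomparable policies such as $[1,2]$ and $[2,1]$ with equal effects can be merged with no intermediate element existing---so even your reduction to adjacency needs more care than admissibility part (3) provides. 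The paper's own justification is an informal paragraph that elides the same cancellation issue, but it is the construction \eqref{eq:pooledpolicyatoms} together with part (4) that delivers maximality; as written, your argument cannot be completed.
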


	\begin{figure}[h]
		\begin{center}
			\includegraphics[scale=0.5]{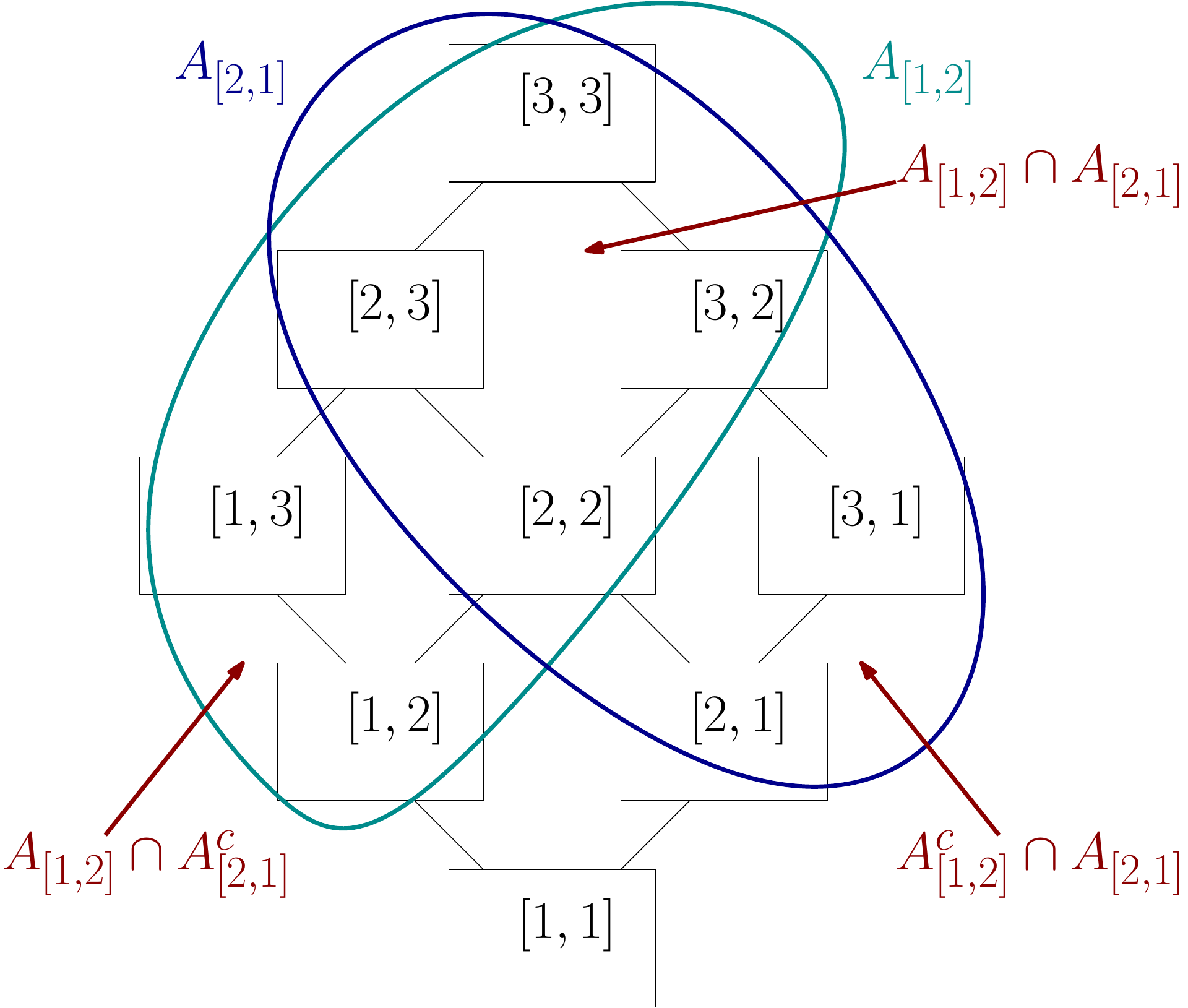}
			\caption{Hasse Diagram with $A_{[2,1]}$ and $A_{[1,2]}$ with complements and intersections. \label{fig:Hasse_3}}
		\end{center}
	\end{figure}

\begin{algorithm}
	\SetAlgoLined
	\SetKwInOut{Input}{input} \SetKwInOut{Output}{output}
	
	\Input{Estimated support $\widehat{S}_\alpha$ from the marginal specification \eqref{eq: TVA} }
	\Output{Estimated pooled policies $\widehat{S}_{TVA}$ for pooled specification \eqref{eq: pooled_pruned}}
	
	Partition $\widehat{S}_\alpha$ into $[\widehat{S}_{\alpha}]$ per the treatment profile mapping $P(.)$ \;
	Initialize $\widehat{S}_{TVA} \longleftarrow \emptyset$ \;
	\For{$S \in [\widehat{S}_{\alpha}]$}{
		discover $S = \{k_1,...,k_n\}$\;
		generate $\{A_{k_1},...,A_{k_n}\}$\;
		\For{each $(a_1,...,a_n) | a_i \in \{1,c\}$} {
			generate $\mathcal{A} = A_{k_1}^{a_1} \cap ... \cap A_{k_n}^{a_n}$ \;
			\If{$\mathcal{A} \neq \emptyset$ and $\mathcal{A} \neq A_{k_1}^{c} \cap ... \cap A_{k_n}^{c}$}{
				$\widehat{S}_{TVA} \longleftarrow \widehat{S}_{TVA}\cup \mathcal{A}$\;
			}
		}
	}
	\Return{$\widehat{S}_{TVA}$}
	\caption{Pooling Procedure}
	\label{alg:pooling}
\end{algorithm}


\setcounter{table}{0}
\renewcommand{\thetable}{C.\arabic{table}}
\setcounter{figure}{0}
\renewcommand{\thefigure}{C.\arabic{figure}}
\clearpage
\section{Simulations} \label{sec:simulations}

\subsection{TVA without Puffering,  \eqref{eq: TVA},  fails irrepresentability} \label{subsec:fail_irrep}

Consider the marginal effect covariate where all $M$ arms are ``on" with highest intensity, i.e., $X_{k^*}$ for $k^* = (R-1,...,R-1)$. We will show that this covariate is ``representable" by the other covariates. Intuitively, this means too much of this covariate is explained by the others. Formally, the $L_1$ norm of the coefficients (excluding intercept) from a regression of this covariate over the others is too great (it exceeds $>1$). That is, if an OLS regression finds:

\[X_{k^*} = \hat{\gamma}_0 + \sum_{k \in K, k \neq {k^*}} \hat{\gamma}_k X_k.\]

$X_{k^*}$ is representable by the other covariates if $\sum_{k \in K, k \neq {k^*}} |\hat{\gamma}_k|
>1$.

We demonstrate this through a proof by example. A simulation establishes that $X_{k^*}$ is representable (and therefore that the specification fails irrepresentability) for a computationally reasonable range of $R$ and $M$. We see that the patterns imply that irrepresentability fails even more dramatically for larger $R$ and $M$. 

In this simulation, we choose large $n = 10,000$ so that the propensities of ``1" within each covariate have stabilized. We consider two kinds of regressions: an ``unstandardized" regression where the raw marginal effects covariates are regressed, and a ``standardized" regression where the marginal covariates are first standardized by the $L_2$ norm. The latter corresponds to a preprocessing step that LASSO packages typically apply before LASSO selecting; we would like to know if irrepresentability fails even in this case. Indeed, we see the $L_1$  norms are greater than 1 in both cases, and irrepresentability fails.

\begin{table}[h]
\centering
\begin{tabular}{rrrr}
  \hline
$R$ & $M$ & $L_1$ norm (standardized vars) & $L_1$ norm (unstandardized vars) \\ 
  \hline
3 & 2 & 1.73 & 1.26 \\ 
  3 & 3 & 3.67 & 2.32 \\ 
  3 & 4 & 7.66 & 4.2 \\ 
  4 & 2 & 1.77 & 1.24 \\ 
  4 & 3 & 3.98 & 2.43 \\ 
  4 & 4 & 8.27 & 4.14 \\ 
  5 & 2 & 1.87 & 1.28 \\ 
  5 & 3 & 3.78 & 2.29 \\ 
  5 & 4 & 7.90 & 4.70 \\ 
   \hline
\end{tabular}
\end{table}

\setcounter{table}{0}
\renewcommand{\thetable}{D.\arabic{table}}
\setcounter{figure}{0}
\renewcommand{\thefigure}{D.\arabic{figure}}
\clearpage
\section{Robustness} \label{sec: robustness}

 \setcounter{table}{0}
\renewcommand{\thetable}{C.\arabic{table}}

In this section we consider robustness of TVA conclusions on our data. As motivated in the body, we consider the plausible environment of Section \ref{sec:Estimation} so that a data driven procedure can reveal a set of relevant pooled policies and estimate it together with an estimate of the single best policy.\footnote{We emphasize again that exact sparsity of Assumption \ref{assu: design} can be practically relaxed to some of the approximate sparsity regimes explored in Section  \ref{sec: simulations} subsection \ref{subsec: sparisty robustness}} The main issues are then the sensitivity of TVA results to (1) level of sparsity imposed and (2) the particular draw of the data. We discuss these below, and in doing so also elaborate on the ``admissible" LASSO penalties $\lambda$ for evaluating robustness as well as the choice of $\lambda$ emphasized in the body. We intend for this to be helfpul as a user's guide for future practitioners. 

\underline{Sparsity level}

The LASSO penalty level $\lambda$, which determines the level of sparsity imposed by TVA, trades off Type I and Type II error. A higher $\lambda$ implies less of the first at the price of the second. Our main recommendation is to err on the side of lower Type I error. For one, inclusion in $\widehat{S}_{TVA}$ of false positives might (misleadingly) over-attenuate the best policy for its winner's curse if one of them emerges as the ``second best". Secondly, and more seriously, one of these false positives might itself be selected as the best policy. From a policy standpoint this is a particularly dangerous error in the context of government advice. So, adopting a conservative stance, the admissible $\lambda$ will be higher than lower. 

We determine the sufficiently high $\lambda$ through the following exercise, which can be applied generally. Namely, we first consider raw sensitivity of best policy selection and winner's curse adjusted estimates to a range of $\lambda$, as in Figure \ref{fig:penalty_sensitivity} panel A.\footnote{Note that for immunizations, the support coverage (of the 75 policies) ranges from 3\% (right side of the diagram at the highest displayed $\lambda$) to 52\% (left side of the diagram at the lowest displayed $\lambda$). For immunizations/\$ the support coverage ranges from 4\% to 39\%.} Within this range, we see that for immunizations/\$, the policy (Info Hubs (All), No Incentives, SMS (All)) is robustly selected across the range. On the other hand, for immunizations the policy (Info Hubs, Slopes (All), SMS (All)) is selected for $\lambda \geq 0.47$ while (No Seeds, High Slopes, Low SMS) is selected for $\lambda < 0.47$. Furthermore, for immunizations/\$ the winners curse adjusted estimates of (Info Hubs, Slopes (All), SMS (All)) reject 0 except at $\lambda \leq 0.00045$. On the other hand, for the outcome of immunizations the winners curse adjusted estimate of (Info Hubs, Slopes (All), SMS (All))  rejects 0 for the range $\lambda \geq 0.47$. For $\lambda < 0.47 $ the other selected policy (No Seeds, High Slopes, Low SMS) always fails to reject 0.

Next, we interpret these preliminary findings in light on the aforementioned false positive risk. To do this, we plot the unadjusted post-LASSO estimates of the best policy together with the second best estimates, as in panel B of Figure \ref{fig:penalty_sensitivity}. This gives a sense of the the nominal estimates/confidence intervals and the `perpetrators' of winners curse attenuations. We evaluate these nominal values per the observation by  \cite{taylor2015statistical} that a LASSO based procedure greedily favors false positives over true negatives, and so actual type I error rates and $p$-values in post-selection inference are larger than nominal ones. Thus nominally insignificant policies are even more likely to be true negatives, and nominally borderline significant policies are likely to be insignificant as well. Our conservative principle thus tells us dial up the $\lambda$ to remove considerations of these. We demonstrate below. 

Consider first the immunizations/\$ outcome. For $\lambda < 0.0008$ these second bests (policies in green and pink) have nominal OLS confidence intervals that fail to reject 0. Thus $\lambda$ should be increased to the range $\lambda \geq 0.0008$. Since after $\lambda =0.00160$ both this policy and the policy support is entirely deselected, the admissible $\lambda$ is the interval $[0.0008,0.00160)$. Per Figure \ref{fig:penalty_sensitivity},  panel A1, the winner's curse estimates are robust in this admissible interval.

Proceeding to the immunizations outcome, for $\lambda \leq 0.19$ and $\lambda \in (0.415,0.452]$, even nominal post-LASSO estimates of (No Seeds, High Slopes, Low SMS) fail to reject 0. For the range $0.19 < \lambda < 0.358$, the noisy and nominally insignificant policy (Info Hubs, High Slope, No Reminder) is responsible for the winner's curse attenuations. We therefore dial up $\lambda$ further. Doing so gives an extremely narrow range ($\lambda \in [0.358,0.415]$) where (No Seeds, High Slopes, Low SMS) is significant, and that too barely ($p=0.048$ at the displayed $\lambda = 0.039$). Keeping in mind the aforementioned observation by \cite{taylor2015statistical}, this nominal significance is particularly unreliable at the threshold. Moreover, it is in stark contrast to the robust stability of the post-LASSO estimates of (Info Hubs, Slopes (All), SMS (All)) at $\lambda \geq 0.39$. Altogether this is strongly suggestive that (No Seeds, High Slopes, Low SMS) is a policy that we should disinclude as a false positive. This is achieved at $\lambda > 0.452$. Since after $\lambda \geq 0.53$ this and all policies are dropped, the admissible $\lambda$ for this interval is $(0.452,0.53)$. Per Figure  \ref{fig:penalty_sensitivity},  panel A2, the winner's curse estimates are robust in this admissible interval. 

The two sets of admissible $\lambda$ are for two entirely different scale of outcomes and cannot usually be directly compared. However, since our specific implementation of LASSO on $\text{Puffer}_N$ is bijective with a $p$-value threshold in a backwards elimination (following \cite{rohe2014note}), following this bijection the set of admisslbe $\lambda$ for immunization/\$ is wider. So there is some sense in which TVA estimator is more fragile. Another approach that evaluates this fragility is a boostrapping analysis, which is explained shortly. Also since we use this implementation with $p$ value thresholds, for the result we highlight in the paper we use ``bottleneck" $\lambda = 0.48$ for immunizations, namely $p = 5 \times 10^{-13}$. This maps to $\lambda = 0.0014$ for immunizations/\$. The corresponding Type I error level is thus constant.

Finally, we emphasize that checking a $\lambda$ against a saturated regression of 75 raw coefficients for finely differentiated policies -- a kind of ``heatmap" test -- is generally \emph{not} a reliable robustness check. In Figures \ref{fig:OLS_immunization} (number of shots) and \ref{fig:OLS_costs} (shots/\$) of our Online Appendix \ref{sec: extended_robustness} we show the fully unrolled saturated regression together with the pooling. But it is not clear how to interpret this with respect to pooling. The eye cannot sanity check the combinatorially large number of joint hypothesis tests for pooling that become relevant when more than one arm is on. Figure \ref{fig:power_comparison}, panel B from Section \ref{sec: simulations} considers a typical simulation where eye-balling fails to sanity check the right pooling choices.\\

\underline{Bootstrapping analysis}

It is natural to ask about the fragility of TVA to the particular draw of the data; precisely this concern motivates, for example, our implementation of winner's curse adjustments by \cite{andrews2019inference}, as well as simulations in Section \ref{sec: simulations} that directly speak to the variance of TVA. However, one might further wonder about just the observations in our dataset, with a concern akin to one about leverage of observations. An intuitive approach to address this is a bootstrapping analysis, where TVA is run on multiple bootstrapped samples. We can then speak to variation in both the set of supports selected as well the estimates of the pooled policies. Because this is a more exploratory analysis, its principal value lies in speaking to \emph{relative} stability of conclusions between the two policies for the two outcomes. \footnote{Note that we have slightly different goals here from the issue of bootstrapped standard errors.} Results are presented in Figures \ref{fig:boot_immunization} (immunizations) and \ref{fig:boot_costs} (immunizations /\$) of our Online Appendix and demonstrate a robust stability for the cost-effective outcome, where the original support is selected in 96\% of bootstrapped sample, against 77\% for the immunizations outcome.

\newgeometry{left=1cm, bottom = 1cm}
\begin{landscape}

\begin{figure}[!h]
	\begin{subfigure}[Panel A]{\linewidth}
		\includegraphics[scale=0.38]{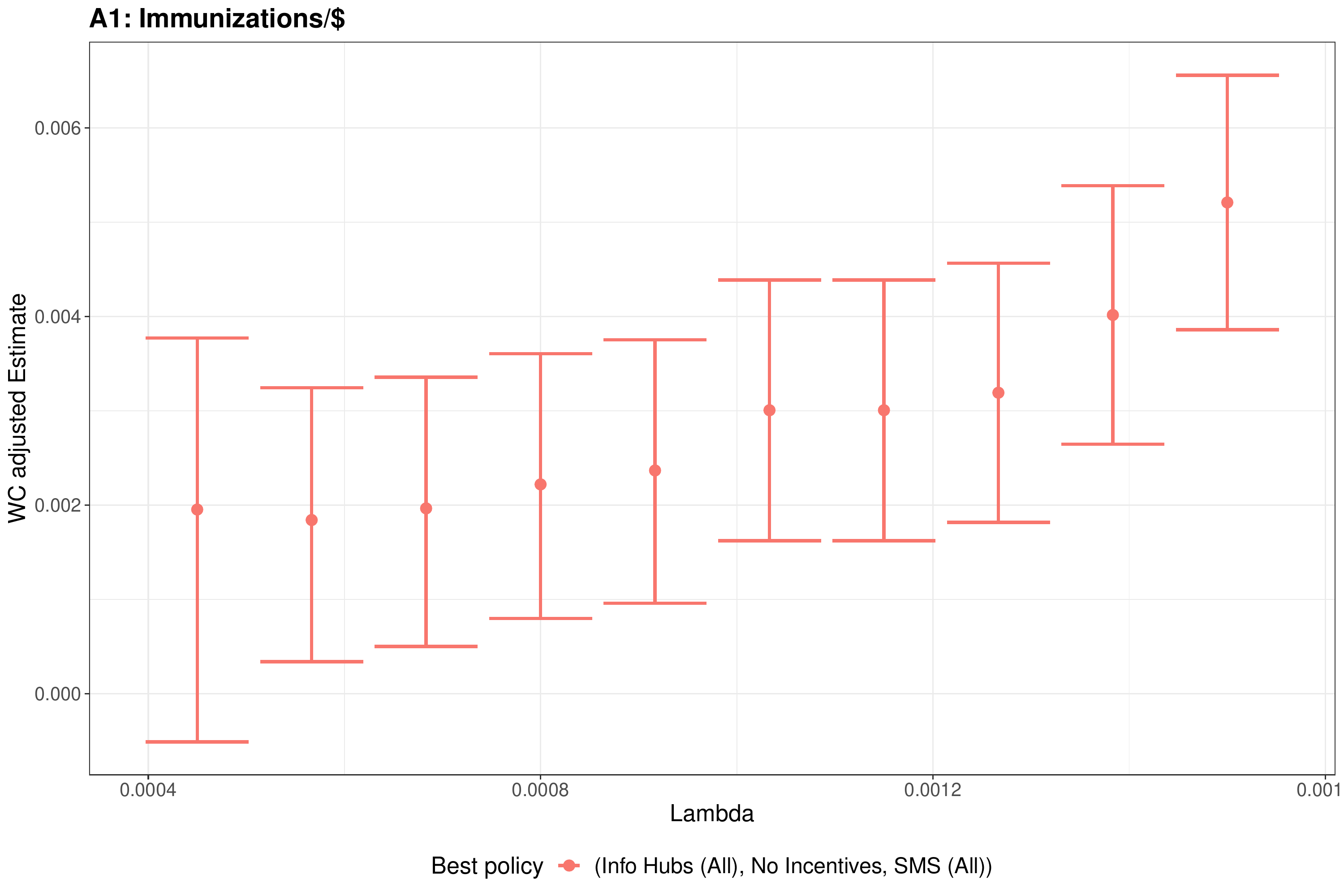}
		\qquad
		\includegraphics[scale=0.38]{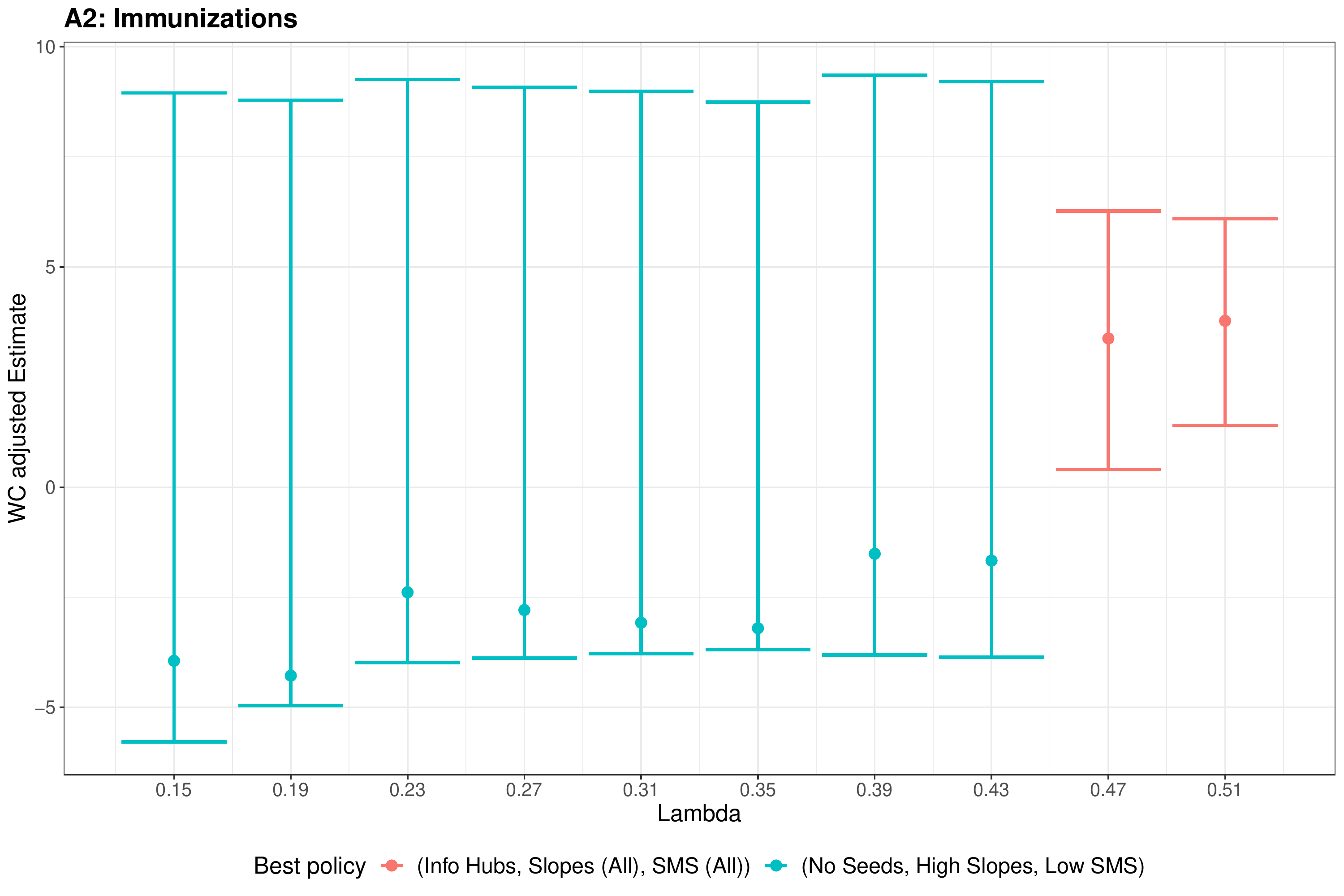}
		\caption{Best policy per LASSO penalty (after WC adjustment)}
	\end{subfigure}

	\begin{subfigure}[Panel B]{\linewidth}
		\includegraphics[scale=0.38]{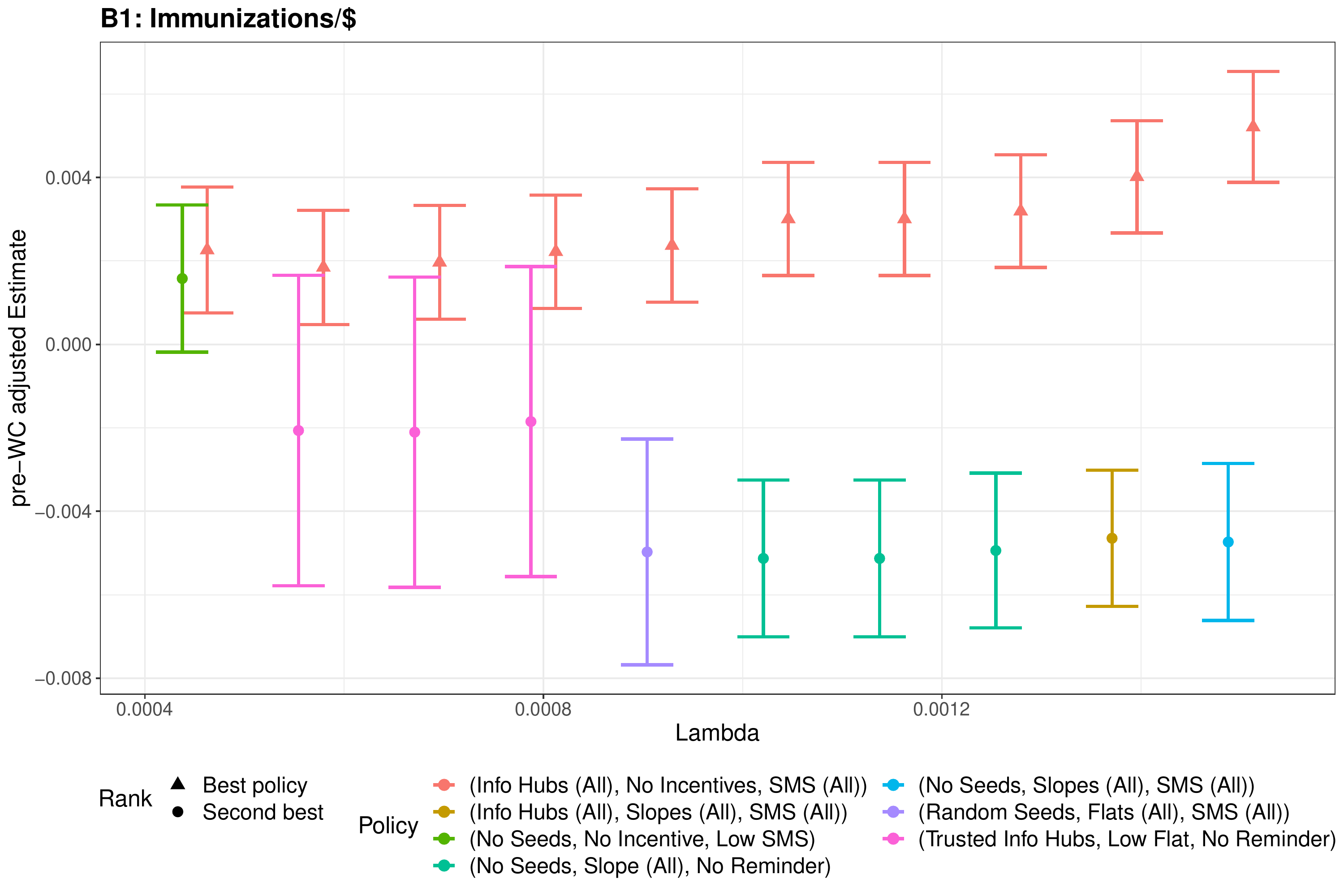}
		\qquad
		\includegraphics[scale=0.38]{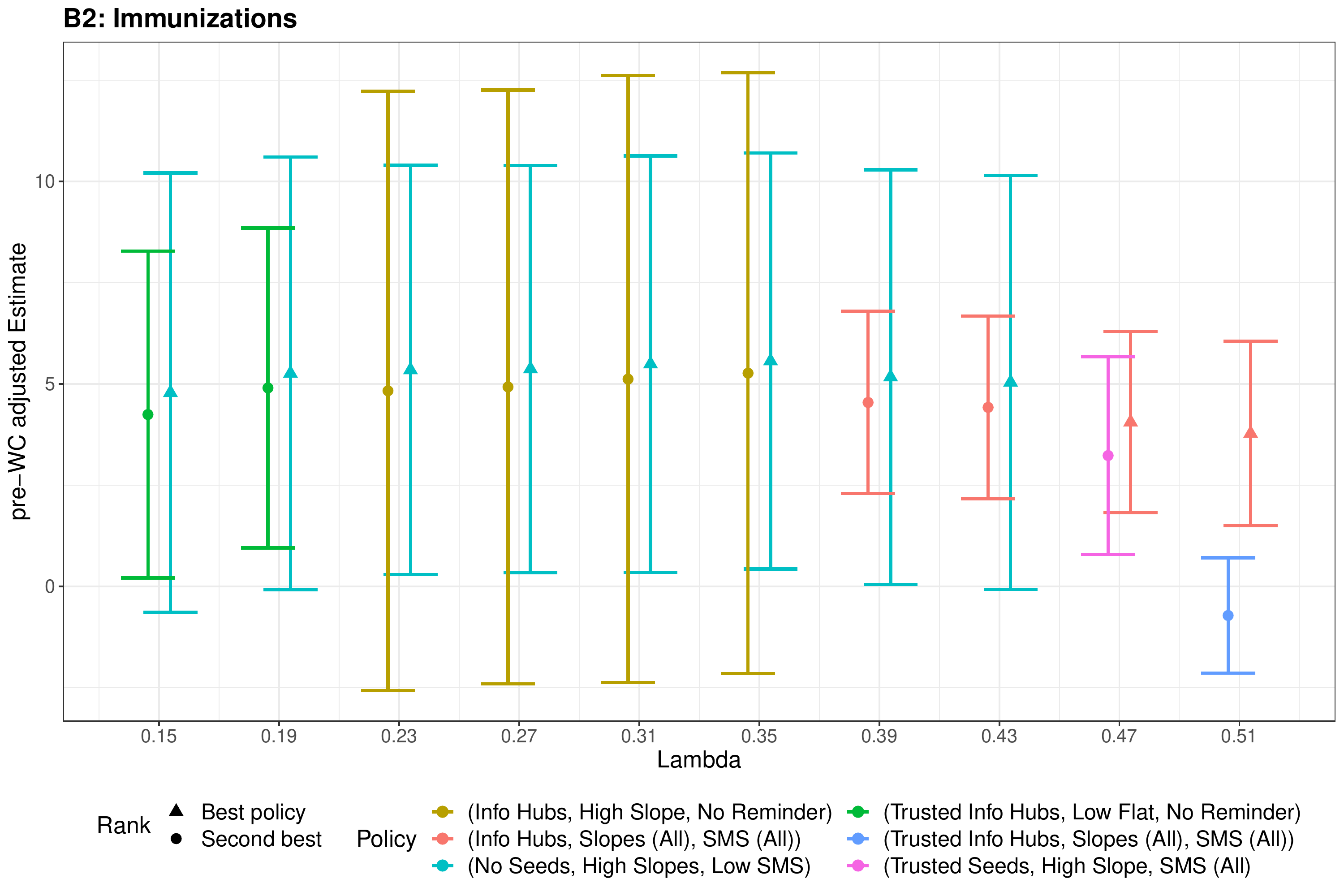}
		\caption{Best and Second Best policies per LASSO penalty (before WC adjustment)}
	\end{subfigure}
\caption{Panel A shows the sensitivity of best-policy estimation per LASSO penalty $\lambda$ for a sequential elimination version of LASSO on $\text{Puffer}_N (X,Y)$. Estimates are plotted after adjusting for the Winner's Curse. Panel B shows the sensitivity of both the best and second-best policy estimates. Here estimates are shown prior to winner’s curse adjustment i.e. when best policy selection happens. \label{fig:penalty_sensitivity}}
\end{figure}

\end{landscape}
\restoregeometry

\clearpage


\begin{center}
	\large{ONLINE APPENDIX : NOT FOR PUBLICATION}
\end{center}
\vspace{0.8cm}
\begin{center}
	\large{Selecting the Most Effective Nudge: Evidence from a Large-Scale Experiment on Immunization}
\end{center}
\vspace{0.8cm}
\begin{center}
Abhijit Banerjee,  Arun G. Chandrasekhar, Suresh Dalpath, Esther Duflo, John Floretta, Matthew O. Jackson, Harini Kannan, Francine Loza, Anirudh Sankar, Anna Schrimpf, and Maheshwor Shrestha
\end{center}

\vspace{1cm}

\setcounter{table}{0}
\renewcommand{\thetable}{E.\arabic{table}}
\setcounter{figure}{0}
\renewcommand{\thefigure}{E.\arabic{figure}}
\section{Extended Simulations}\label{sec:simulations_appendix}
This extended simulations section serves the purpose of providing complementary evidence and extending the discussion on TVA performance (A) relative to alternative estimation strategies and (B) under relaxation of sparsity and lower-bound assumptions.

\subsection{Performance relative to alternatives}\label{subsec: alternatives_appendix}
In this section we present a detailed comparison of TVA performance relative to the LASSO-based alternatives mentioned in Section \ref{subsec: alternatives}.

\subsubsection{Naive LASSO}

As outlined in section \ref{subsubsec: alternatives}, there are two ways one could ``naively" apply LASSO. The first is to disregard pooling, and, because sparse dosages might also mean a sparse set of policies, apply LASSO on the unique policy specification (\ref{eq:unique_policy}). While there is no theoretical issue with this procedure in terms of model consistency, using this for policy estimation leads to much the same performance limitations as direct OLS with regards to best policy estimation, namely a persistently high best policy MSE stemming from overly severe correction from \cite{andrews2019inference}'s winner's curse adjustment. 

Figure \ref{fig:puffer_vs_lasso1} which contrasts this ``No pooling, only pruning" version of LASSO to TVA on best policy estimation, documents this. Panel A shows that while the MSE for our TVA estimator quickly converges to 0, the one for the LASSO estimator persistently lies above 0.1 regardless of $n$. Panels B and C verify that this is driven by winner's curse attenuations, and not model selection issues. Panel B tracks the MSE conditional on both procedures selecting the right model of their respective specifications as the oracle; it is the same pattern as in Panel A. Panel C explicitly shows the shrinkage imposed by the best policy estimator; it is much higher when the model selection doesn't pool.

\begin{figure}[!ht]
		\begin{center}
			\includegraphics[scale=0.55]{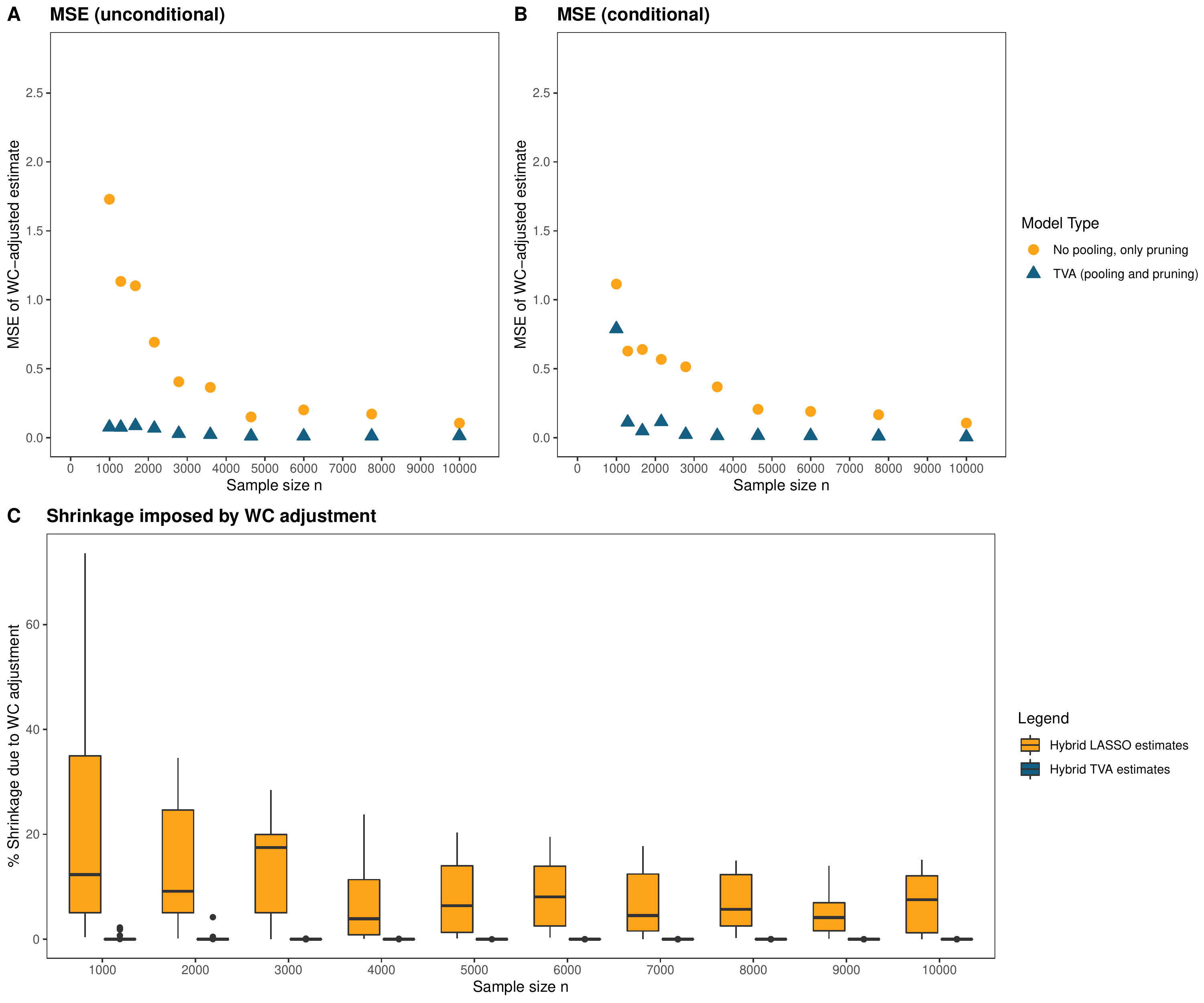}
			\caption{A plot comparing the performance of the TVA estimator to applying LASSO on the unique policy specification (\ref{eq:unique_policy}). Panel A compares the MSE of the hybrid estimators of \cite{andrews2019inference} for winner's curse-adjusted best policy estimation for both methods. Panel B is exactly the same but conditional on selecting the true support in (\ref{eq: TVA}) and (\ref{eq:unique_policy}). Panel C compares the amount of shrinkage imposed by the winner's curse adjustment for both methods, in percentage of the initial coefficient. In all simulations, there are 20 simulations per $n$\label{fig:puffer_vs_lasso1}}
		\end{center}
\end{figure}

The second way to ``naively" apply LASSO is to consider both pooling and pruning as important, but adopt a sign inconsistent model selection procedure by applying LASSO directly on \ref{eq: TVA} without a Puffer transformation. Figure \ref{fig:puffer_vs_lasso2} establishes the contrast with TVA. As expected, simulations attest to inconsistent support selection (Panel A). On the question of best policy estimation, it is more subtle. This naive LASSO does manage to identify at least some of the best policy, furthermore the actual MSE of the best policy is comparable to TVA (Panel C). However, a key deficiency from a policymaking perspective is that it fails to select the minimum dosage best policy with substantial probability relative to TVA (Panel D).\footnote{The reason why the naive LASSO will select some best policy stems from the fact that though it is not sign consistent, it is $l_2$ consistent, which means that it will select a strict superset of the correct support \cite{meinshausen2009lasso}. Basically, this will prune too little and pool too finely. In particular, as seen in our simulations, it will often cleave the best policy. The minimum dosage best policy is then at substantial risk of not being in the pool selected as the best in the data}.

\begin{figure}[!ht]
		\begin{center}
			\includegraphics[scale=0.6]{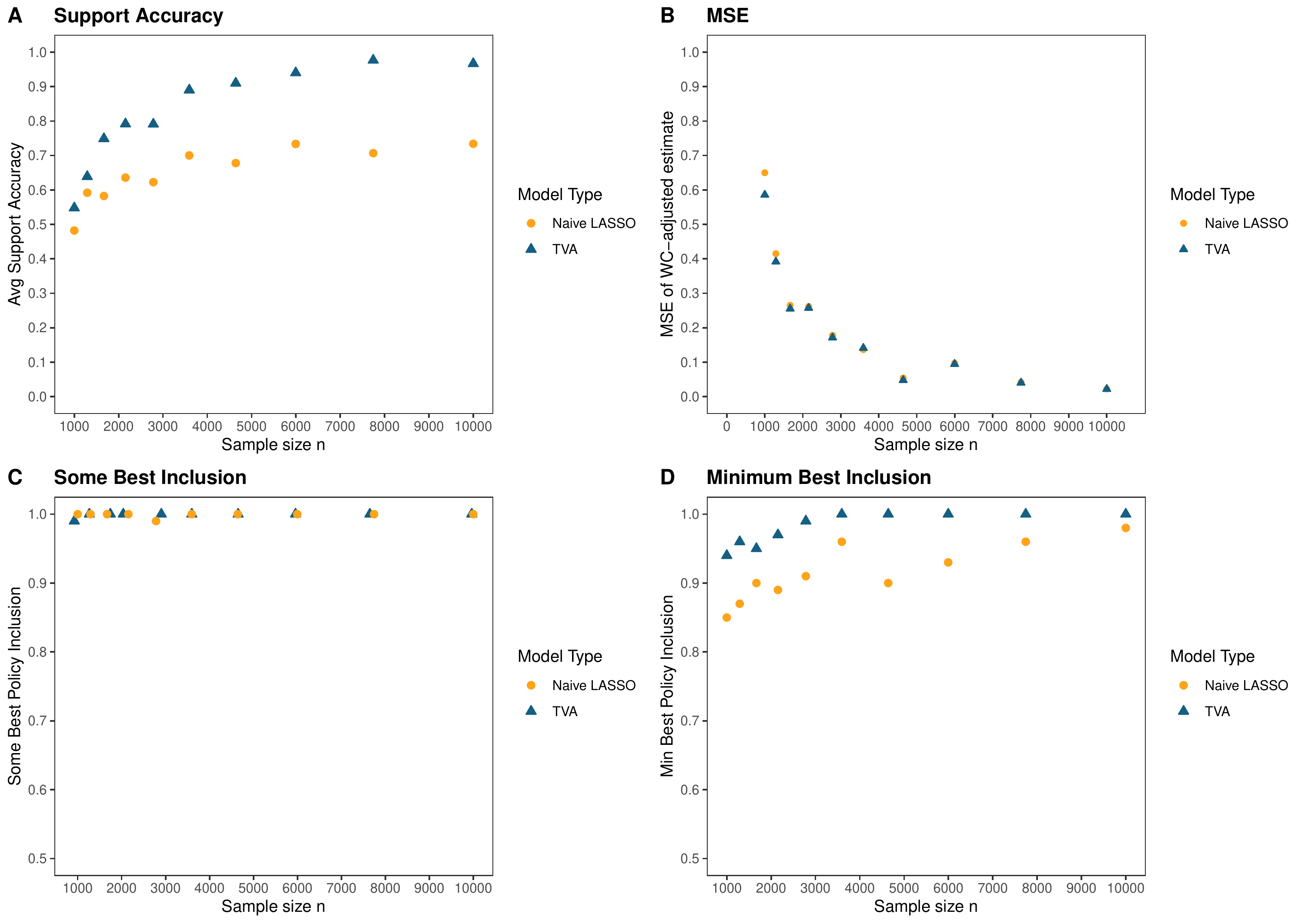}
			\caption{A plot comparing the performance of the TVA estimator to applying LASSO on (\ref{eq: TVA}) using \cite{chernozhukov2015post}. Panel A compares average support accuracy and Panel B compares MSE of the best policy treatment effect as a function of sample size $n$.  Panels C and D look at best policy selection accuracies. On panel C points are slightly jittered for better readability. There are 20 simulations per support configuration per $n$, for five support configurations. \label{fig:puffer_vs_lasso2}}
		\end{center}
\end{figure}

\subsubsection{Debiased LASSO}
Because at a higher level we are interested in high dimensional inference \footnote{albeit still in a $K< n$ regime} one alternative to a two step process of model selection and inference is the so-called debiased LASSO \cite{javanmard2014confidence}, \cite{javanmard2018debiasing}, \cite{van2019asymptotic}. The basic idea is that since LASSO permits high dimensional estimation but at the price of downwardly biased coefficients, but also since this bias is estimable, we can reverse the bias. In particular, to the LASSO coefficients we can add a debiasing term proportional to the subgradient at the $\ell_1$ norm of the LASSO solution $\widehat \theta^n$
\[\widehat \theta^u = \widehat \theta^n + (1/n)M \textbf{X}^T(Y- \textbf{X} \widehat \theta^n)\]
It is also possible to supply standard errors for these debiased coefficients. Note, however, that these debiased coefficients are almost surely never exactly zero, so that there is no question of sparsity. We thus only need to consider applying debiased LASSO to (\ref{eq:unique_policy}). 


\begin{figure}[!ht]

		\begin{center}
			\includegraphics[scale=0.48]{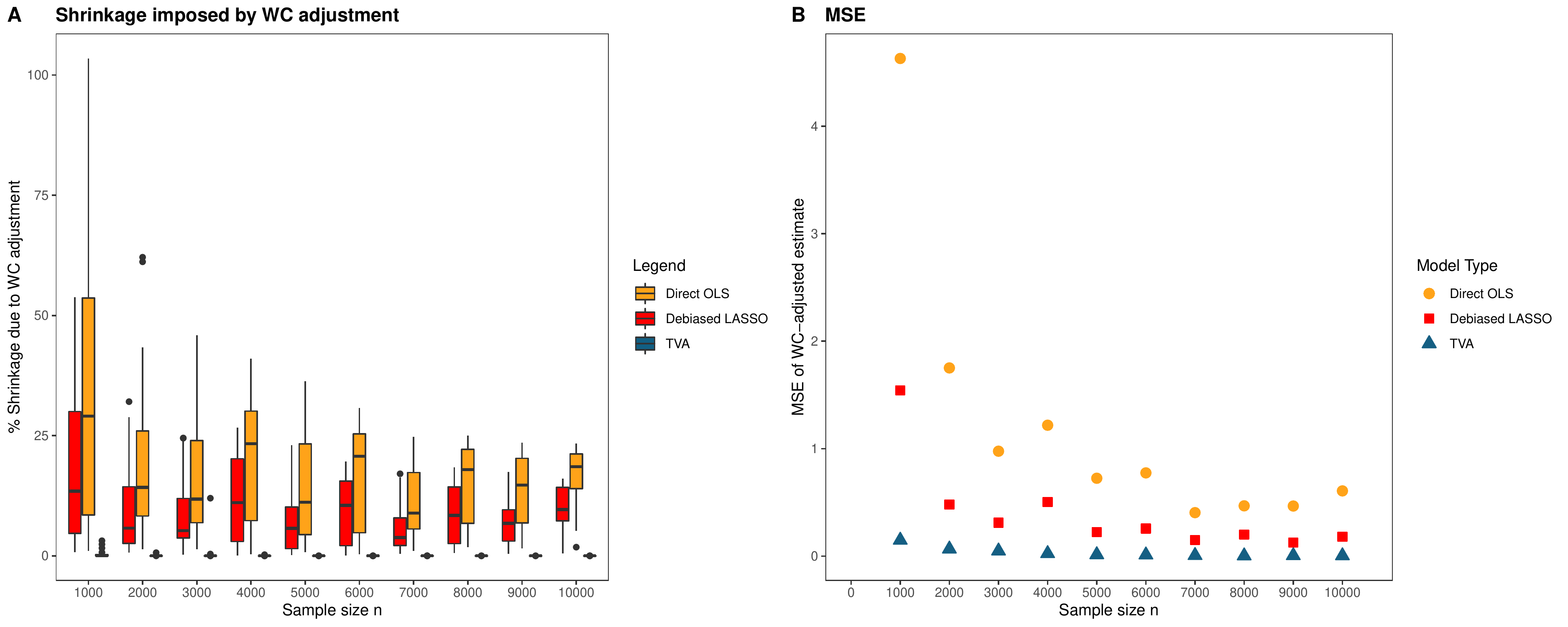}
			\caption{A plot comparing the performance of the TVA estimator to applying OLS on (\ref{eq:unique_policy}) and a debiased LASSO on (\ref{eq:unique_policy}) following \cite{javanmard2014confidence}. Panel A compares the amount of shrinkage imposed by the winner's curse adjustment as percentage of the initial coefficient. Panel B compares the mean squared error of the best policy estimate as a function of sample size $n$. There are 20 simulations per $n$. \label{fig:puffer_vs_dlasso}}
		\end{center}
\end{figure}

In Figure \ref{fig:puffer_vs_dlasso} we show that the debiased LASSO procedure suffers from the similar limitations as direct OLS estimation, especially with regards to best policy estimation. That it does better with winner's curse attenuations relative to the same adjustments on direct OLS possibly indicates that it might interact better with those adjustments despite the fact that Gauss-Markov theorem guarantees that the unadjusted direct OLS must dominate the unadjusted debiased LASSO.  Nevertheless, TVA sharply dominates both alternatives because it can pool.

\subsubsection{``Off the Shelf" Bayesian approaches: Spike and Slab LASSO}\label{subsubsec:bbssl}
Because we envision a world in which there are many potential treatment variant aggregations to be done through pooling and pruning, this attests to our prior about the environment. Indeed, LASSO estimates have a Bayesian interpretation in terms of Laplace priors. One can ask whether a more sophisticated, ``explicitly" Bayesian approach can address our final objectives. This paradigmatically different route is the topic of future work. Below, we just show that``off the shelf" Bayesian approaches are unlikely to help. In particular, we show that a direct application of spike and slab formulations -- the most intuitively relevant method -- underperforms relative to our TVA procedure. 
The Spike and Slab LASSO uses a prior of the form
$$
	\pi(\beta | \gamma) = \prod_{i=1}^p[\gamma_i \underbrace{\psi_1(\beta_i)}_{Slab} + (1-\gamma_i)\underbrace{\psi_0(\beta_i)}_{Spike}], \gamma \sim  \pi(\gamma)
$$
with $\gamma_0$ and $\gamma_1$, two Laplace distributions with very high ($\lambda_0$) and a very low ($\lambda_1$) scale parameters respectively (i.e. $\lambda_0 \gg \lambda_1$). This allows solving for the posterior of both the model parameters as well as the model itself (\cite{rovckova2018spike}).


\begin{figure}[!ht]
		\begin{center}
			\includegraphics[scale=0.5]{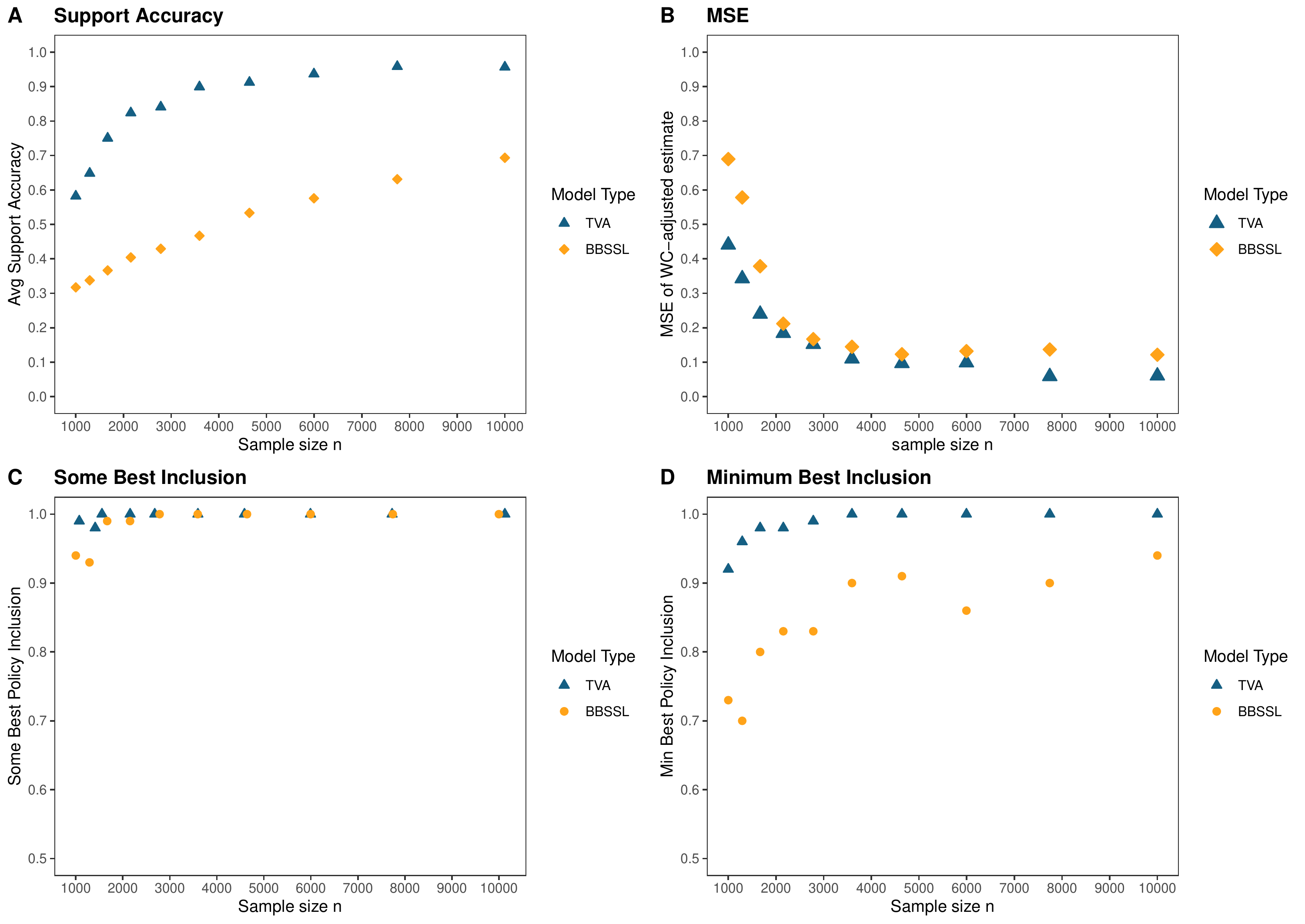}
			\caption{A plot comparing the performance of the TVA estimator to applying Bayesian Bootstrap Spike and Slab LASSO (\cite{nie2022bayesian}) on (\ref{eq: TVA}). Panel A compares support accuracy and panel B compares MSE on the final WC adjusted estimate. Panels C and D compare best policy inclusion measures as a function of $n$. For BBSSL the Laplace parameters are chosen as $\lambda_1 = 10^{-n/100}$ and $\lambda_0$ ranging from $1$ to $10^5$ (100 steps). This yields a solution path as a function of $\lambda_0$ and policies selected at least $95\%$ of times along the solution path compose the final support. There are 20 simulations per support configuration per $n$ for five support configurations. \label{fig:puffer_vs_sslasso1}}
		\end{center}
\end{figure}

In Figure \ref{fig:puffer_vs_sslasso1} we contrast a Puffer-transformed Bayesian Bootstrap Spike and Slab LASSO (BBSSL - \cite{nie2022bayesian}) to TVA. The performance of BBSSL is very similar to that when applying Naive LASSO to the marginal specification \eqref{eq: TVA}. BBSSL is support inconsistent (Panel A) and is clearly outperformed on minimum dosage best policy selection (Panel D).  It does however identify at least one best policy most of the time (Panel C) and has a best policy MSE close to that of TVA (Panel B).

\textcolor{red}{\subsection{Performance under Sparsity Relaxations}\label{subsec: sparsity_appendix}}
In this section we repeat the exercise of relaxing the sparsity requirements from section \ref{subsec: sparisty robustness}, to offer a comparison between the performance of TVA and the next best alternative of applying LASSO to the marginal specification \eqref{eq: TVA}. 

Below we reproduce the summary of the five regimes of sparsity and effect size relaxations we consider for these simulations: 


\begin{enumerate}
\item \textbf{Regime 1}: $M$ constant marginal effect sizes in $[1,5]$ \& $M$ rapidly diminishing remaining marginal effect sizes in ($[1,5]/n$)
\item \textbf{Regime 2}: $M$ constant marginals in $[1,5]$ \& $M$ moderately diminishing remaining marginals ($[1,5]/\sqrt{n}$) 
\item \textbf{Regime 3}: $M$ large constant marginals in $[5,10]$, $M$ medium marginals in $[1,2]$ \& $M$ rapidly diminishing remaining marginals in ($[1,5]/n$)
\item \textbf{Regime 4}: $M$ decreasing marginals in $([1,5]/n^{0.2})$ (and zero marginals everywhere else)
\item \textbf{Regime 5}: $M$ decreasing marginals in $[1,5]/n^{0.2}$ \&  $M$ moderately diminishing remaining marginals  in ($[1,5]/\sqrt{n}$)
\end{enumerate}

The main finding is that TVA strongly outperforms naive LASSO across regimes for support accuracy and minimum dosage best policy selection. Performance on selecting some best policy in $\widehat{S_\alpha}$ and final MSE of the best policy is very similar for both estimators, as it was in the exact sparse case, and also similar across regimes.  

TVA's performance on support accuracy is manifestly affected by sparsity violations: convergence is slower in regime 2 (Figure \ref{fig:regime2}) than regime 1 (Figure \ref{fig:regime1}) and for regimes 4-5 (Figures \ref{fig:regime4}, \ref{fig:regime5}) TVA is not even support consistent anymore. Notable, however, is the fact that naive LASSO is repeatedly underperforming TVA across all regimes and sometimes significantly so (e.g. Regime 3, Figure \ref{fig:regime3}).  This suggests that naive LASSO is an unhelpful way to proceed particularly if one were interested in the post-LASSO set in addition to the best policy.

Both methods then, are equally strong at identifying at least one best policy and with respect to the final MSE of the best policy with performance mostly ranging between 80\%-100\% for the former and never higher than 0.75 (R1, R3) for the latter. However only TVA helps with the joint best policy objectives of low MSE while reliably choosing the minimum dosage best policy. This is compelling in regimes 2 and 3 where LASSO's minimum best selection rate hovers around 60-70\%  against TVA reaching 90\% accuracy for a moderate sample size ($n = 3000$) already.

\begin{figure}[h]
		\begin{center}
			\includegraphics[scale=0.5]{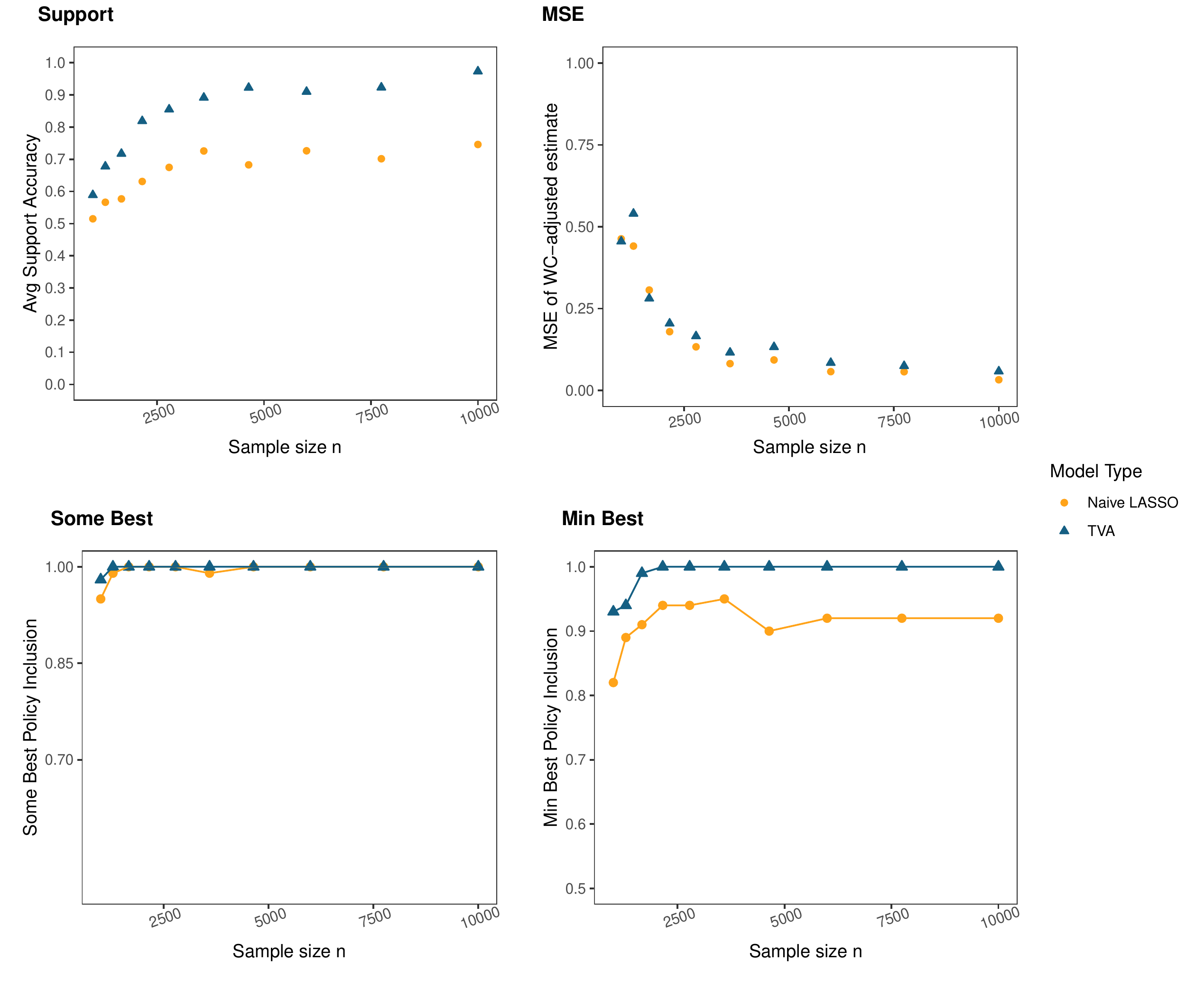}
			\caption{A comparison between the TVA estimator and a direct implementation of LASSO on (\ref{eq: TVA}) for support accuracy, MSE and best policy inclusion measures, under \textbf{regime 1}. There are $20$ simulations per support configuration per $n$, for five support configurations. \label{fig:regime1}}
		\end{center}
\end{figure}

\clearpage

\begin{figure}[h]
		\begin{center}
			\includegraphics[scale=0.5]{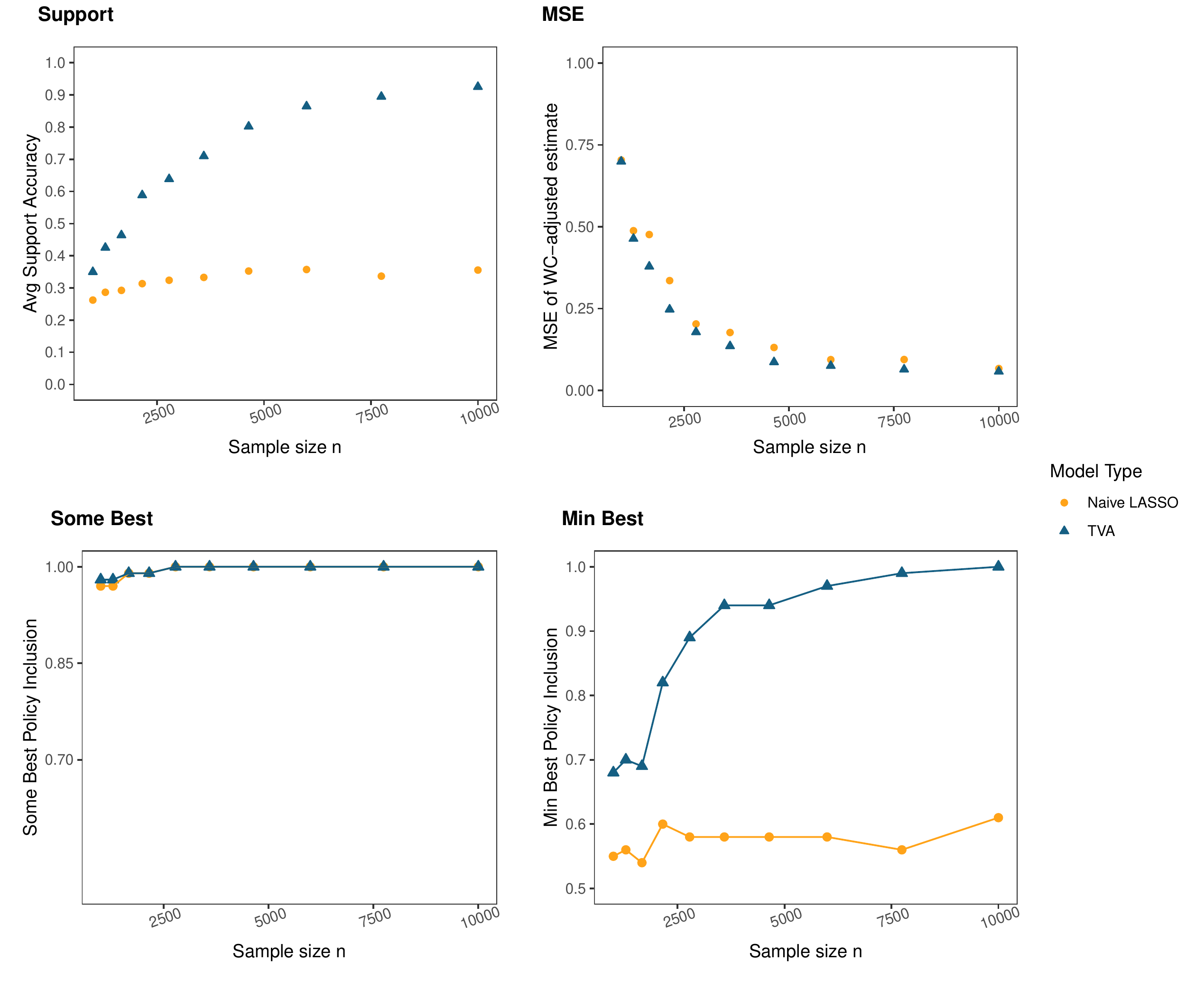}
			\caption{A comparison between the TVA estimator and a direct implementation of LASSO on (\ref{eq: TVA})  for support accuracy, MSE and best policy inclusion measures, under \textbf{regime 2}. There are $20$ simulations per support configuration per $n$, for five support configurations. \label{fig:regime2}}
		\end{center}
\end{figure}

\clearpage

\begin{figure}[h]
		\begin{center}
			\includegraphics[scale=0.5]{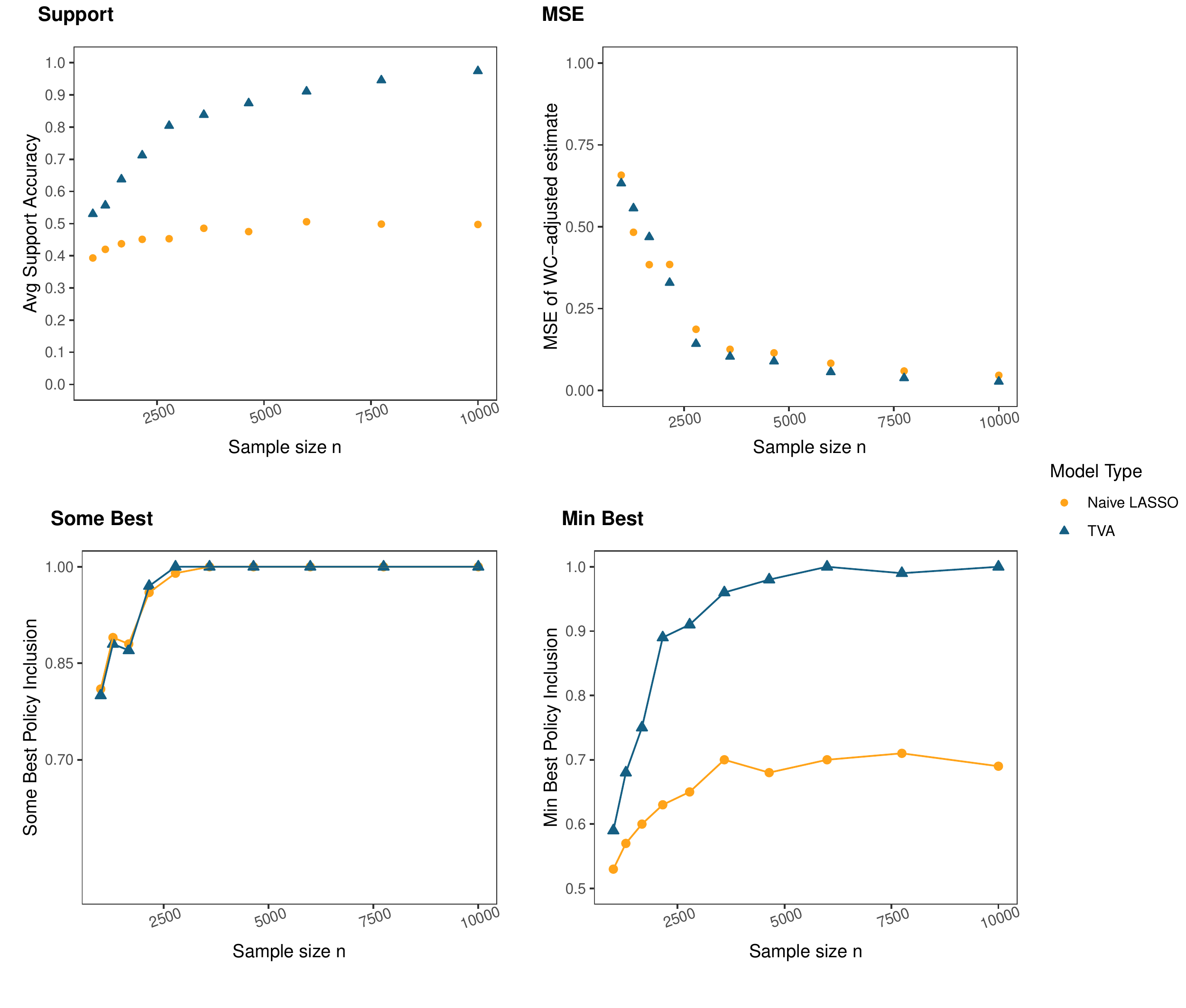}
			\caption{A comparison between the TVA estimator and a direct implementation of LASSO on (\ref{eq: TVA})  for support accuracy, MSE and best policy inclusion measures, under \textbf{regime 3}. There are $20$ simulations per support configuration per $n$, for five support configurations. \label{fig:regime3}}
		\end{center}
\end{figure}

\clearpage

\begin{figure}[h]
		\begin{center}
			\includegraphics[scale=0.5]{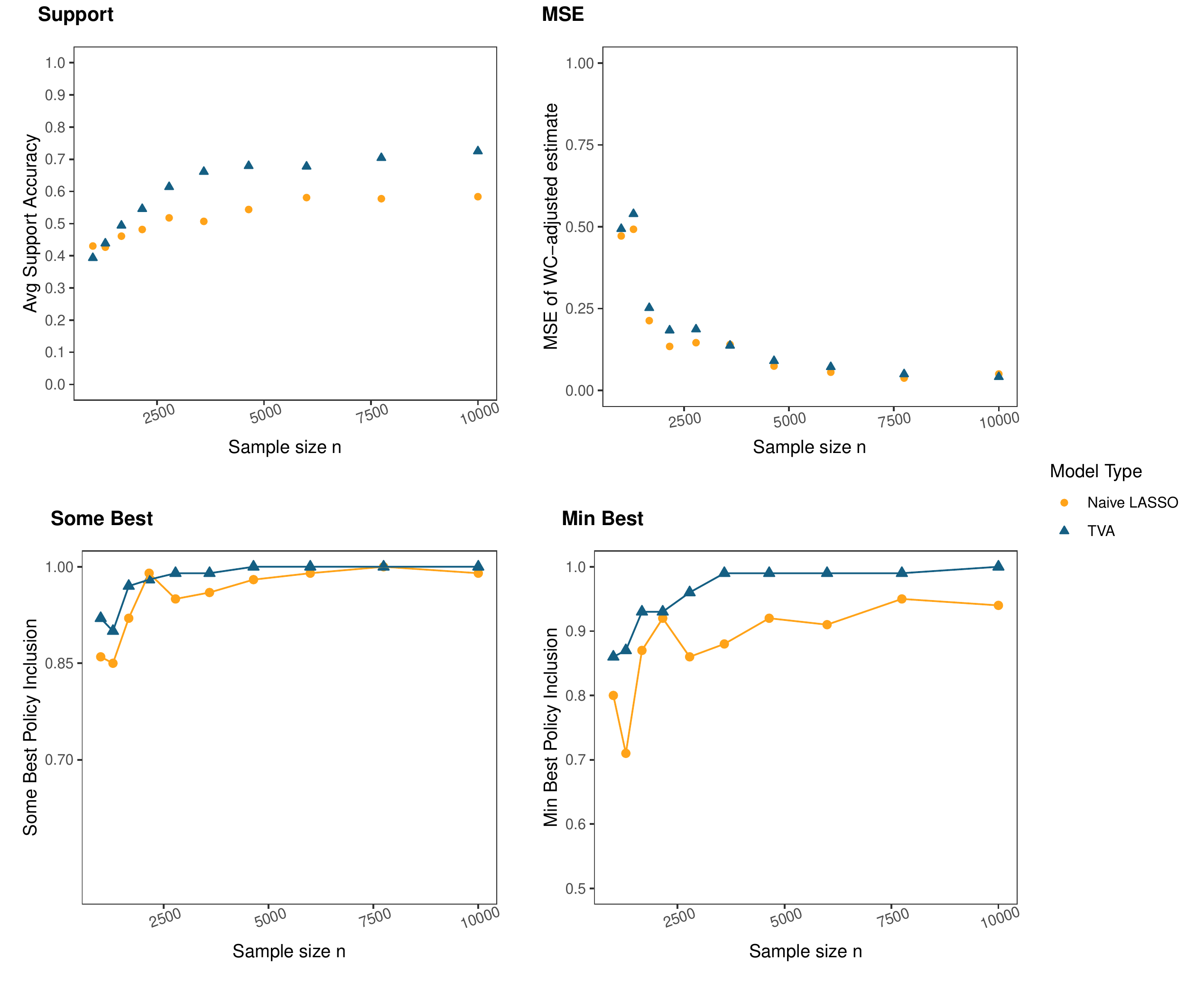}
			\caption{A comparison between the TVA estimator and a direct implementation of LASSO on (\ref{eq: TVA})  for support accuracy, MSE and best policy inclusion measures, under \textbf{regime 4}. There are $20$ simulations per support configuration per $n$,  for five support configurations. \label{fig:regime4}}
		\end{center}
\end{figure}

\clearpage

\begin{figure}[h]
		\begin{center}
			\includegraphics[scale=0.5]{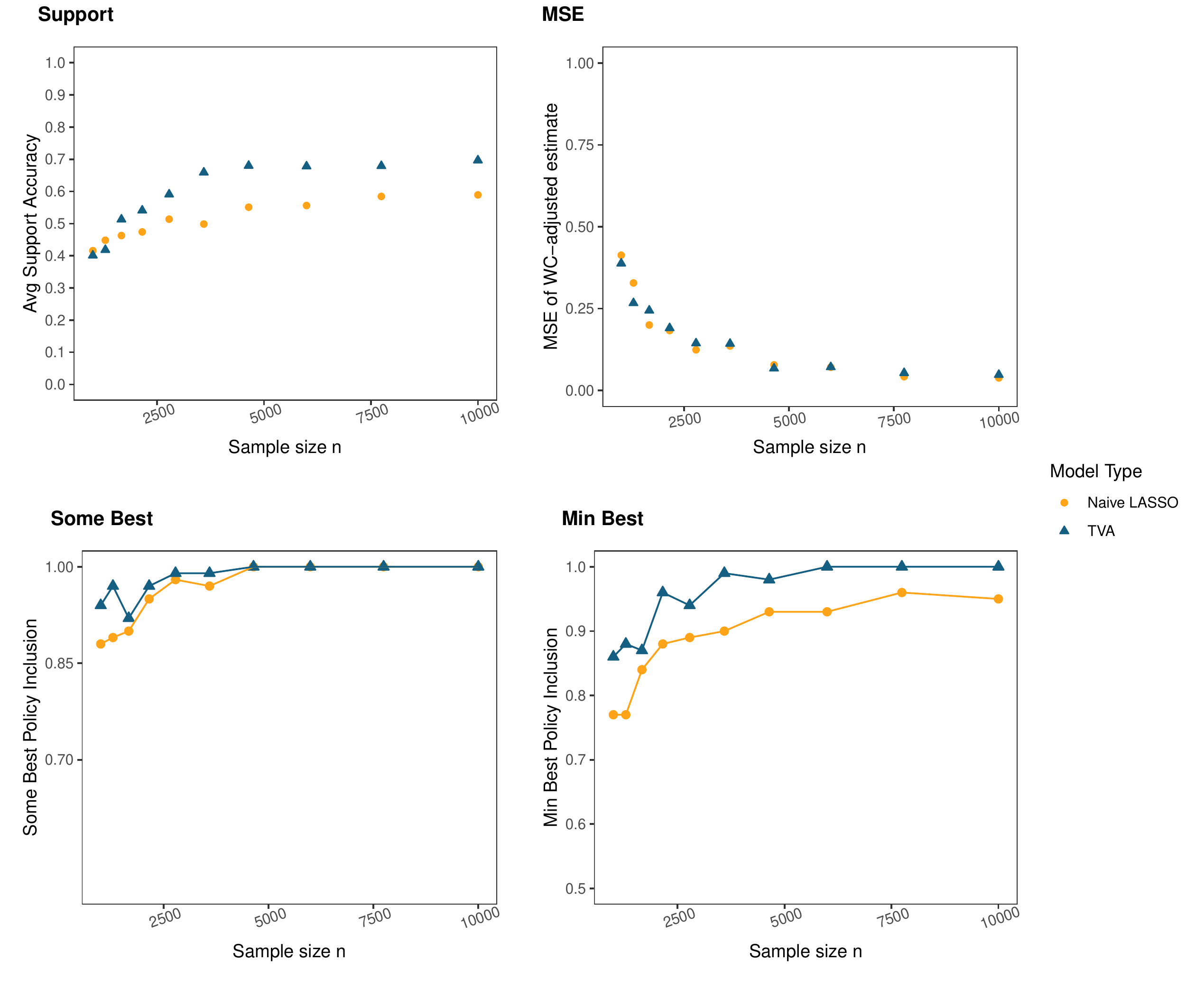}
			\caption{A comparison between the TVA estimator and a direct implementation of LASSO on (\ref{eq: TVA})  for support accuracy, MSE and best policy inclusion measures, under \textbf{regime 5}. There are $20$ simulations per support configuration per $n$,  for five support configurations. \label{fig:regime5}}
		\end{center}
\end{figure}

%
%
%
%

\clearpage
\setcounter{table}{0}
\renewcommand{\thetable}{F.\arabic{table}}
\setcounter{figure}{0}
\renewcommand{\thefigure}{F.\arabic{figure}}
\section{Extended Robustness}\label{sec: extended_robustness}
This section complements the discussion of Appendix \ref{sec: robustness}. We report the results from a fully saturated regression on the 75 unique policies to emphasize that some pooling choices made by TVA can be non-trivial making eyeballing a generally \textit{non}-reliable sanity check. We also elaborate more on the bootstrapping analysis suggested in Appendix \ref{sec: robustness} to explore TVA performance in the context of our experiment.  

\vspace{1cm}
\underline{Fully saturated regression}
Below we report the results from running a saturated regression of 75 raw coefficients for finely differentiated policies. We contrast this with the pooling and pruning choices made by the TVA estimator highlighted in different colors (red policies are pruned while green / blue policies are pooled together).  This makes the point very clearly that a simple eye-balling of these results would have been misleading in inferring the choices made by TVA.  Two examples are worth noting:

\begin{itemize}
\item Some seemingly efficient policies at standard confidence levels are pruned while others are pooled together: this is striking when comparing the policies in Figure \ref{fig:OLS_costs} (Shots/\$), Panel A (No Seed) in the last pooling profile (all were pruned) with those in Panel B (Trusted Seed) in the last pooling profile (all were pooled and kept in the support). 

\item Some policies that are individually underpowered (though positive) were pooled together (and kept in the support) suggesting that the pool ended up being powered and highly effective. An example can be seen in Figure \ref{fig:OLS_immunization} (Measles Shots), Panel C (Gossip Seed) in the last pooling profile.
\end{itemize}

\begin{figure}[!h]
\includegraphics[scale=0.45]{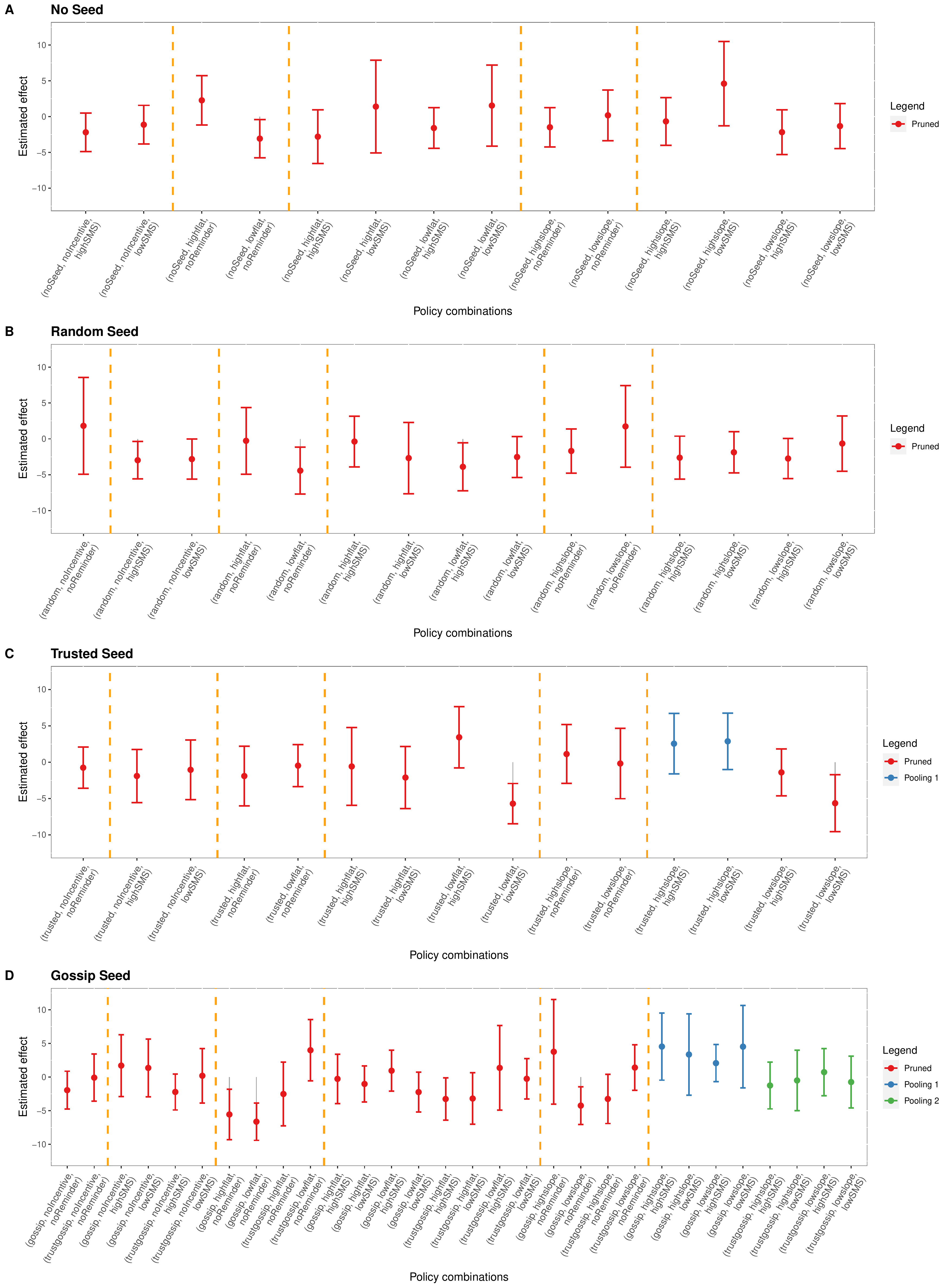}
\caption{A plot showing the coefficients from the OLS regression on all 75 unique policies for the Measles shot outcome. Panels are organized by seeds and within each panel the pooling profiles are delimited by dashed lines. Policies shown in red are policies pruned by the TVA Estimator while policies shown in other colors were pooled together.\label{fig:OLS_immunization}}
\end{figure}

\begin{figure}[!h]
\includegraphics[scale=0.45]{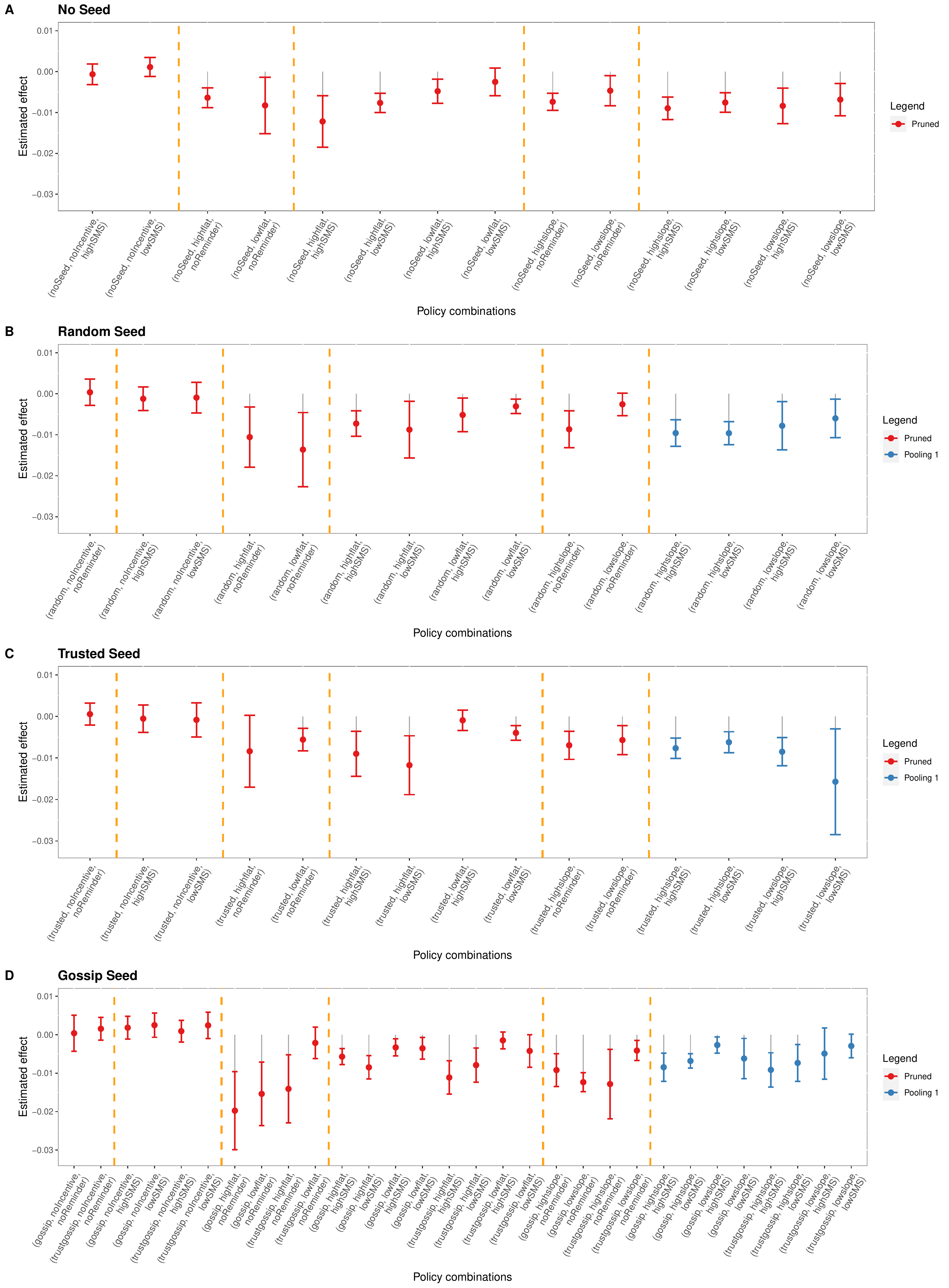}
\caption{A plot showing the coefficients from the OLS regression on all 75 unique policies for the Shots/\$ outcome. Panels are organized by seeds and within each panel the pooling profiles are delimited by dashed lines. Policies shown in red are policies pruned by the TVA Estimator while policies in blue were pooled together. \label{fig:OLS_costs}}
\end{figure}

\vspace{1cm}

\underline{Bootstrapping analysis}
It is natural to ask about the fragility of TVA to the particular draw of the data; precisely this concern motivates, for example, our implementation of winner's curse adjustments by \cite{andrews2019inference}, as well as simulations in Section \ref{sec: simulations} that directly speak to the variance of TVA. However, one might further wonder about just the observations in our dataset, with a concern akin to one about leverage of observations. An intuitive approach to address this is a bootstrapping analysis, where TVA is run on multiple bootstrapped samples. We can then speak to variation in both the set of supports selected as well the estimates of the pooled policies. Because this is a more exploratory analysis, its principal value lies in speaking to \emph{relative} stability of conclusions between the two policies for the two outcomes. \footnote{Note that we have slightly different goals here from the issue of bootstrapped standard errors} 

The 200 bootstrapped samples we run for each outcome are stratified at the policy level and results for each sample are displayed in Figures \ref{fig:boot_immunization} (immunizations) and \ref{fig:boot_costs} (immunizations /\$). For immunizations/\$, the support from the original sample was LASSO selected in 96\% of bootstrapped samples and the estimated coefficients are concentrated around our main estimate, including the best policy (Info Hubs (All), No Incentives, SMS (All)). Notably there is almost no winner’s curse adjustment since the best policy is consistently well separated from the second best. For the immunizations outcome we again observe little variation in the support, and the most effective policy  (Info Hubs, Slopes (All), SMS (All)) is identified as such in 77\% of the bootstrapped samples. However, there is considerably more variation in the winner's curse estimates, with some bootstrap samples sharply attenuating the best policy estimate. Taken altogether, this speaks to tighter competition and more sensitivity to leverage of certain observations for immunizations than immunizations/\$.


\begin{landscape}
\begin{figure}[!ht]
\includegraphics[scale=0.55]{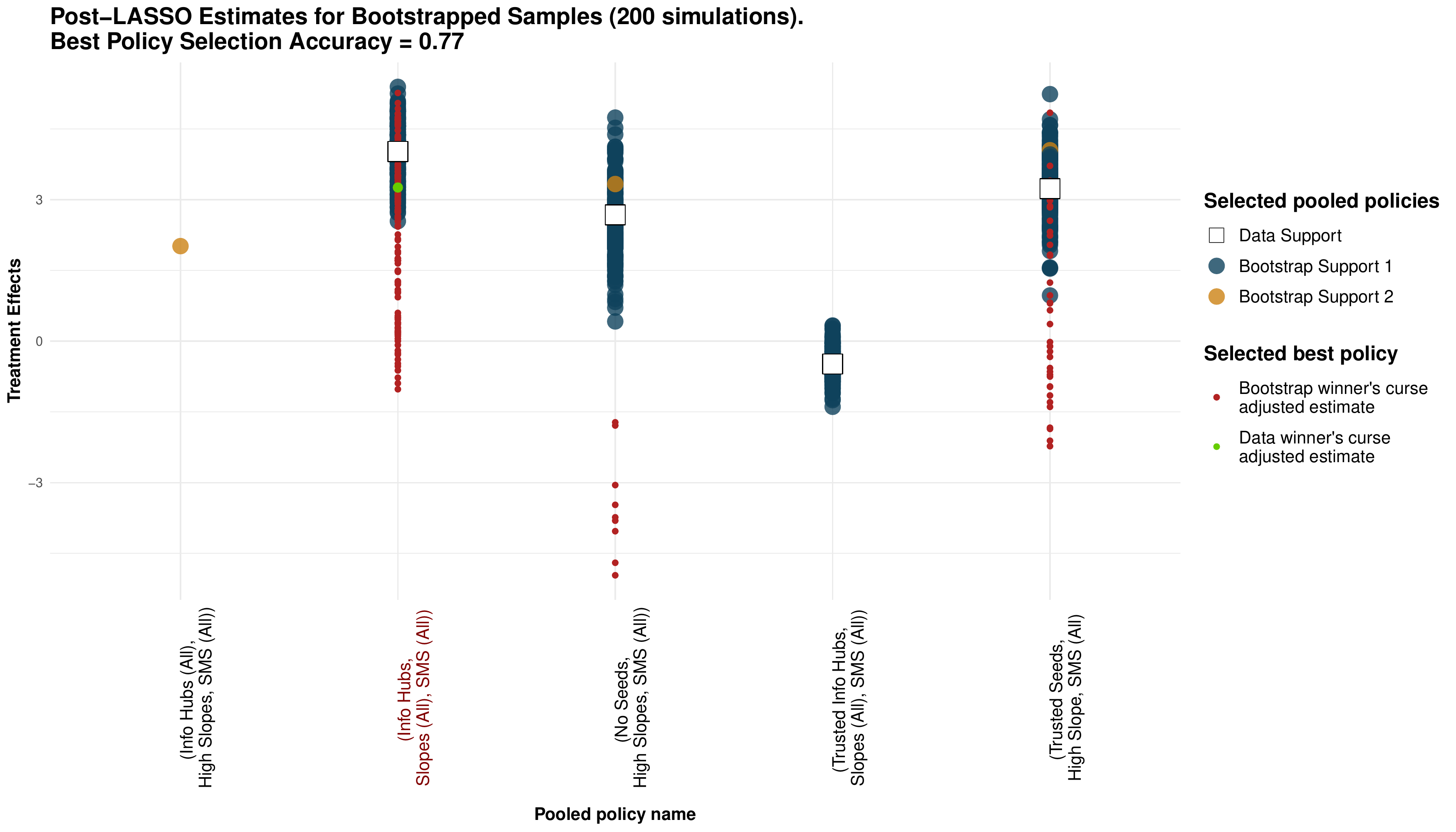}
\caption{Results of post-LASSO OLS estimates for the Measles Shot outcome are shown for 200 bootstrapped samples stratified at the policy level. The raw data marginal effects support is shown in white and “Bootstrap Supports” 1 and 2 are the two unique supports that were selected across all bootstrapped samples. We also report winner’s curse adjusted coefficients for the raw data (green) and the bootstrapped samples (red). On the x-axis, the raw data best pooled policy is highlighted in red and was selected in 77\% of the bootstrapped samples. Finally the minimum dosage best policy (Info Hubs, Low Slopes, Low SMS) is only selected when the TVA estimator actually selects the best pooled policy, hence 77\% of the time.
\label{fig:boot_immunization}}
\end{figure}

\begin{figure}[!ht]
\includegraphics[scale=0.55]{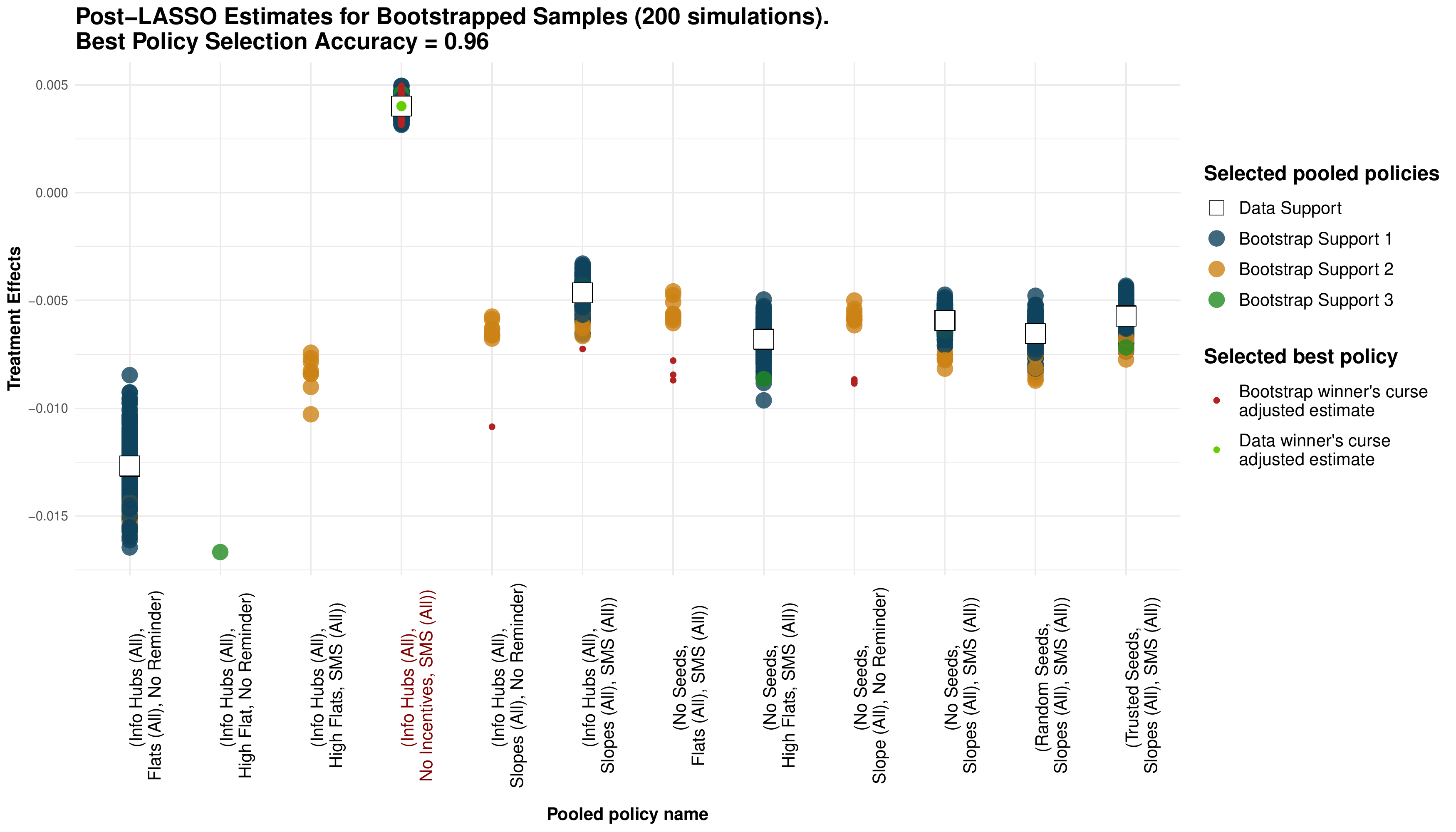}
\caption{Results of post-LASSO estimates for the Shots/\$ outcome are shown for 200 bootstrapped samples stratified at the policy level. The raw data marginal effects support is shown in white and “Bootstrap Supports” 1-3 are the three unique supports that were selected across all bootstrapped samples. We also report winner’s curse adjusted coefficients for the raw data (green) and the bootstrapped samples (red). On the x-axis, the raw data best pooled policy is highlighted in red and was selected in 96\% of the bootstrapped samples. Finally, the minimum dosage best policy (Info Hubs, No Incentives, Low SMS) is only selected when the TVA estimator actually selects the best pooled policy, hence 96\% of the time.
\label{fig:boot_costs}
}
\end{figure}
\end{landscape}

\clearpage
\setcounter{table}{0}
\renewcommand{\thetable}{G.\arabic{table}}
\setcounter{figure}{0}
\renewcommand{\thefigure}{G.\arabic{figure}}

\section{Appendix Figures}\label{sec:fig-table-appendix}


\begin{figure}[!h]
	\includegraphics[scale=0.35]{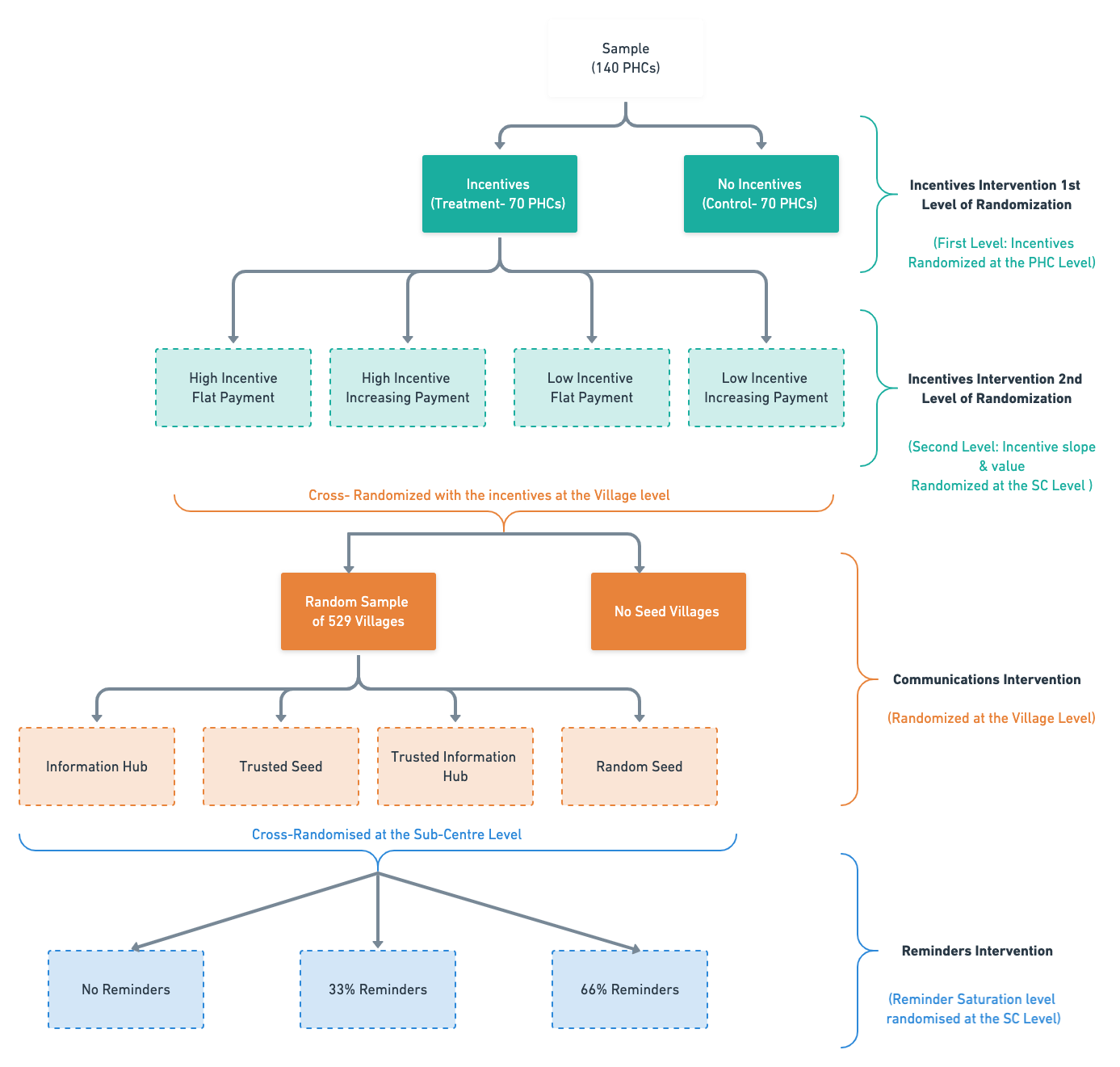}
	\caption{Experimental Design\label{fig:experimental_design}}
\end{figure}

\clearpage

\begin{figure}[!h]
\centerfloat
	\includegraphics[scale=0.61]{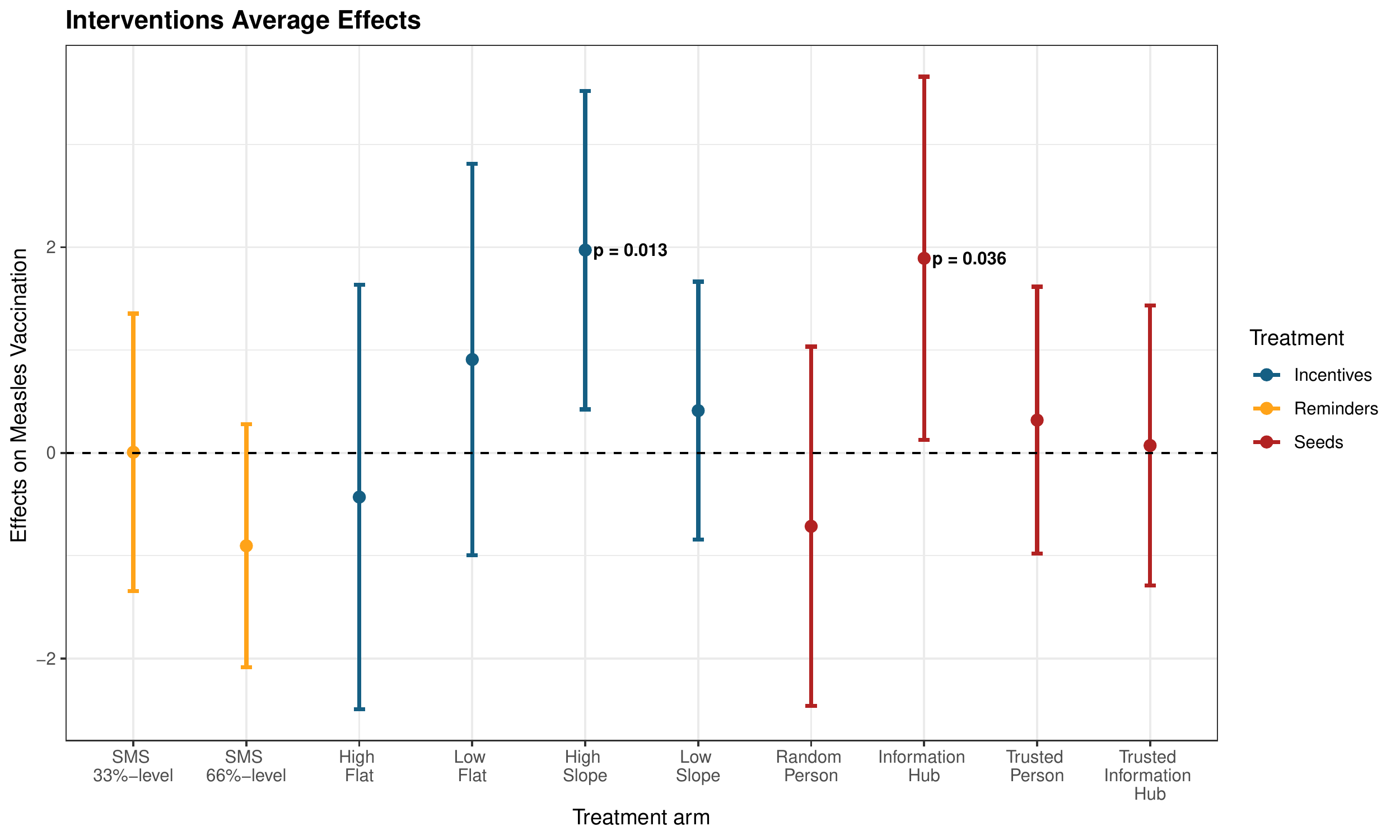}
\caption{Effects on the number of measles vaccinations relative to control (5.29) by reminders, incentives, and seeding policies,  for the full sample. The specification includes a dummy for being in the Ambassador Sample, is weighted by village population, controls for district-time fixed effects, and clusters standard errors at the sub-center level.\label{fig:agg-policy-effects_full} 
}
\end{figure}

\begin{figure}[!h]
	\includegraphics[scale=0.95]{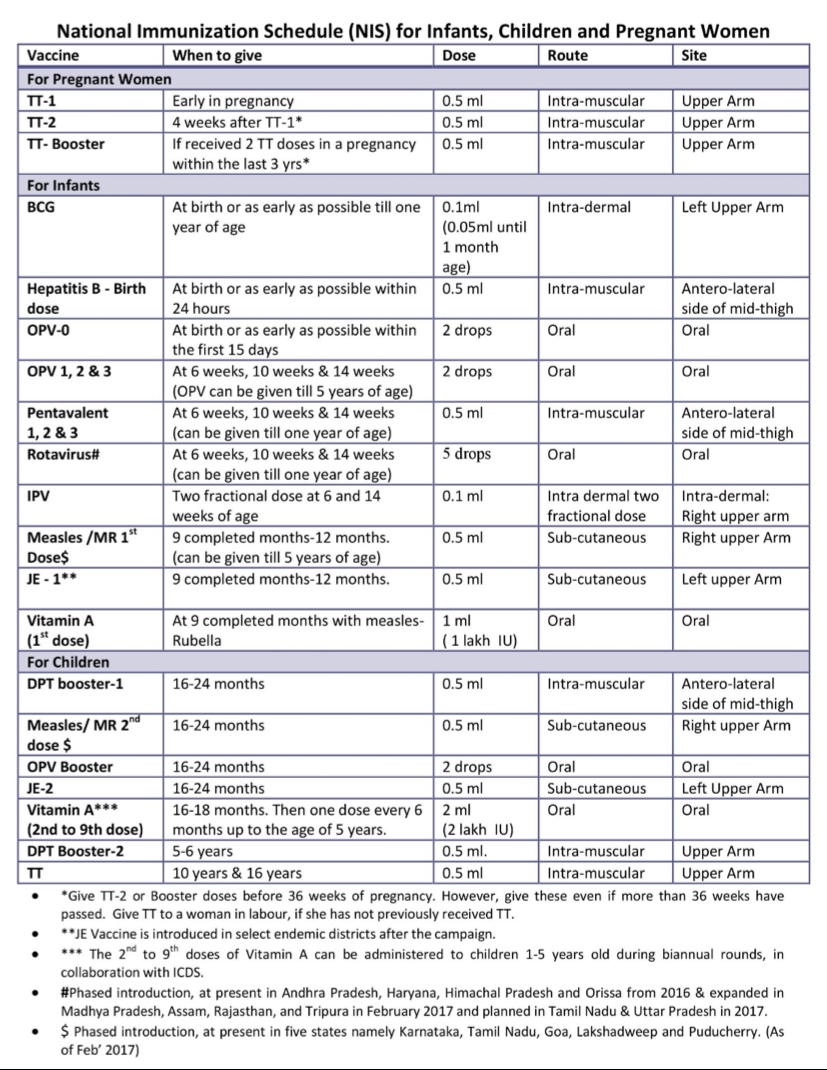}
\caption{National Immunization Schedule for Infants, Children, and Pregnant Women.\label{fig:immun_schedule}}
\end{figure}

\begin{figure}[!h]
	\includegraphics[scale=0.6]{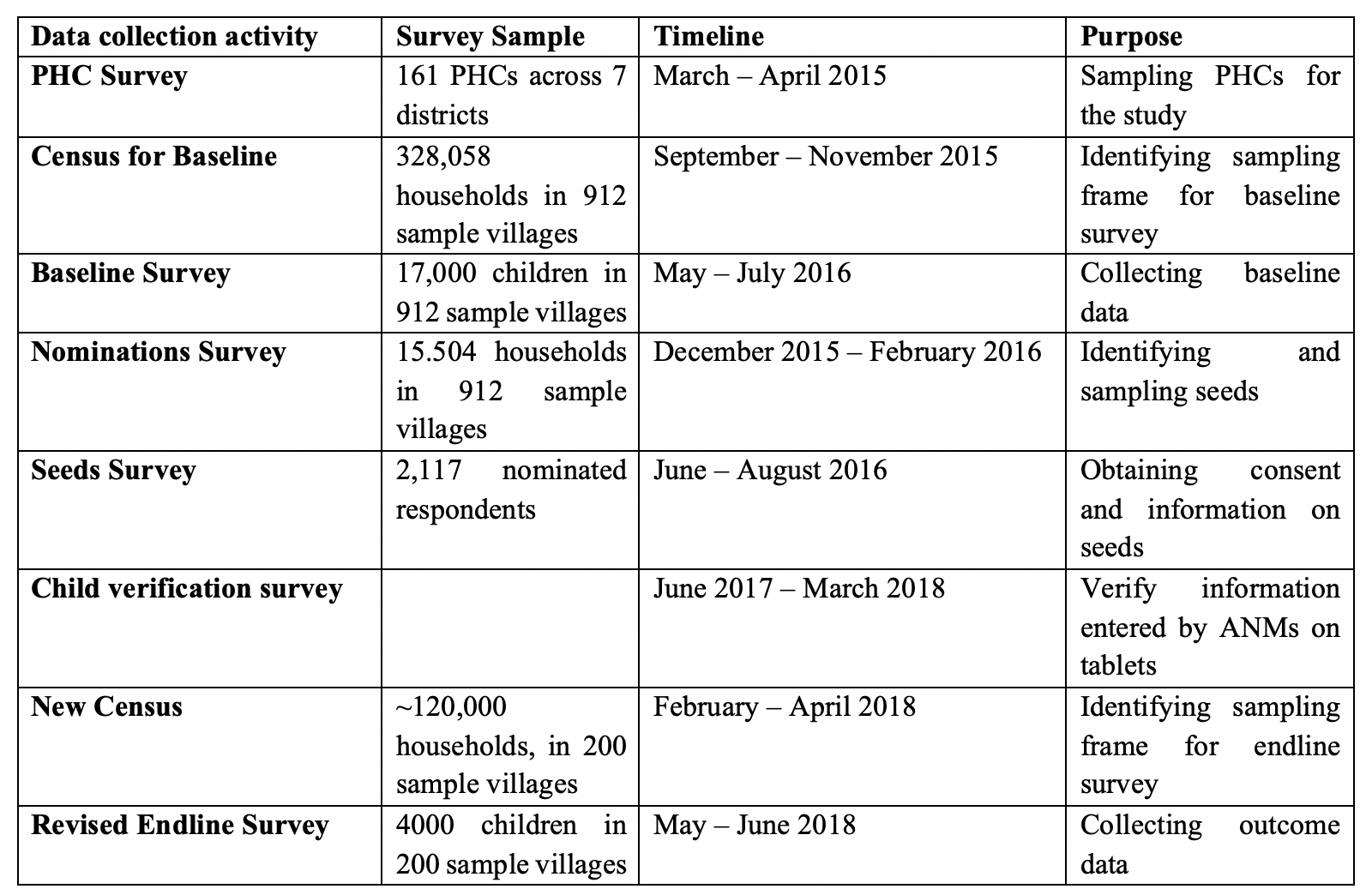}
	\caption{Overview of Survey Data Collection Activities. \label{fig:survey_data_collection}}
\end{figure}

\setcounter{table}{0}
\renewcommand{\thetable}{H.\arabic{table}}
\setcounter{figure}{0}
\renewcommand{\thefigure}{H.\arabic{figure}}
\clearpage
\section{Substitution Patterns}\label{sec:substitution}

\begin{table}[!htbp] \centering 
  \caption{Incentive Treatment Effects for Non-Tablet Children from Endline Data} 
  \label{table:subs-incentives} 
\scriptsize 
\begin{tabular}{@{\extracolsep{0.25pt}}lccccccc} 
\\[-1.8ex]\hline 
\hline \\[-1.8ex] 
 & \multicolumn{7}{c}{\textit{Dependent variable:}} \\ 
\cline{2-8} 
\\[-1.8ex] & At Least 2 & At Least 3 & At Least 4 & At Least 5 & At Least 6 & At Least 7 & Measles 1 \\ 
\\[-1.8ex] & (1) & (2) & (3) & (4) & (5) & (6) & (7)\\ 
\hline \\[-1.8ex] 
 High Slope & $-$0.158$$ & $-$0.052 & $-$0.076 & $-$0.196$$ & $-$0.187$$ & $-$0.027 & $-$0.135 \\ 
  & (0.062) & (0.072) & (0.093) & (0.106) & (0.101) & (0.105) & (0.108) \\ 
  & & & & & & & \\ 
 High Flat & $-$0.021 & $-$0.024 & $-$0.091 & $-$0.078 & $-$0.053 & 0.102 & 0.185 \\ 
  & (0.088) & (0.063) & (0.078) & (0.155) & (0.152) & (0.167) & (0.143) \\ 
  & & & & & & & \\ 
 Low Slope & 0.090 & 0.175$$ & 0.104 & $-$0.026 & $-$0.152$$ & $-$0.079 & 0.051 \\ 
  & (0.064) & (0.060) & (0.085) & (0.099) & (0.080) & (0.077) & (0.100) \\ 
  & & & & & & & \\ 
 Low Flat & 0.004 & 0.069 & $-$0.010 & $-$0.110 & $-$0.005 & $-$0.079 & $-$0.102 \\ 
  & (0.076) & (0.096) & (0.122) & (0.176) & (0.160) & (0.173) & (0.148) \\ 
  & & & & & & & \\ 
\hline \\[-1.8ex] 
Control Mean & 0.69 & 0.54 & 0.4 & 0.31 & 0.17 & 0.11 & 0.39 \\ 
Total Obs. & 1179 & 1165 & 1165 & 1042 & 1042 & 706 & 613 \\ 
Zeros Replaced & 0 & 0 & 0 & 0 & 0 & 0 & 0 \\ 
\hline 
\hline \\[-1.8ex] 
\textit{Note:}  & \multicolumn{7}{r}{\parbox[t]{\textwidth}{Specification includes District Fixed Effects, and a set of controls for seeds and reminders. Control mean shown in levels, and standard errors are clustered at the SC Level }} \\ 
\end{tabular} 
\end{table}

\begin{table}[!htbp] \centering 
  \caption{Seeds Treatment Effects for Non-Tablet Children from Endline Data} 
  \label{table:subs-seeds} 
\scriptsize 
\begin{tabular}{@{\extracolsep{0.25pt}}lccccccc} 
\\[-1.8ex]\hline 
\hline \\[-1.8ex] 
 & \multicolumn{7}{c}{\textit{Dependent variable:}} \\ 
\cline{2-8} 
\\[-1.8ex] & At Least 2 & At Least 3 & At Least 4 & At Least 5 & At Least 6 & At Least 7 & Measles 1 \\ 
\\[-1.8ex] & (1) & (2) & (3) & (4) & (5) & (6) & (7)\\ 
\hline \\[-1.8ex] 
 Random & $-$0.101$$ & $-$0.045 & 0.0003 & $-$0.143 & 0.015 & 0.058 & $-$0.017 \\ 
  & (0.059) & (0.066) & (0.088) & (0.122) & (0.102) & (0.102) & (0.101) \\ 
  & & & & & & & \\ 
 Information Hub & $-$0.040 & $-$0.121 & $-$0.025 & $-$0.112 & 0.018 & $-$0.105 & $-$0.092 \\ 
  & (0.082) & (0.080) & (0.113) & (0.135) & (0.123) & (0.073) & (0.119) \\ 
  & & & & & & & \\ 
 Trusted & 0.034 & $-$0.033 & 0.111 & 0.027 & 0.147 & $-$0.011 & 0.106 \\ 
  & (0.070) & (0.073) & (0.100) & (0.128) & (0.118) & (0.107) & (0.111) \\ 
  & & & & & & & \\ 
 Trusted Information Hub & $-$0.103 & $-$0.082 & $-$0.031 & $-$0.209$$ & $-$0.099 & $-$0.057 & $-$0.344$$ \\ 
  & (0.075) & (0.079) & (0.100) & (0.113) & (0.086) & (0.082) & (0.099) \\ 
  & & & & & & & \\ 
\hline \\[-1.8ex] 
Control Mean & 0.78 & 0.66 & 0.5 & 0.58 & 0.3 & 0.22 & 0.62 \\ 
Total Obs. & 469 & 461 & 461 & 389 & 389 & 251 & 231 \\ 
Zeros Replaced & 0 & 0 & 0 & 0 & 0 & 0 & 0 \\ 
\hline 
\hline \\[-1.8ex] 
\textit{Note:}  & \multicolumn{7}{r}{\parbox[t]{\textwidth}{Specification includes District Fixed Effects, and a set of controls for incentives and reminders. Control mean shown in levels, and standard errors are clustered at the SC Level }} \\ 
\end{tabular} 
\end{table}

\begin{table}[!htbp] \centering 
  \caption{Reminders Treatment Effects for Non-Tablet Children from Endline Data} 
  \label{table:subs-reminders} 
\scriptsize 
\begin{tabular}{@{\extracolsep{0.25pt}}lccccccc} 
\\[-1.8ex]\hline 
\hline \\[-1.8ex] 
 & \multicolumn{7}{c}{\textit{Dependent variable:}} \\ 
\cline{2-8} 
\\[-1.8ex] & At Least 2 & At Least 3 & At Least 4 & At Least 5 & At Least 6 & At Least 7 & Measles 1 \\ 
\\[-1.8ex] & (1) & (2) & (3) & (4) & (5) & (6) & (7)\\ 
\hline \\[-1.8ex] 
 33\% & 0.079 & 0.093 & 0.070 & 0.033 & 0.019 & $-$0.138$$ & 0.011 \\ 
  & (0.058) & (0.066) & (0.085) & (0.104) & (0.094) & (0.080) & (0.086) \\ 
  & & & & & & & \\ 
 66\% & 0.031 & 0.096$$ & 0.042 & $-$0.074 & $-$0.073 & $-$0.044 & $-$0.097 \\ 
  & (0.053) & (0.055) & (0.069) & (0.091) & (0.079) & (0.067) & (0.084) \\ 
  & & & & & & & \\ 
\hline \\[-1.8ex] 
Control Mean & 0.64 & 0.48 & 0.34 & 0.28 & 0.19 & 0.13 & 0.46 \\ 
Total Obs. & 1179 & 1165 & 1165 & 1042 & 1042 & 706 & 613 \\ 
Zeros Replaced & 0 & 0 & 0 & 0 & 0 & 0 & 0 \\ 
\hline 
\hline \\[-1.8ex] 
\textit{Note:}  & \multicolumn{7}{r}{\parbox[t]{\textwidth}{Specification includes District Fixed Effects, and a set of controls for seeds and incentives. Control mean shown in levels, and standard errors are clustered at the SC Level }} \\ 
\end{tabular} 
\end{table}

\clearpage

\section{Data Validation}\label{sec:data_validation}

A household survey was conducted to monitor program implementation at the child-level---whether the record entered in the tablet corresponded to an actual child, and whether the data entered for this child was correct. This novel child verification exercise involved J-PAL field staff going to villages to find the households of a set of randomly selected children which, according to the tablet data, visited a session camp in the previous four weeks. Child verification was continuous throughout the program  implementation, and the findings indicate high accuracy of the tablet data. We sampled children every week to ensure no additional vaccine was administered in the lag between them visiting the session camp and the monitoring team visiting them. Data entered in the tablets was generally of high quality. There were almost no incidences of fake child records, and the child's name and date of birth were accurate over 80\% of the time. For 71\% of children the vaccines overlapped completely (for all main vaccines under age of 12 months). Vaccine-wise, on average, 88\% of the cases had matching immunization records. Errors seem genuine, rather than coming from fraud: they show no systematic pattern of inclusion or exclusion and are no different in any of the treatment groups.

\clearpage
\setcounter{table}{0}
\renewcommand{\thetable}{J.\arabic{table}}
\setcounter{figure}{0}
\renewcommand{\thefigure}{J.\arabic{figure}}

\section{Baseline Statistics} \label{sec: baseline}

\begin{table}[ht]\caption{Selected Baseline Statistics of Haryana Immunization}\label{table:Haryana_desc}
\centering
\small{
\begin{tabular}{lc}
  \hline\hline
 & Population-Weighted Average \\ 
 \hline \\  [-4mm]

\textbf{Baseline Covariates--Demographic Variables}  \\  [0.3mm]

\emph{(Village Level Averages)}  \\  [0.3mm]

Fraction participating in Employment Generating Schemes & 0.045 \\ 
  Fraction Below Poverty Line (BPL) & 0.187 \\ 
  Household Financial Status (on 1-10 scale) & 3.243 \\ 
  Fraction Scheduled Caste-Scheduled Tribes (SC/ST)  & 0.232 \\ 
  Fraction Other Backward Caste (OBC)  & 0.21 \\ 
  Fraction Hindu & 0.872 \\ 
  Fraction Muslim & 0.101 \\ 
  Fraction Christian & 0.001 \\ 
  Fraction Buddhist & 0 \\ 
  Fraction Literate & 0.771 \\ 
  Fraction Unmarried & 0.05 \\ 
  Fraction of Adults Married (living with spouse) & 0.504 \\ 
  Fraction of Adults Married (not living with spouse) & 0.002 \\ 
  Fraction of Adults Divorced or Seperated & 0.001 \\ 
  Fraction Widow or Widower & 0.039 \\ 
  Fraction who Received Nursery level Education or Less & 0.17 \\ 
  Fraction who Received Class 4 level Education & 0.086 \\ 
  Fraction who Received Class 9 level Education & 0.158 \\ 
  Fraction who Received Class 12 level Education & 0.223 \\ 
  Fraction who Received Graduate or Other Diploma level Education  & 0.081 \\ 

\textbf{Baseline Covariates--Immunization History of Older Cohort} \\  [0.3mm]
\emph{(Village Level Averages)}  \\  [0.3mm]

 Number of Vaccines Administered to Pregnant Mother & 2.271 \\ 
 Number of Vaccines Administered to Child Since Birth & 4.23 \\ 
 Fraction of Children who Received Polio Drops & 0.998 \\ 
 Number of Polio Drops Administered to Child & 2.989 \\ 
 Fraction of Children who Received an Immunized Card & 0.877 \\ 

\textbf{Number of Observations} \\  [0.3mm]
 Villages & 903 \\ 
   \hline\hline

\end{tabular}}
\end{table}
 \label{table:baseline}

\clearpage

\section{Information Hub Questions}\label{sec:hub_questions}

\begin{enumerate}
	\item Random seeds: In this treatment arm, we did not survey villages. We picked six ambassadors randomly from the census. 
	
	\item 	Information hub 
	seed: Respondents were asked to identify who is good at relaying information. 
	
	We used the following script to ask the question to the 17 households: 
	\begin{quote}
	``Who are the people in this village, who when they share information, many people in the village get to know about it. For example, if they share information about a music festival, street play, fair in this village, or movie shooting many people would learn about it. This is because they have a wide network of friends, contacts in the village and they can use that to actively spread information to many villagers. Could you name four such individuals, male or female, that live in the village (within OR outside your neighbourhood in the village) who when they say something many people get to know?”
	\end{quote}
	
	\item ``Trust'' seed: Respondents were asked to identify those who are generally trusted to provide good advice about health or agricultural questions (see appendix for script)
	
	We used the following script to elicit who they were: 
	
	\begin{quote}
	``Who are the people in this village that you and many villagers trust, both within and outside this neighbourhood? When I say trust I mean that when they give advice on something, many people believe that it is correct and tend to follow it. This could be advice on anything like choosing the right fertilizer for your crops, or keeping your child healthy. Could you name four such individuals, male or female, who live in the village (within OR outside your neighbourhood in the village) and are trusted?" 
	\end{quote}
	
	\item ``Trusted information hub'' seed: 
	Respondents were asked to identify who is both trusted and good at transmitting information
	
	\begin{quote}
	``Who are the people in this village, both within and outside this neighbourhood, who when they share information, many people in the village get to know about it. For example, if they share information about a music festival, street play, fair in this village, or movie shooting many people would learn about it. This is because they have a wide network of friends/contacts in the village and they can use that to actively spread information to many villagers. Among these people, who are the people that you and many villagers trust? When I say trust I mean that when they give advice on something, many people believe that it is correct and tend to follow it. This could be advice on anything like choosing the right fertilizer for your crops, or keeping your child healthy. Could you name four such individuals, male or female, that live in the village (within OR outside your neighbourhood in the village) who when they say something many people get to know and are trusted by you and other villagers?"
	\end{quote}

\end{enumerate}

\end{document}